 \newcommand{\bs}{\bigskip}
 \newcommand{\ms}{\medskip}
 \newcommand{\n}{\noindent}
 \newcommand{\s}{\smallskip}
 \newcommand{\hs}[1]{\hspace*{ #1 mm}}
 \newcommand{\vs}[1]{\vspace*{ #1 mm}}
 \newcommand{\real}{\mathbb{R}}
 \newcommand{\nat}{\mathbb{N}}
 \newcommand{\integer}{\mathbb{Z}}
 \newcommand{\rational}{\mathbb{Q}}
 \newcommand{\complex}{\mathbb{C}}
 \newcommand{\appcomplex}{\tilde{\mathbb{C}}}
 \newcommand{\ptcomplex}{\tilde{\mathbb{C}}}
 \newcommand{\bm}[1]{\mbox{\boldmath $ #1 $}}
 \newcommand{\FF}{{\cal F}}
 \newcommand{\HH}{{\cal H}}
 \newcommand{\GG}{{\cal G}}
 \newcommand{\p}{\mathrm{P}}
 \newcommand{\np}{\mathrm{NP}}
 \newcommand{\bqp}{\mathrm{BQP}}
 \newcommand{\squareqp}{\Box^{\mathrm{QP}}_{1}}
 \newcommand{\hatsquareqp}{\widehat{\Box^{\mathrm{QP}}_{1}}}
 \newcommand{\fbqp}{\mathrm{FBQP}}
 \newcommand{\comb}[2]{\left(\:\begin{subarray}{c} #1 \\%
      #2 \end{subarray}\right)}
 \def\bbox{\vrule height6pt width6pt depth1pt}
\theoremstyle{plain}
 \newtheorem{theorem}{Theorem}[section]
 \newtheorem{lemma}[theorem]{Lemma}
 \newtheorem{proposition}[theorem]{Proposition}
\newtheorem{definition}[theorem]{Definition}}
 \newenvironment{proof}{\par \noindent
            {\bf Proof. \hs{2}}}{\hfill$\Box$ \vspace*{3mm}}
 \newenvironment{proofof}[1]{\vspace*{5mm} \par \noindent
         {\bf Proof of #1.\hs{2}}}{\hfill$\Box$ \vspace*{3mm}}
 \newcommand{\floors}[1]{\lfloor #1 \rfloor}
 \newcommand{\pair}[1]{\langle #1 \rangle}
 \newcommand{\qubit}[1]{| #1 \rangle}
 \newcommand{\bra}[1]{\langle #1 |}
 \newcommand{\ket}[1]{| #1 \rangle}
 \newcommand{\measure}[2]{\langle #1 | #2 \rangle}
 \newcommand{\density}[2]{| #1 \rangle \! \langle #2 |}
 \newcommand{\qustrings}{\Phi}
 \newcommand{\tracenorm}[1]{\| #1 \|_{\mathrm{tr}}}
 \newcommand{\ignore}[1]{}
\begin{document}
\setcounter{page}{1}


\begin{center}
{\Large {\bf A Schematic Definition of Quantum Polynomial
Time Computability}}\footnote{This work was done while the author was at the University of Ottawa between 1999 and 2003, and it was financially supported by the Natural Sciences and Engineering Research Council of Canada.}\footnote{An extended abstract appeared under the title ``A recursive definition of quantum polynomial time computability (extended abstract)'' in the Proceedings of the 9th Workshop on Non-Classical Models of Automata and Applications (NCMA 2017), Prague, Czech Republic, August 17--18, 2017, \"{O}sterreichische Computer Gesellschaft 2017, the Austrian Computer Society, pp.243--258, 2017. The current paper intends to correct erroneous descriptions in the extended abstract and provide more detailed explanations to all omitted proofs due to the page limit.} \bs\\

{\sc Tomoyuki Yamakami}\footnote{Current Affiliation: Faculty of Engineering, University of Fukui,  3-9-1 Bunkyo, Fukui, 910-8507 Japan} \ms\\
\end{center}
\bs

\begin{abstract}
In the past four decades, the notion of quantum polynomial-time computability has been mathematically modeled by quantum Turing machines as well as quantum circuits. This paper seeks the third model, which is a quantum analogue of the schematic (inductive or constructive) definition of (primitive) recursive functions. For quantum functions mapping finite-dimensional Hilbert spaces to themselves, we present such a schematic definition, composed of a small set of initial quantum
functions and a few construction rules that dictate how to build a new quantum  function
from the existing ones.
We prove that our schematic definition precisely characterizes all
functions that can be computable with high success probabilities on well-formed quantum Turing
machines in polynomial time, or equivalently uniform families of polynomial-size quantum circuits. Our new, schematic definition is quite simple and intuitive and, more importantly, it avoids the cumbersome  introduction of the
well-formedness condition imposed on a quantum Turing machine model as
well as of the uniformity condition necessary for a quantum circuit
model. Our new approach can further open a door to the descriptional complexity of quantum functions, to the theory of higher-type quantum functionals, to the development of new first-order theories for quantum computing, and to the designing of programming languages for real-life quantum computers.

\s
\n{\sf Key words:} quantum computing, quantum function, quantum Turing machine, quantum circuit, schematic definition, descriptional complexity, polynomial-time computability, normal form theorem

\end{abstract}

\sloppy
\section{Background, Motivation, and the Main Results}\label{sec:introduction}

In early 1980s emerged a groundbreaking idea of exploiting quantum physics to build  mechanical computing devices, dubbed as \emph{quantum computers}, which have completely altered the way we used to envision ``computers.''
Subsequent discoveries of more efficient quantum computations for factoring positive integers \cite{Sho97} and searching unstructured databases \cite{Gro96,Gro97} than classical computations prompted us to look for more mathematical and practical problems that can be solvable effectively on the quantum computers.
Efficiency in quantum computing has since then rapidly become an important research subject of computer science as well as physics.

As a
mathematical model to realize quantum computation, Deutsch
\cite{Deu85} introduced a notion of \emph{quantum Turing machine} (or QTM, for short), which was later discussed by Yao \cite{Yao93} and further refined by Bernstein and Vazirani \cite{BV97}. This mechanical model
largely expands the classical model of (probabilistic) Turing machine by allowing a physical phenomenon, called
\emph{quantum interference}, to take place on its computation.
A different Hamiltonian formalism of Turing machines was also suggested by Benioff \cite{Ben80}. A QTM has an ability of computing a \emph{quantum function} mapping a finite-dimensional Hilbert
space to itself by evolving unitarily a superposition of (classical)
configurations of the machine, starting with a given input string and an initial inner state.
A more restrictive use of  the term of ``quantum function'' is found in, e.g., \cite{Yam03}, in which quantum functions take classical input strings and produce either classical output strings of QTMs or acceptance probabilities of QTMs. Throughout this paper, nevertheless, quantum functions refer only to functions acting on Hilbert spaces of arbitrary dimensions.

To ensure the unitary nature of quantum
computation, a QTM requires its mechanism to meet the so-called \emph{well-formedness condition} on a single-tape model of QTMs \cite{BV97} and a multi-tape model \cite{Yam99,Yam03} as well as \cite{ON00}. Refer to Section \ref{sec:def-QTMs} for their precise definitions.

Bernstein and Vazirani further formulated a new complexity class, denoted by $\bqp$, as the collection of all languages recognized by well-formed QTMs running in polynomial time with
error probability bounded from above by $1/3$.
Furthermore, QTMs equipped with output tapes can compute  string-valued functions in place of languages, and those functions form a function class, called  $\fbqp$.

From a different viewpoint, Yao \cite{Yao93} expanded Deutsch's notion of  \emph{quantum
network} \cite{Deu89} and formalized a notion of \emph{quantum circuit}, which is a quantum analogue of classical Boolean circuit.
Different from the classical Boolean circuit model, a quantum circuit
is composed of \emph{quantum gates}, each of which represents a unitary transformation
acting on a Hilbert space of a small, fixed dimension.
To act as a ``programmable'' unitary operator, a family of quantum circuits requires the so-called \emph{uniformity condition}, which ensures that a blueprint of each quantum circuit is easily rendered.
Yao further demonstrated that a uniform family of quantum circuits
is powerful enough to simulate a well-formed quantum Turing machine. As Nishimura and Ozawa \cite{NO02} pointed out, the
uniformity condition of a quantum circuit family is necessary to precisely
capture quantum polynomial-time computation. With this uniformity condition, $\bqp$ and $\fbqp$ are characterized exactly by uniform families of quantum circuits made up of polynomially many quantum gates.

This current paper boldly takes the third approach toward the characterization of
quantum polynomial-time computability. Unlike the aforementioned mechanical device models, our approach is to extend the schematic (inductive or constructive) definition of \emph{(primitive) recursive functions} on natural numbers. Such a schematic definition was thought in the 19th century by Peano \cite{Pea91}, opposed to the definition given by Turing's  machine model \cite{Tur36}. This classical scheme comprises a small set of initial functions and a small set of rules,
which dictate how to construct a new function from the existing ones.
For instance, every primitive recursive function is built from the constant, successor, and projection functions by applying composition and primitive recursion finitely many times. In particular, the primitive recursion introduces a new function whose values are defined \emph{by induction}. Recursive functions (in the form of \emph{$\mu$-recursive functions} \cite{Kle36,Kle43}) further  require an additional scheme, known as the minimization (or the least number) operator. These functions coincide with the Herbrand-G\"{o}del formalism of \emph{general recursive functions} (see \cite{Dav65}).
For a historical account of these notions, refer to, e.g., \cite{Soa96}.
Similar schematic approaches to capture classical polynomial-time
computability have already been sought in the literature \cite{Cob64,Con73,CK89,Meh76,Yam95}. Those approaches have led to quite different research subjects from what the Turing machine model provides.

Our purpose in this paper is to give a schematic definition of quantum functions to capture the notion of quantum polynomial-time computability and, more importantly, to make such a definition simpler and more intuitive for a practical merit of our own. Our schematic definition (Definition \ref{def:initial}) includes a set of initial quantum functions,  $I$ (identity),
$NOT$ (negation of a qubit), $PHASE_{\theta}$
(phase shift by $e^{i \theta}$),
$ROT_{\theta}$ (rotation around $xy$-axis by angle $\theta$), $SWAP$ (swap between two qubits), and $MEAS$ (partial projective measurement), as well as construction rules, composed of  composition ($Compo[\cdot,\cdot]$), branching ($Branch[\cdot,\cdot]$),
and multi-qubit quantum recursion ($kQRec[\cdot,\cdot|\cdot]$). Our choice of these initial quantum functions and construction rules stems mostly from a universal set of quantum gates in use in the past literature.
Our quantum recursion, on the contrary, is quite different in nature from the primitive recursion used to build primitive recursive functions. Instead of using the successor function to count down the number of inductive iterations in the primitive recursion, the quantum recursion uses a divide-and-conquer strategy of reducing the number of accessible qubits needed for performing a specified quantum function.
Within our new framework, we can implement typical unitary operators, such as the Walsh-Hadamard transform (WH), the controlled-NOT (CNOT), and the global phase shift (GPS).

An immediate merit of our schematic definition is that we can avoid the cumbersome introduction of the well-formedness condition imposed on the QTM model and the uniformity condition on the quantum circuit
model. Another advantage of our schemata is that each scheme has its own inverse; namely, for any quantum function $g$ defined by one of the schemata, its inverse $g^{-1}$ is also defined by the same kind of scheme. For instance, the inverses of the quantum functions $ROT_{\theta}$ and $kQRec_t[g,h,p|\{f_s\}_{s\in\{0,1\}^k}]$ introduced in Definition \ref{def:initial} are exactly $ROT_{-\theta}$ and $kQRec_t[g^{-1},p^{-1},h^{-1}|\{f_s^{-1}\}_{s\in\{0,1\}^k}]$, respectively (Proposition \ref{inverse}).

For a further explanation of our main contributions, it is time to introduce a succinct notation of $\squareqp$ (where $\Box$ is pronounced ``square'') to denote the set of all quantum functions built from the initial quantum functions and by a finite series of sequential applications of the construction rules.
Since the partial measurement ($MEAS$) is not a unitary operator, we denote the class obtained from $\squareqp$ without use of $MEAS$ by $\hatsquareqp$.
Briefly, let us discuss clear differences between our schematic definition and the aforementioned two formalisms of polynomial-time quantumly computable functions in terms of QTMs and quantum circuits. Two major differences are listed below.

1) While a single quantum circuit takes a fixed number of input qubits, our quantum function takes an ``arbitrary'' number of qubits as an input.
This situation is similar to QTMs because a QTM has an infinite tape and  uses an arbitrary number of tape cells during its computation as extra storage space.
On the contrary to the QTMs, a $\hatsquareqp$-function is constructed using the same number of qubits as its original input in such a way that a quantum circuit has the same number of input qubits and output qubits.

2) The two machine models exhort an algorithmic description to dictate the behavior of each machine; more specifically, a QTM uses a transition function, which algorithmically describes how each step of the machine acts on certain qubits, and a  family of quantum circuits uses its uniformity condition to render the design of quantum gates in each quantum circuit. Unlike these two models, no $\squareqp$-function has any mechanism to store information on the description of the function itself but the construction process itself specifies the behavior of the function.

As a consequence, the above mentioned differences help the $\squareqp$-functions take a distinctive  position among all possible computational models that characterize quantum polynomial-time computability, and therefore we expect them to play an important role in analyzing the features of quantum polynomial-time computation from a quite different perspective.

In Section \ref{sec:definition}, we will formally present our schematic definition of
$\squareqp$- functions (as well as $\hatsquareqp$-functions) and show in Section \ref{sec:main-theorem} that $\squareqp$ (also $\hatsquareqp$) can  characterize all  functions in $\fbqp$. More precisely,  we assert in the main theorem  (Theorem \ref{theorem:character}) that any function from $\{0,1\}^*$ to $\{0,1\}^*$ in $\fbqp$ can be characterized by a certain polynomial $p$ and a certain quantum function $g\in \squareqp$ in such a way that, by using an appropriate coding scheme, in the final quantum state of $g$ on instances $x$ and the runtime bound $p(|x|)$, we observe an output value $f(x)$ with high probability.
This theorem will be split into two lemmas, Lemmas \ref{lemma:first-part} and \ref{lemma:second-part}. The former lemma will be proven in Section \ref{sec:main-theorem}; however, the proof of the latter lemma is so lengthy that it will be postponed until Section \ref{sec:second-part}. In this proof, we will construct a $\squareqp$-function that can simulate the behavior of a given QTM.

Notice that, since $\bqp$ is a special case of $\fbqp$, $\bqp$ is also characterized by
our model. In our proof of the characterization theorem (Theorem \ref{theorem:character}), we will utilize a main result of Bernstein and Vazirani
\cite{BV97} and that of Yao \cite{Yao93} extensively.
In Section \ref{sec:normal-form}, we will apply our
characterization, in the help of a universal QTM \cite{BV97,NO02}, to obtain a quantum version of Kleene's \emph{normal form theorem} \cite{Kle36,Kle43}, in which there is a universal pair of primitive recursive predicate and function that can describe the behavior of every recursive function.

Unlike classical computation on natural numbers (equivalently, \emph{strings}  over finite alphabets by appropriate coding schemata), quantum computation is a series of    certain manipulations of a single vector in a finite-dimensional Hilbert space and we need only high precision to approximate each function in $\fbqp$ by such a vector. This fact allows us to choose a different set of schemata (initial quantum functions and construction rules) to capture the essence of quantum computation.
In Section \ref{sec:open-question}, we will discuss this issue using an example of a general form of the \emph{quantum Fourier transform} (QFT). This transform may not be ``exactly'' computed in our current framework of $\squareqp$ but we can easily expand $\squareqp$ to compute the generalized QFT exactly if we include an additional initial quantum function, such as $CROT$ (controlled rotation).

Concerning future research on the current subject, we will discuss in Section \ref{sec:applications} four new directions of the subject.
Our schematic definition provides not only a different way of describing languages and functions computable quantumly in polynomial time but also a simple way of measuring the ``descriptional'' complexity of a given language and a  function restricted to instances of specified length.
This new complexity measure will be useful to prove basic properties of $\hatsquareqp$-functions in Section \ref{sec:recursive-def}. Its future application  will be briefly discussed in Section \ref{sec:descriptional-complexity}.

Kleene \cite{Kle59,Kle63} defined recursive functionals of higher types by extending the aforementioned recursive functions on natural numbers.
A more general study of \emph{higher-type functionals} has been conducted in computational complexity theory for decades
\cite{Con73,CK89,Meh76,Tow90,Yam95}.
In a similar spirit, our schematic definition
enables us to study \emph{higher-type quantum functionals}.
In Section \ref{sec:type-2-functional}, using \emph{oracle functions} (\emph{function oracles} or \emph{oracles}), we will define type-2 quantum functionals, which may guide us to a rich field of research in the future.


A schematic definition of how to construct a target $\squareqp$-function can be viewed as a ``program'' that describes a series of instructions on which schemata to use. Hence, our schematic formulation opens a door to a new, practical application to the designing of succinct  programming languages to control the operations of real-life quantum computers.
In Section \ref{sec:programming-language}, we will briefly argue on an application of the schematic definition to the future development of ``quantum programming languages.''
As a further application of our schematic definition, we can look into a new aspect of first-order theories and their subtheories. Earlier, a quantum analogue of $\np$ (nondeterministic polynomial time class) and beyond were sought in \cite{Yam02} with the use of bounded quantifiers over quantum states in finite-dimensional Hilbert spaces.
In a similar vein, we expect that $\squareqp$ will serve as a foundation to the introduction of first-order theories and their subtheories over quantum states in Hilbert spaces.

\section{Fundamental Notions and Notation}\label{sec:notion-notation}

We begin with explaining basic notions and notation necessary to read through the subsequent sections. Let us assume the reader's familiarity with classical Turing machines (see, e.g., \cite{HU79}). For the foundation of quantum information and computation, in contrast, the reader refers to basic textbooks, e.g., \cite{KSV02,NC00}.

\subsection{Numbers, Languages, and Qustrings}

The notation $\integer$ indicates the set of all \emph{integers} and $\nat$ expresses  the set of all {\em natural numbers} (that is, non-negative
integers). For convenience, we set  $\nat^{+}=\nat-\{0\}$. Moreover, $\rational$ denotes the set of all {\em rational numbers} and $\real$ indicates the set of all {\em real numbers}. For two numbers $m,n\in\integer$ with $m\leq n$, the notation $[m,n]_{\integer}$ denotes an \emph{integer interval} $\{m,m+1,m+2,\ldots,n\}$, compared to a real interval $[\alpha,\beta]$ for $\alpha,\beta\in\real$ with $\alpha\leq \beta$. In particular, $[n]$ is shorthand for $[1,n]_{\integer}$ for any $n\in\nat^{+}$.
By $\complex$, we express the set of all {\em complex numbers}. Given $\alpha\in\complex$, $\alpha^*$ expresses the \emph{complex conjugate} of $\alpha$.
\emph{Polynomials} are assumed to have natural numbers as their coefficients and they thus produce nonnegative values from nonnegative inputs.
A real number $\alpha$ is called {\em
polynomial-time approximable}\footnote{Ko and Friedman \cite{KF82}
first introduced this notion under the name of ``polynomial-time
computable.'' To avoid reader's confusion in this paper,
we prefer to use the term
``polynomial-time approximation.''} if there exists a multi-tape
polynomial-time deterministic Turing machine $M$ (equipped with a write-only output tape) that, on each input of the form $1^n$ for a natural number $n$, produces a finite binary fraction, $M(1^n)$, on its designated output tape with $|M(1^n)-\alpha|\leq 2^{-n}$. Let $\appcomplex$ be the set of complex numbers whose real and
imaginary parts are both polynomial-time approximable. For a bit $a\in\{0,1\}$, $\overline{a}$ indicates $1-a$. Given a matrix $A$, $A^{T}$ denotes its \emph{transpose} and $A^{\dagger}$ denotes the transposed conjugate of $A$.

An \emph{alphabet} is a finite nonempty set of ``symbols'' or ``letters.''
Given such an alphabet $\Sigma$, a \emph{string} over $\Sigma$ is a finite series of symbols taken from $\Sigma$.
The \emph{concatenation} of two strings $u$ and $w$ is expressed as $u\cdot w$ or more simply $uw$. The \emph{length} of a string $x$, denoted by $|x|$,  is the number of all occurrences of symbols in $x$. In particular, the \emph{empty string} has length $0$ and is denoted $\lambda$. We write $\Sigma^n$ for the subset of $\Sigma^*$ consisting only of all strings of length $n$ and we set $\Sigma^*=\bigcup_{n\in\nat} \Sigma^n$ (the set of all strings over $\Sigma$).
A \emph{language} over $\Sigma$ is a subset of $\Sigma^*$.
Given a language $S$, its \emph{characteristic function} is also expressed by $S$; that is, $S(x)=1$ for all $x\in S$ and $S(x)=0$ for all $x\notin S$. A function on $\Sigma^*$ (i.e., from $\Sigma^*$ to $\Sigma^*$) is \emph{polynomially bounded} if there exists a polynomial $p$ satisfying  $|f(x)|\leq p(|x|)$ for all strings $x\in\Sigma^*$.

For each natural number $k\geq1$, $\HH_k$ expresses a \emph{Hilbert space} of
dimension $k$ and each element of $\HH_{k}$ is expressed as $\ket{\phi}$ using Dirac's ``ket'' notation. In this paper, we are interested only in the case where $k$ is a power of $2$ and we implicitly assume that $k$ is of the form $2^n$ for a certain $n\in\nat$. Any element of $\HH_2$ that has the unit norm is called a \emph{quantum bit} or a {\em qubit}.
By choosing a standard computational basis
$B_1=\{\qubit{0},\qubit{1}\}$, every qubit $\qubit{\phi}$ can be expressed as
$\alpha_0\qubit{0}+\alpha_1\qubit{1}$ for an appropriate choice of two values
$\alpha_0,\alpha_1\in\complex$ (called \emph{amplitudes}) satisfying $|\alpha_0|^2+|\alpha_1|^2=1$. We also express $\qubit{\phi}$ as a \emph{column vector} of the form $\comb{\alpha_0}{\alpha_1}$; in particular, $\qubit{0}=\comb{1}{0}$ and $\qubit{1}=\comb{0}{1}$.
In a more general case of $n\geq1$, we use $B_n=\{\qubit{s}\mid s\in\{0,1\}^n\}$ as a computational basis of
$\HH_{2^n}$ with $|B_n|=2^n$.
Given any number $n\in\nat^{+}$, a {\em qustring of length $n$} is a vector $\qubit{\phi}$ of
$\HH_{2^n}$ with unit norm; namely, it is of the form
$\sum_{s\in\{0,1\}^n}\alpha_s\qubit{s}$, where each amplitude $\alpha_s$ is in $\complex$
with $\sum_{s\in\{0,1\}^n} |\alpha_s|^2 =1$.
Notice that a qubit is a
qustring of length $1$. The exception is the \emph{null vector}, denoted simply by $\bm{0}$, which has norm $0$.
Although the null vector could be a qustring of ``arbitrary'' length $n$, we instead refer to it as the \emph{qustring of length} $0$ for convenience. We use the notation $\qustrings_n$  for each $n\in\nat$ to denote the collection of all qustrings of length $n$. Finally, we set  $\qustrings_{\infty}=\bigcup_{n\in\nat}\qustrings_{n}$ (the set of all qustrings).

When $s=s_1s_2\cdots s_n$ with $s_i\in\{0,1\}$ for any index $i\in[n]$, the qustring $\qubit{s}$ coincides with $\qubit{s_1}\otimes \qubit{s_2}\otimes\cdots \otimes \qubit{s_n}$, where $\otimes$ denotes the \emph{tensor product} and is expressed as, for example, $\qubit{00} = (1\;0\;0\;0)^T$, $\qubit{01}=(0\;1\;0\;0)^T$, and $\qubit{11}=(0\;0\;0\;1)^T$.
The \emph{transposed conjugate} of $\qubit{s}$ is denoted by $\bra{s}$ (with the ``bra'' notation). For instance, if $\qubit{\phi}=\alpha\qubit{01}+\beta\qubit{10}$, then $\bra{\phi} = \alpha^*\bra{01}+\beta^*\bra{10}$.
The \emph{inner product} of $\qubit{\phi}$ and $\qubit{\psi}$ is expressed as $\measure{\phi}{\psi}$ and the \emph{norm} of $\qubit{\phi}$ is thus $\sqrt{\measure{\phi}{\psi}}$.
When we \emph{observe} or \emph{measure}  $\qubit{\phi}$ in the computational basis $B_n$, we obtain each string $s\in\{0,1\}^n$ with probability $|\measure{s}{\phi}|^2$.

Let $\HH_{\infty}= \bigcup_{n\in\nat^+}\HH_{2^n}$. We extend the ``length'' notion to arbitrary quantum states in $\HH_{\infty}$. Given each non-null vector $\qubit{\phi}$ in $\HH_{\infty}$, the {\em length} of $\qubit{\phi}$, denoted by $\ell(\qubit{\phi})$, is the minimal number  $n\in\nat$ satisfying $\qubit{\phi}\in\HH_{2^n}$; in other words, $\ell(\qubit{\phi})$ is the logarithm of the dimension of the vector $\qubit{\phi}$. Conventionally, we further set $\ell(\bm{0})=\ell(\alpha)=0$ for the null vector $\bm{0}$ and any scalar    $\alpha\in\complex$.
By this convention, if $\ell(\qubit{\phi})=0$ for a quantum state  $\qubit{\phi}$, then $\qubit{\phi}$ must be the qustring of length $0$.
A qustring $\qubit{\phi}$ of length $n$ is called
{\em basic} if $\qubit{\phi}=\qubit{s}$ for a certain binary string $s$ and we often identify such a basic qustring $\qubit{s}$ with the corresponding classical binary string $s$ for convenience.
Since all basic qustrings in $\Phi_n$ form $B_n$, $\HH_{2^n}$ is spanned by all elements in $\Phi_n$.

The \emph{partial trace} over a system $B$ of a composite system $AB$, denoted by $tr_B$, is a quantum operator for which $tr_B(\density{\phi}{\phi})$ is a vector obtained by tracing out $B$ from the \emph{outer product} $\density{\phi}{\phi}$ of a quantum state $\qubit{\phi}$.
Regarding a quantum state $\qubit{\phi}$ of $n$ qubits, we use a handy notation $tr_{k}(\density{\phi}{\phi})$ to mean the quantum state obtained from $\qubit{\phi}$ by tracing out all qubits except for the first $k$ qubits.
For example, it follows for $\sigma_1,\sigma_2,\tau_1,\tau_2\in\{0,1\}$ that  $tr_1(\density{\sigma_1}{\sigma_2}\otimes \density{\tau_1}{\tau_2}) = \density{\sigma_1}{\sigma_2} \cdot tr(\density{\tau_1}{\tau_2})$, where $tr(B)$ denotes the trace of a matrix $B$.
The \emph{trace norm} $\tracenorm{A}$ of a square matrix $A$ is defined by $\tracenorm{A}=tr(\sqrt{AA^{\dagger}})$. The \emph{total variation distance} between two ensembles  $p=\{p_i\}_{i\in A}$ and $q=\{q_i\}_{i\in A}$ of real numbers over a finite index set $A$  is
$\frac{1}{2}\| p-q \|_{1} = \frac{1}{2} \sum_{i\in A}| |p_i| - |q_i| |$.

Throughout this paper, we take special conventions concerning three notations, $\qubit{ \cdot }$, $\otimes$, and $\|\cdot\|$, which respectively express quantum states, the tensor product, and the $\ell_2$-norm. These conventions slightly deviate from the standard ones used in, e.g., \cite{NC00}, but they make our mathematical descriptions in later sections simpler and more succinct.

\s
\n{\bf Notational Conventions:}
We freely abbreviate
$\qubit{\phi}\otimes\qubit{\psi}$ as $\qubit{\phi}\qubit{\psi}$ for any two vectors $\qubit{\phi}$ and $\qubit{\psi}$.
Given two binary strings $s$ and $t$, $\qubit{st}$ means
$\qubit{s}\otimes\qubit{t}$ or $\qubit{s}\qubit{t}$.
Let $k$ and $n$ be two integers with $0<k<n$. Any qustring $\qubit{\phi}$ of length $n$
is expressed in general as $\qubit{\phi} = \sum_{s:|s|=k}
\qubit{s}\qubit{\phi_s}$, where each $\qubit{\phi_s}$ is a qustring of length $n-k$.
This qustring $\qubit{\phi_s}$ can be viewed as a consequence of applying a partial projective measurement to the first $k$ qubits of $\qubit{\phi}$, and therefore it is possible to express $\qubit{\phi_s}$ succinctly as  $\measure{s}{\phi}$.
With this new, convenient notation, $\qubit{\phi}$ coincides with   $\sum_{s:|s|=k}\qubit{s}\otimes \measure{s}{\phi}$, which is simplified as  $\sum_{s:|s|=k}\qubit{s}\!\measure{s}{\phi}$.
Notice that, when $k=n$, $\measure{s}{\phi}$ is a scalar, say, $\alpha$ in $\complex$. Hence, $\ket{s}\otimes \measure{s}{\phi}$ is $\qubit{s}\otimes \alpha$ and it is treated as a column vector $\alpha\qubit{s}$; similarly, we identify $\alpha\otimes \qubit{s}$ with $\alpha\qubit{s}$.
In these cases, $\otimes$ is treated merely as the \emph{scalar  multiplication}. As a consequence, the equality  $\qubit{\phi} = \sum_{s:|s|=k}\qubit{s}\! \measure{s}{\phi}$ holds even when $k=n$.
Concerning the null vector $\bm{0}$, we also  take
the following special treatment: for any
vector $\qubit{\phi}\in\HH_{\infty}$, (i) $0\otimes \qubit{\phi} = \qubit{\phi}\otimes 0 = \bm{0}$, (ii)
$\qubit{\phi}\otimes \bm{0} = \bm{0}\otimes\qubit{\phi} = \bm{0}$, and (iii) when $\qubit{\psi}$ is the null vector, $\measure{\phi}{\psi}=\measure{\psi}{\phi}=0$. Associated with those conventions on the partial projective measurement $\measure{\phi}{\psi}$, we also extend the use of the norm notation $\|\cdot\|$ to scalars.
When $\ell(\qubit{\phi})=\ell(\qubit{\psi})$, $\|\measure{\phi}{\psi}\|$ expresses the absolute value $|\measure{\phi}{\psi}|$; more generally, for any number $\alpha\in\complex$,  $\|\alpha\|$ means $|\alpha|$. With these extra conventions, when $\qubit{\phi}$ has the form $\sum_{s:|s|=k} \qubit{s}\! \measure{s}{\phi}$, the equation $\|\qubit{\phi}\|^2 = \sum_{s:|s|=k}\|\measure{s}{\phi}\|^2$ always holds for any index $k\in[n]$.

\subsection{Quantum Turing Machines}\label{sec:def-QTMs}

We assume the reader's fundamental knowledge on the notion of
quantum Turing machine (or QTM)
defined in \cite{BV97}. As was done in \cite{Yam99}, we allow a QTM to equip
multiple tapes and to move its multiple tape heads non-concurrently either to the
right or to the left, or to make the tape heads stay still. Such a QTM was  also discussed elsewhere (e.g., \cite{ON00}) and is known to
polynomially equivalent to the model proposed in \cite{BV97}.

To compute functions from $\Sigma^*$ to $\Sigma^*$ over a given alphabet $\Sigma$, we generally introduce QTMs as machines equipped with output tapes on which output strings are written in a certain specified way by the time the machines halt. By identifying languages with their characteristic functions, such QTMs can be seen as language acceptors as well.

Formally, a \emph{$k$-tape quantum Turing machine}
(referred to as \emph{$k$-tape QTM}), for $k\in\nat^{+}$, is a sextuple
$(Q,\Sigma, \Gamma_1\times\cdots\times\Gamma_k, \delta,q_0,Q_f)$,
where $Q$ is a finite set of inner states including the initial
state $q_0$ and a set $Q_f$ of final states with $Q_f\subseteq Q$, each $\Gamma_i$ is an
alphabet used for tape $i$ with a distinguished blank symbol $\#$ satisfying    $\Sigma\subseteq\Gamma_1$, and
$\delta$ is a quantum transition function {}from
$Q\times \tilde{\Gamma}^{(k)} \times Q\times \tilde{\Gamma}^{(k)} \times\{L,N,R\}^k$ to
$\complex$, where  $\tilde{\Gamma}^{(k)}$ stands for
$\Gamma_1\times\cdots\times\Gamma_k$.
For convenience, we identify
$L$, $N$, and $R$ with $-1$, $0$, and $+1$, respectively, and we set  $D=\{0,\pm1\}$.
For more information, refer to \cite{Yam99}.

All tape cells of each tape are indexed sequentially by integers. The cell indexed $0$ on each tape is called the \emph{start cell}.
At the beginning of the computation, $M$ is in inner state $q_0$, all the tapes except for the input tape are blank, and all tape heads are scanning the start  cells. A given input string $x_1x_2\cdots x_n$ is initially written on the input  tape in such a way that, for each index $i\in[n]$,  $x_i$ is in cell $i$ (not cell $i-1$).
When $M$ enters a final state, an \emph{output} of $M$ is the content of the string written on an output tape (if $M$ has only a single tape, then an output tape is the same as the tape used to hold inputs) from the start cell, stretching to the right until the first blank symbol.
A \emph{configuration} of $M$ is expressed as a triplet $(p,(h_i)_{i\in[k]},(z_i)_{i\in[k]})$, which indicates that $M$ is currently in inner state $p$ having $k$ tape heads at cells indexed by $h_1,\ldots,h_k$ with tape contents $z_1,\ldots,z_k$, respectively. The notion of configuration will be slightly modified in Sections \ref{sec:main-contributions}--\ref{sec:second-part} to make the proof of our main theorem simpler.
An \emph{initial configuration} is of the form $(q_0,0,x)$ and a \emph{final configuration} is a configuration having a final state.
The configuration space is spanned by the basis vectors in $\{\qubit{q,h,z}\mid q\in Q,h\in \integer^{k}, z\in \Gamma_1^*\times \cdots \Gamma_k^*\}$.
For a nonempty string $z_i$ and an index $h\in[|z_i|]$, $z_i[h]$ denotes the $h$th symbol in $z_i$. For example, if $z_i=01101$, then $z_i[1]=0$, $z_i[2]=1$, and $z_i[5]=1$.  The \emph{time-evolution operator} $U_{\delta}$ of $M$ acting on the configuration space  is induced from $\delta$ as
\[
U_{\delta}\qubit{p,h,z} = \sum_{q,w,d} \delta(p,z_h,q,z'_{h},d) \qubit{q,h_d,z'},
\]
where $p\in Q$, $h=(h_i)_{i\in[k]}\in \integer^k$,  $z=(z_i)_{i\in[k]}\in \Gamma_1^*\times\cdots\times\Gamma_k^*$, $z_h=(z_i[h_i])_{i\in[k]}$, $h_d = (h_i+d_i)_{i\in[k]}$, and $z'=(z'_i)_{i\in[k]}$,
where each $z'_i$ is the same as $z$ except for the $h_i$-th symbol. Moreover, in the summation,  variables $q$, $z'_h=(z'_i[h_i])_{i\in[k]}$, and $d=(d_i)_{i\in[k]}$ respectively range over $Q$, $\tilde{\Gamma}^{(k)}$, and $D^k$. Any entry of $U_{\delta}$ is called an \emph{amplitude}. Quantum mechanics demand the time-evolution operator $U_{\delta}$ of the QTM to be unitary.

Each step of $M$ consists of two phases: first apply $\delta$ and then take a \emph{partial projective  measurement}, in which we check whether $M$ is in a final state (i.e., an inner state in $Q_f$).
Formally, a \emph{computation} of $M$ on input $x$ is a series of superpositions of configurations produced by sequential applications of $U_{\delta}$, starting from an initial configuration of $M$ on $x$.
If $M$ enters a final state along a computation path, this computation path terminates; otherwise, its computation must continue.

A $k$-tape QTM $M=(Q,\Sigma, \tilde{\Gamma}^{(k)}, \delta,q_0,Q_{f})$  is {\em well-formed} if $\delta$ satisfies three local
conditions: unit length, separability, and orthogonality. To explain these conditions, as presented
in \cite[Lemma 1]{Yam99}, we first introduce the following notations.
For  our convenience, we set
$E=\{0,\pm1,\pm2\}$ and $H=\{0,\pm1,\natural\}$.  Given elements $(p,\sigma,\tau)\in Q\times (\tilde{\Gamma}^{(k)})^2$,
$\epsilon=(\varepsilon_i)_{i\in[k]} \in E^k$, and
$d=(d_i)_{i\in[k]}\in D^k$, we define $D_{\epsilon} =\{d\in D^k\mid \forall
i\in [k]\,(|2d_i-\varepsilon_i|\leq1)\}$ and $E_{d}
=\{\varepsilon\in E^k\mid d\in D_{\epsilon}\}$. Moreover, let $h_{d,\epsilon} =(h_{d_i,\varepsilon_i})_{i\in[k]}$,
where $h_{d_i,\varepsilon_i}=2d_i-\varepsilon_i$ if $\varepsilon_i\neq 0$ and
$h_{d_i,\varepsilon_i}=\natural$ otherwise.
Finally, we define $\delta(p,\sigma) = \sum_{q,\tau,d} \delta(p,\sigma,q,\tau,d)\qubit{q,\tau,d}$ and
$\delta[p,\sigma,\tau|\epsilon] = \sum_{q\in Q} \sum_{d\in
D_{\epsilon}} \delta(p,\sigma,q,\tau,d)|E_{d}|^{-1/2}
\qubit{q} \qubit{h_{d,\epsilon}}$, where $\sigma,\tau\in\tilde{\Gamma}^{(k)}$ and $d\in D^k$.

\begin{enumerate}\vs{-2}
  \setlength{\topsep}{-2mm}%
  \setlength{\itemsep}{1mm}%
  \setlength{\parskip}{0cm}%

\item (unit length) $\|\delta(p,\sigma)\|=1$ for
all $(p,\sigma)\in Q\times \tilde{\Gamma}^{(k)}$.

\item (orthogonality) $\delta(p_1,\sigma_1)
\cdot\delta(p_2,\sigma_2)=0$ for any distinct pair $(p_1,\sigma_1),
(p_2,\sigma_2)\in Q\times \tilde{\Gamma}^{(k)}$.

\item (separability) $\delta[p_1,\sigma_1,\tau_1|\epsilon]\cdot
\delta[p_2,\sigma_2,\tau_2|\epsilon']=0$ for any distinct pair
$\epsilon,\epsilon'\in E^k$ and for any pair
$(p_1,\sigma_1,\tau_1), (p_2,\sigma_2,\tau_2)\in Q\times
(\tilde{\Gamma}^{(k)})^2$.
\end{enumerate}\vs{-2}

The well-formedness of a QTM captures the unitarity of its time-evaluation operator.

\begin{lemma}\label{lemma:well-formedness}
(Well-Formedness Lemma of {\rm \cite{Yam99}})\hs{2}
A $k$-tape QTM
$M$ with a transition function $\delta$ is
well-formed iff the time-evolution operator of $M$ preserves the $\ell_2$-norm.
\end{lemma}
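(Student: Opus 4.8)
The plan is to prove both directions of the biconditional by unfolding the definition of well-formedness in terms of the three local conditions (unit length, orthogonality, separability) and matching them against the statement that $U_{\delta}$ preserves the $\ell_2$-norm, which by linearity and the polarization identity is equivalent to $U_{\delta}^{\dagger}U_{\delta}=I$ on the configuration space, i.e. to the columns $\{U_{\delta}\ket{p,h,z}\}$ forming an orthonormal set. The bulk of the work is to translate the global inner-product condition $\measure{U_{\delta}(p,h,z)}{U_{\delta}(p',h',z')} = \delta_{(p,h,z),(p',h',z')}$ into finitely many local conditions on $\delta$ that do not refer to head positions or full tape contents, and this is exactly the content of the decomposition recorded just before the lemma statement (the auxiliary sets $D_{\epsilon}$, $E_d$, the shifts $h_{d,\epsilon}$, and the vectors $\delta(p,\sigma)$ and $\delta[p,\sigma,\tau|\epsilon]$). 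Since the excerpt explicitly cites \cite[Lemma 1]{Yam99} for this normal form, I would invoke that as the key structural fact and then the argument becomes essentially bookkeeping.

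First I would fix two basis configurations $c=(p,h,z)$ and $c'=(p',h',z')$ and expand $\measure{U_{\delta}c}{U_{\delta}c'}$ using the formula for $U_{\delta}$. Two sub-cases arise. If the two head-position vectors $h,h'$ differ by "too much" in some coordinate (more than $2$, or in a way that no pair of single-step moves $d,d'\in D^k$ can reconcile), or if the tape contents disagree at some cell that neither head can reach in one step, then every term in the sum vanishes and orthogonality is automatic; this is the routine part. The interesting regime is when $h$ and $h'$ are "close" — here I would introduce the difference vector $\epsilon$ measuring the head displacement (this is precisely why $\epsilon$ ranges over $E^k=\{0,\pm1,\pm2\}^k$) and reorganize the double sum so that the contribution splits as a sum over $\epsilon$ of inner products of the form $\delta[p,\sigma,\tau|\epsilon]\cdot\delta[p',\sigma',\tau'|\epsilon']$ together with diagonal terms of the form $\delta(p,\sigma)\cdot\delta(p',\sigma')$. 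The normalization factors $|E_d|^{-1/2}$ and the notation $h_{d,\epsilon}$ are engineered exactly so that this regrouping is an identity.

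For the "only if" direction, I assume the three local conditions and show each of the three regimes above forces $\measure{U_{\delta}c}{U_{\delta}c'}=\delta_{c,c'}$: unit length handles the diagonal $c=c'$ (giving norm $1$), orthogonality handles distinct $(p,\sigma)$ pairs sharing the same head motion, and separability handles the cross terms coming from distinct displacement patterns $\epsilon\neq\epsilon'$. For the "if" direction I run the implications backwards: assuming $U_{\delta}$ is norm-preserving, hence that the columns are orthonormal, I specialize the configuration pairs $c,c'$ to isolate each local condition — e.g. taking $c=c'$ with a single non-blank cell under the head recovers unit length, and choosing pairs of configurations whose heads are positioned to realize a prescribed $\epsilon$ and $\epsilon'$ isolates the separability identity as a necessary condition. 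The main obstacle — and the only genuinely delicate point — is verifying that the finitely many local conditions really capture all the infinitely many global orthogonality constraints, i.e. that no "spurious" cancellation across different $\epsilon$ or different tape contents is missed and that, conversely, every global constraint is exhausted by some specialization; this is where I would lean explicitly on the already-established normal form of \cite[Lemma 1]{Yam99} rather than re-deriving the combinatorics of $D_{\epsilon}$, $E_d$, $h_{d,\epsilon}$ from scratch. Once that correspondence is in hand, the lemma follows.
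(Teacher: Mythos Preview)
The paper does not actually give a proof of this lemma: it is stated with attribution to \cite{Yam99} and used as a black box, with no argument supplied in the present paper. So there is no ``paper's own proof'' to compare against here.

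That said, your outline is the standard and correct route to this result, and almost certainly matches what is done in \cite{Yam99}: reduce norm-preservation to column orthonormality via polarization, expand $\measure{U_{\delta}c}{U_{\delta}c'}$, discard the trivially orthogonal far-apart cases, and in the near-diagonal regime regroup the sum using the displacement parameter $\epsilon\in E^k$ so that the three local conditions correspond exactly to the diagonal, same-$\epsilon$ off-diagonal, and cross-$\epsilon$ contributions. One small wording issue: you speak of ``leaning on the already-established normal form of \cite[Lemma~1]{Yam99}'' as if it were a separate structural input, but that citation \emph{is} the lemma under discussion---the auxiliary objects $D_{\epsilon}$, $E_d$, $h_{d,\epsilon}$, $\delta[p,\sigma,\tau|\epsilon]$ introduced just before the statement are definitions, not a prior result you can invoke. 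The genuine work, which you correctly flag as the delicate point, is the combinatorial bookkeeping showing that the regrouping by $\epsilon$ is exhaustive and that suitable specializations of $(c,c')$ isolate each local condition for the converse; that is content you would have to carry out rather than cite.
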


Given a nonempty subset $K$ of $\complex$, we say that a QTM is of {\em K-amplitude} if all values of its quantum transition function belong to $K$. It is of significant importance to limit the choice of amplitude within an appropriate  set $K$ of reasonable numbers. With the use of such a set $K$, we introduce two important complexity classes $\bqp_K$ and $\fbqp_K$.

\begin{definition}
Let $K$ be any nonempty subset of $\complex$ and let $\Sigma$
be any alphabet.
\begin{enumerate}\vs{-2}
  \setlength{\topsep}{-2mm}%
  \setlength{\itemsep}{1mm}%
  \setlength{\parskip}{0cm}%

\item  A subset $S$ of $\Sigma^*$ is in $\bqp_{K}$ if there exists a
multi-tape, polynomial-time, well-formed QTM $M$ with $K$-amplitudes
such that, for every string $x$, $M$ outputs $S(x)$ with probability
at least $2/3$ {\cite{BV97}}.

\item A single-valued function $f$ {}from $\Sigma^*$ to $\Sigma^*$ is called {\em
bounded-error quantum polynomial-time computable} if there exists a
multi-tape, polynomial-time, well-formed QTM $M$ with $K$-amplitudes
such that, on every input $x$, $M$ outputs $f(x)$ with probability at
least $2/3$.  Let $\fbqp_{K}$ denote the set of all such functions \cite{Yam03}.
\end{enumerate}
\end{definition}

The use of \emph{arbitrary} complex amplitudes turns out to make $\bqp_K$ quite powerful. As Adleman, DeMarrais, and Huang \cite{ADH97} demonstrated, $\bqp_{\complex}$ contains all possible languages, and thus  $\bqp_{\complex}$ is no longer recursive. Therefore, we usually pay our attention only to  polynomial-time approximable amplitudes and, for this reason, when $K=\appcomplex$, we always drop subscript $K$ and briefly write $\bqp$ and $\fbqp$ instead of $\bqp_K$ and $\fbqp_K$, respectively.
It is also possible to further limit the amplitude set $K$ to $\{0,\pm1,\pm\frac{3}{5},\pm\frac{4}{5}\}$ because  $\bqp=\bqp_{\{0,\pm1,\pm\frac{3}{5},\pm\frac{4}{5}\}}$ holds
\cite{ADH97}.

\subsection{Quantum Circuits}\label{sec:quantum-circuit}

A {\em $k$-qubit quantum gate}, for $k\in\nat^{+}$, is a unitary operator acting on a Hilbert space of dimension $2^k$.
Since any quantum state is a vector  in a certain Hilbert space, each entry of such a quantum state is customarily called an \emph{amplitude}. Unitary operators, such as  the Walsh-Hadamard transform (WH) and the controlled-NOT transform (CNOT) defined as
\[
WH = \frac{1}{\sqrt{2}} \left( \begin{array}{cc}
1 & 1 \\ 1 & -1
\end{array} \right)
\;\; \text{and} \;\;
CNOT = \left( \begin{array}{cccc} 1 & 0 & 0 & 0 \\
0 & 1 & 0 & 0 \\  0 & 0 & 0 & 1 \\ 0 & 0 & 1 & 0
\end{array} \right),
\]
are typical quantum gates acting on $1$ qubit and $2$
qubits, respectively.
If a quantum gate $U$ acting on $k$ qubits is applied to
a $k$-qubit quantum state $\qubit{\phi}$, then we obtain a new quantum state $U\qubit{\phi}$.  Notice that every quantum gate preserves the norm of any quantum state given as an input.
A \emph{quantum circuit} is a product of a finite number of layers, where each layer is a Kronecker product of allowed quantum gates. We often concentrate on a particular set of quantum gates to construct quantum circuits. Let us consider the specific quantum gates:
the CNOT gate and three one-qubit gates of the form
\[
 Z_{1,\theta}=
\left( \begin{array}{cc} e^{i \theta} & 0 \\ 0 & 1
\end{array} \right), \;\;
 Z_{2,\theta}=
\left( \begin{array}{cc} 1 & 0 \\ 0 & e^{i \theta} \end{array}
\right), \;\; \text{and} \;\;
 R_{\theta}=
\left( \begin{array}{rr} \cos\theta & -\sin\theta \\ \sin\theta &
\cos\theta \end{array} \right),
\]
where $\theta$ is a real number with $0\leq \theta < 2\pi$. Notice that $WH$ equals $R_{\frac{\pi}{4}}$.
Those gates form a \emph{universal set} of quantum
gates \cite{BBC95} since $WH$ and $Z_{2,\frac{\pi}{4}}$ (called the $\pi/8$ gate) can approximate any single qubit unitary operator to arbitrary accuracy. For convenience, we call them \emph{elementary gates}.
The set of $CNOT$, $WH$, and $Z_{2,\frac{\pi}{4}}$ is also known to be universal \cite{BMP+99}.

Given an amplitude set $K$, a quantum circuit $C$ is said to have \emph{$K$-amplitudes} if all entries of each quantum gate used inside $C$ are drawn from $K$.
For any $k$-qubit quantum gate and any integer $n>k$, $G^{(n)}$
denotes  $G\otimes I^{\otimes n-k}$, the \emph{$n$-qubit expansion} of $G$.  An $n$-qubit quantum circuit is formally defined as a finite sequence
$(G_m,\pi_m),(G_{m-1},\pi_{m-1}),\cdots,(G_1,\pi_1)$ such that each
$G_i$ is an $n_i$-qubit quantum gate with $n_i\leq n$ and $\pi_i$ is a
permutation on $\{1,2,\ldots,n\}$. This quantum circuit represents the
unitary operator $U=U_mU_{m-1}\cdots U_1$, where
$U_i$ is of the form $V_{\pi_i}^{\dagger} G_i^{(n)} V_{\pi_i}$ and $V_{\pi_i}(\qubit{x_1\cdots
x_n})= \qubit{x_{\pi_i(1)}\cdots x_{\pi_i(n)}}$ for each $i\in[m]$.
The {\em size} of a quantum circuit is the total number of quantum gates in it. Yao
\cite{Yao93} and later Nishimura and Ozawa \cite{NO02} showed that,
for any $k$-tape QTM and a polynomial $p$, there exists a family of
quantum circuits of size $O(p(n)^{k+1})$ that exactly simulates $M$.

A family  $\{C_n\}_{n\in\nat}$ of a quantum circuit is said to be {\em
$\p$-uniform} if there exists a deterministic (classical) Turing
machine that, on input $1^n$, produces a code of $C_n$ in time polynomial in the size of $C_n$, provided that
we use a fixed, efficient coding scheme to describe each quantum circuit.

\begin{proposition}\label{uniform-circuit-BQP}{\rm \cite{Yao93} (see also \cite{NO02})} \hs{2}
For any language $L$ over an alphabet $\{0,1\}$, $L$ is in $\bqp$ iff there exist a polynomial $p$
and a $\p$-uniform family  $\{C_n\}_{n\in\nat}$ of quantum circuits having $\tilde{\complex}$-amplitudes
such that
$\|\bra{L(x)}C_{|x|} \ket{x10^{p(|x|)}}\|^2 \geq\frac{2}{3}$ holds for all $x\in \{0,1\}^*$, where $L$ is seen as the characteristic function of $L$.
\end{proposition}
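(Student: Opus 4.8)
\quad Being an equivalence, the statement splits into two implications, and in each case the plan is to reduce to the exact simulation results of Yao~\cite{Yao93} and Nishimura--Ozawa~\cite{NO02} quoted just above rather than to rebuild them from scratch.

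For the forward direction, suppose $L\in\bqp$ via a well-formed multi-tape QTM $M$ with $\appcomplex$-amplitudes that halts within $p_0(|x|)$ steps and outputs $L(x)$ with probability at least $2/3$. First I would fix the polynomial $p$ so that the block $0^{p(|x|)}$ supplies enough work qubits to hold, for each of the constantly many tapes, the $\order{p_0(|x|)}$ cells that $M$ can visit together with an encoding of its head positions; the extra bit $1$ in $x10^{p(|x|)}$ serves as an end-of-input marker separating the genuine input from the scratch register. Then I would invoke the Yao / Nishimura--Ozawa construction, which realizes one application of the time-evolution operator $U_\delta$ by a sub-circuit over these qubits of size polynomial in $p_0(|x|)$ built only from the finitely many transition amplitudes of $\delta$ (hence with $\appcomplex$-amplitudes), and then concatenate $p_0(|x|)$ such blocks followed by a fixed permutation moving the output cell to qubit~$1$. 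Since the sole input-dependent parameter is $n=|x|$ and the block structure is completely regular --- a fixed gadget, with $\delta$ hard-wired, replicated $\order{p_0(n)}$ times over $\order{p_0(n)}$ positions --- a deterministic Turing machine can print the code of $C_n$ from $1^n$ in time polynomial in $|C_n|$, which is $\p$-uniformity; exactness of the simulation then gives $\|\bra{L(x)}C_{|x|}\ket{x10^{p(|x|)}}\|^2\geq 2/3$.

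For the converse, given a $\p$-uniform family $\{C_n\}$ with $\appcomplex$-amplitudes satisfying the success bound, I would build a well-formed QTM $M$ that on input $x$ first runs the classical, deterministic, polynomial-time uniformity machine to write the code of $C_{|x|}$ on a work tape, and then simulates $C_{|x|}$ gate by gate on a separate ``quantum'' track initialized to $x10^{p(|x|)}$: reading each descriptor $(G_i,\pi_i)$ in turn, it routes the relevant qubits into a fixed window, applies $G_i$ there, and routes them back. The genuinely delicate point is that the classical bookkeeping, the reversible routing, and the application of $G_i$ must all be fused into one transition function meeting the unit-length, separability, and orthogonality conditions; this is precisely the kind of modular well-formed QTM programming carried out by Bernstein and Vazirani~\cite{BV97}, whose dovetailing and branching lemmas I would cite rather than re-derive. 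Because a well-formed QTM with $\appcomplex$-amplitudes can realize any $\appcomplex$-amplitude gate on $\order{1}$ qubits to within $2^{-q(n)}$ for any polynomial $q$, the total error over the $|C_n|$ gates is $\order{|C_n|\,2^{-q(n)}}=o(1)$, so $M$ outputs $L(x)$ with probability at least $2/3-o(1)$; a constant number of independent repetitions with a majority vote --- itself a standard well-formed construction --- restores the bound above $2/3$, placing $L$ in $\bqp$.

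The main obstacle in both directions is not the high-level logic but the verification of well-formedness. In the forward direction this is absorbed into the fact (Lemma~\ref{lemma:well-formedness}) that $U_\delta$ is already unitary and hence decomposes into elementary gates, together with Yao's and Nishimura--Ozawa's careful treatment of head-position superpositions that keeps this decomposition both exact and polynomial. In the converse direction one must instead check that the composite ``compute the circuit, then execute it'' procedure is itself a legal well-formed QTM, and essentially all of the technical care is concentrated there.
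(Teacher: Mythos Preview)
The paper does not prove this proposition at all: it is stated with attribution to \cite{Yao93} and \cite{NO02} and used as a black box, with the remark immediately afterward that ``Yao's inspiring proof \cite{Yao93} of Proposition \ref{uniform-circuit-BQP} gives a foundation to our proof of Lemma \ref{lemma:second-part}.'' Your proposal is therefore not competing with any argument in the paper; it is a sketch of the standard proof from the cited references, and as such it is accurate in outline---forward direction via the Yao/Nishimura--Ozawa exact circuit simulation of a QTM step, converse via a well-formed QTM that first computes the circuit description classically and then executes it gate by gate, with Bernstein--Vazirani programming lemmas handling well-formedness and amplification restoring the $2/3$ bound.
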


Yao's inspiring proof \cite{Yao93} of Proposition \ref{uniform-circuit-BQP} gives a foundation to our proof of Lemma \ref{lemma:second-part}, which provides in Section \ref{sec:definition} a simulation of a well-formed QTM by an appropriately chosen $\squareqp$-function.

\section{A New, Simple Schematic Definition}\label{sec:recursive-def}

As noted in Section \ref{sec:introduction}, the ``schematic'' definition of recursive function means an inductive (or constructive) way of defining the set of computable functions and it involves
a small set of so-called \emph{initial functions} as well as
a small set of \emph{construction rules}, which are sequentially applied
finitely many times to build more complex functions from certain functions that have been already constructed. A similar schematic
characterization is known for polynomial-time computable functions
(as well as languages) \cite{Cob64,Con73,CK89,Meh76,Yam95}.  Along this line of work, we wish to present a new, simple schematic definition composed of a  small set of initial quantum functions and a small  set of construction rules, intending  to make this schematic definition appropriately capture polynomial-time computable quantum functions, where a \emph{quantum function} is a function mapping $\HH_{\infty}$ to $\HH_{\infty}$.
As remarked briefly in Section \ref{sec:introduction}, it is important to note that our term of ``quantum function'' is quite different from the one used in, e.g., \cite{Yam03}, in which ``quantum function'' refers to  functions that take classical input strings and produce either classical output  strings or acceptance probabilities of multi-tape polynomial-time well-formed QTMs and thus it maps $\Sigma^*$ to either $\Sigma^*$ or the real unit interval $[0,1]$, where $\Sigma$ is an appropriate alphabet.

\subsection{Definition of $\squareqp$-Functions}\label{sec:definition}

Our schematic definition induces a special function class, called $\squareqp$ (where $\Box$ is pronounced ``square''), capturing polynomial-time computable quantum functions mapping $\HH_{\infty}$ to $\HH_{\infty}$, which is composed of a small set of initial quantum functions and four special  construction rules: composition, swapping, branching, and multi-qubit quantum recursion.
Definition \ref{def:initial} formally presents our schematic definition.

Hereafter, we say that a quantum function $f$ from $\HH_{\infty}$ to $\HH_{\infty}$ is
{\em dimension-preserving} if, for every quantum state $\qubit{\phi}\in\HH_{\infty}$ and any number $n\in\nat^{+}$, $\qubit{\phi}\in\HH_{2^n}$ implies
$f(\qubit{\phi})\in\HH_{2^n}$ (i.e., $\ell(\qubit{\phi}) = \ell(f(\qubit{\phi}))$).

\begin{definition}\label{def:initial}
Let $\squareqp$ denote the collection of all quantum functions that are obtained
from the initial quantum functions in Scheme I by a finite number (including zero) of applications of
construction rules II--IV to quantum functions that have been already constructed, where Schemata I--IV\footnote{The current formalism of Schemata I--IV corrects discrepancies caused by the early formalism given in the extended abstract \cite{Yam17}.}  are given as follows. Let
$\qubit{\phi}$ be any quantum state in $\HH_{\infty}$.
\begin{enumerate}\vs{-2}
  \setlength{\topsep}{-2mm}%
  \setlength{\itemsep}{1mm}%
  \setlength{\parskip}{0cm}%

\item[I.] The initial quantum functions.  Let
$\theta\in[0,2\pi)\cap\appcomplex$ and $a\in\{0,1\}$.

1) $I(\qubit{\phi}) = \qubit{\phi}$. (identity)

2) $PHASE_{\theta}(\qubit{\phi}) = \qubit{0}\!\measure{0}{\phi} + e^{i\theta}\qubit{1}\!\measure{1}{\phi}$. (phase shift)

3) $ROT_{\theta}(\qubit{\phi}) = \cos\theta \qubit{\phi}
+ \sin\theta
(\qubit{1}\!\measure{0}{\phi} - \qubit{0}\!\measure{1}{\phi})$. (rotation around $xy$-axis at angle $\theta$)

4) $NOT(\qubit{\phi}) = \qubit{0}\!\measure{1}{\phi} +
\qubit{1}\!\measure{0}{\phi}$. (negation)

5) $SWAP(\qubit{\phi}) = \left\{ \begin{array}{ll}
 \qubit{\phi} & \mbox{if $\ell(\qubit{\phi})\leq 1$,} \\
 \sum_{a,b\in\{0,1\}} \qubit{ab}\!\measure{ba}{\phi}
 & \mbox{otherwise.}
 \end{array}\right.$ \text{(swapping of 2 qubits)}

6) $MEAS[a](\qubit{\phi}) = \qubit{a}\!\measure{a}{\phi}$. (partial projective measurement)

\item[II.] The composition rule.
From $g$ and $h$, we define $Compo[g,h]$ as
follows:

\n\hs{10}$Compo[g,h](\qubit{\phi}) = g\circ h(\qubit{\phi})$
($=g(h(\qubit{\phi}))$).

\item[III.] The branching rule.
From $g$ and $h$, we define $Branch[g,h]$
as:

\n\hs{10}(i) $Branch[g,h](\qubit{\phi}) =
\qubit{\phi}$ \hs{44}if $\ell(\qubit{\phi})\leq 1$, \\
\n\hs{9}(ii) $Branch[g,h](\qubit{\phi}) =
\qubit{0}\otimes g(\measure{0}{\phi}) + \qubit{1}\otimes
h(\measure{1}{\phi})$ \hs{3}otherwise.

\item[IV.] The multi-qubit quantum recursion rule.
From $g$, $h$, dimension-preserving $p$, and $k,t\in\nat^{+}$, we define
$kQRec_t[g,h,p|\FF_k]$ as:

\n\hs{10}(i) $kQRec_t[g,h,p|\FF_k](\qubit{\phi}) = g(\qubit{\phi})$
\hs{38}if $\ell(\qubit{\phi})\leq t$, \\
\n\hs{9}(ii) $kQRec_t[g,h,p|\FF_k](\qubit{\phi}) =
h( \sum_{s:|s|=k} \qubit{s}\otimes f_s(\measure{s}{\psi_{p,\phi}}) )$ \hs{3}otherwise,

where $\qubit{\psi_{p,\phi}} = p(\qubit{\phi})$ and $\FF_k=\{f_s\}_{s\in\{0,1\}^k} \subseteq  \{kQRec_t[g,h,p|\FF_k],I\}$.
To emphasize ``$k$,'' we call this rule by the \emph{$k$-qubit quantum recursion}. In the case of $k=1$, we write $QRec_t[g,h,p|f_0,f_1]$ in place of $1QRec_t[g,h,p|\{f_0,f_1\}]$ for brevity.
\end{enumerate}
\end{definition}

In Scheme I, $PHASE_{\theta}$ and $ROT_{\theta}$ correspond respectively to the matrices $Z_{2,\theta}$ and $R_{\theta}$ given in Section \ref{sec:quantum-circuit}.
Latter, we will argue that an angle $\theta$ in $PHASE_{\theta}$ and $ROT_{\theta}$ could be fixed to $\pi/4$.
The quantum function $MEAS$ is associated with a \emph{partial projective measurement}, in the computational basis $\{0,1\}$, applied to the first qubit of $\qubit{\phi}$, when $\ell(\qubit{\phi})\geq1$, and it obviously follows that $\ell(MEAS[i](\qubit{\phi})) \leq \ell(\qubit{\phi})$.

Before proceeding further, to help the reader understand the behaviors of the initial quantum functions listed in Scheme I, we briefly illustrate how these functions transform basic qustrings of length $3$. For bits $a,b,c,d\in\{0,1\}$ with $d\neq a$, it follows  that $I(\qubit{abc})=\qubit{abc}$, $PHASE_{\theta}(\qubit{abc} = e^{i\theta a}\qubit{abc}$,  $ROT_{\theta}(\qubit{abc})=\cos\theta \qubit{abc} + (-1)^{a}\sin\theta \qubit{\overline{a}bc}$, $NOT(\qubit{abc}) = \qubit{\overline{a}bc}$, $SWAP(\qubit{abc}) = \qubit{bac}$,  $MEAS[a](\qubit{abc})=\qubit{abc}$, and $MEAS[d](\qubit{abc})=\bm{0}$, where $\overline{a}=1-a$.


Scheme IV is a core of the definition of $\squareqp$.
The standard recursion rule used to define a primitive recursive function $f$ from two functions $g$ and $h$ has the form: $f(0,x)=g(x)$ and $f(n+1,x) = h(n,x,f(n,x))$ for any $n\in\nat$. This rule requires an internal counter (in the first argument place of $f$) that controls the number of iterated applications of $h$.
In Scheme IV, however, we do not use such a counter. Instead, we use a divide-and-conquer strategy to slice a given quantum state qubit by qubit. At each inductive step, we deal with $k$-qubit shorter quantum states until the quantum state has become length $t$ or less.
We wish to provide two concrete examples of how Scheme IV works in the cases of $k=1,2$. Notice that, in the case of $k=1$, Scheme IV becomes the \emph{1-qubit} (or \emph{single-qubit}) \emph{quantum recursion rule} described as:

\n\hs{10}(i$'$) $QRec_t[g,h,p|f_0,f_1](\qubit{\phi}) = g(\qubit{\phi})$
\hs{56}if $\ell(\qubit{\phi})\leq t$, \\
\n\hs{9}(ii$'$) $QRec_t[g,h,p|f_0,f_1](\qubit{\phi}) =
h( \qubit{0}\otimes f_0(\measure{0}{\psi_{p,\phi}}) + \qubit{1}\otimes f_1(\measure{1}{\psi_{p,\phi}}) )$ \hs{3}otherwise,

\n where each of $f_0$ and $f_1$ must be either $I$ or $QRec_t[g,h,p|f_0,f_1]$.

\s
\n{\bf Example 1.}
We first consider the case of $k=1$. In this example, we set $t=1$, $g=NOT$, $h=SWAP$, and $p=NOT$, and we briefly write $F$ for $QRec_{1}[NOT,SWAP,NOT|f_0,f_1]$ with $f_0=I$ and $f_1= F$. It is worth mentioning that $\bm{0}$ is a special object and we obtain $g(\bm{0})=\bm{0}$ for any quantum function $g$ by Lemma \ref{hatsquare-property}(1). Let $\qubit{\phi}$ denote any quantum state in $\HH_{\infty}$ given as an input to $F$.

\s
(1) Assume that $\qubit{\phi}$ has length $1$ and is  of the form $\alpha\qubit{0}+\beta\qubit{1}$ in general. By Lemma \ref{hatsquare-property}(2), it suffices for us to consider the computational basis $B_1= \{\qubit{0},\qubit{1}\}$. Since $\ell(\qubit{\phi})\leq t$, the outcomes of $f$ for those basis qubits $\qubit{0}$ and $\qubit{1}$ are calculated as follows.

\s
\n\hs{6}(i) $F(\qubit{0}) = g(\qubit{0}) = NOT(\qubit{0})=\qubit{1}$.

\s
\n\hs{5}(ii) $F(\qubit{1}) = g(\qubit{1}) = NOT(\qubit{1})=\qubit{0}$.

\s
Obviously, $F(\bm{0})=\bm{0}$ holds. Since $\qubit{\phi}$ is a superposition of the form $\alpha\qubit{0}+\beta\qubit{1}$, the above calculations instantly imply

\s
\n\hs{10} $F(\qubit{\phi}) = F(\alpha\qubit{0}+\beta\qubit{1}) = NOT(\alpha\qubit{0}+\beta\qubit{1}) = \alpha NOT(\qubit{0}) + \beta NOT(\qubit{1}) = \alpha\qubit{1}+\beta\qubit{0}$.

\s
(2) Next, let us assume that $\qubit{\phi}$ is of length $2$. In the case where $\qubit{\phi}$ is the basis quantum state $\qubit{00}$ in $B_2=\{\qubit{00},\qubit{01},\qubit{10},\qubit{11}\}$, we obtain  $\qubit{\psi_{p,\phi}} = \qubit{\psi_{NOT,00}}= NOT(\qubit{00}) = \qubit{10}$. Similarly, if $\qubit{\phi}$ is $\qubit{01}$, $\qubit{10}$, and $\qubit{11}$, then the quantum state $\qubit{\psi_{p,\phi}}$ is $\qubit{11}$, $\qubit{00}$, and $\qubit{01}$, respectively.
Concerning the partial projective measurement, it follows that,
for any bit $a\in\{0,1\}$,  $\measure{1}{1a}=\qubit{a}$, $\measure{0}{0a} =\qubit{a}$,  $\measure{1}{0a}= \bm{0}$, and $\measure{0}{1a} =\bm{0}$. Note also that $I(\bm{0})=\bm{0}$. Using these equalities together with $F(\qubit{0})=\qubit{1}$ and $F(\qubit{1})=\qubit{0}$ obtained in (1), we can calculate the outcome $F(\qubit{\phi})$ as follows.

\s
\n\hs{5}(i) $F(\qubit{00}) = h( \qubit{0}\otimes f_0(\measure{0}{\psi_{NOT,00}}) + \qubit{1}\otimes f_1(\measure{1}{\psi_{NOT,00}}) = h( \qubit{0}\otimes I(\measure{0}{10}) + \qubit{1}\otimes F(\measure{1}{10}) ) =
h( \qubit{0}\otimes I(\bm{0}) + \qubit{1}\otimes F(\qubit{0}) ) =
h(\qubit{1}\otimes F(\qubit{0})) = SWAP( \qubit{1}\otimes \qubit{1}) = SWAP(\qubit{11}) =\qubit{11}$.

\s
\n\hs{5}(ii) $F(\qubit{10}) = h( \qubit{0}\otimes f_0(\measure{0}{\psi_{NOT,10}}) + \qubit{1}\otimes f_1(\measure{1}{\psi_{NOT,10}}) = h( \qubit{0}\otimes I(\measure{0}{00}) + \qubit{1}\otimes F(\measure{1}{00}) ) =
h( \qubit{0}\otimes I(\qubit{0}) + \qubit{1}\otimes F(\bm{0}) ) =
h(\qubit{0}\otimes I(\qubit{0})) =  SWAP(\qubit{0}\otimes \qubit{0})
= SWAP( \qubit{00}) = \qubit{00}$.

\s
\n\hs{5}(iii) $F(\qubit{01}) = h( \qubit{0}\otimes f_0(\measure{0}{\psi_{NOT,01}}) + \qubit{1}\otimes f_1(\measure{1}{\psi_{NOT,01}}) = h( \qubit{0}\otimes I(\measure{0}{11}) + \qubit{1}\otimes F(\measure{1}{11}) ) =
h( \qubit{0}\otimes I(\bm{0}) + \qubit{1}\otimes F(\qubit{1}) ) =
h(\qubit{1}\otimes F(\qubit{1})) = SWAP( \qubit{1}\otimes\qubit{0}) = SWAP( \qubit{10}) = \qubit{01}$.

\s
\n\hs{5}(iv) $F(\qubit{11}) = h( \qubit{0}\otimes f_0(\measure{0}{\psi_{NOT,11}}) + \qubit{1}\otimes f_1(\measure{1}{\psi_{NOT,11}}) = h( \qubit{0}\otimes I(\measure{0}{01}) + \qubit{1}\otimes F(\measure{1}{01}) ) =
h( \qubit{0}\otimes I(\qubit{1}) + \qubit{1}\otimes F(\bm{0}) ) =
h(\qubit{0}\otimes I(\qubit{1})) = SWAP( \qubit{0}\otimes \qubit{1}) = SAWP( \qubit{01}) = \qubit{10}$.

\s
By Lemma \ref{hatsquare-property}(2), when  $\qubit{\phi}$ is of the form   $\alpha\qubit{00}+\beta\qubit{10}$ for example, we obtain

\s
\n\hs{10} $F(\qubit{\phi}) = F(\alpha\qubit{00}+\beta\qubit{10}) = \alpha F(\qubit{00}) + \beta F(\qubit{10}) = \alpha\qubit{11}+\beta\qubit{00}$.

\s
(3) Let us consider the case where the length of $\qubit{\phi}$ is $3$. For the computational basis  $B_3= \{\qubit{000},\qubit{001},\qubit{010},\ldots,\qubit{111}\}$,
we calculate the outcomes $F(\qubit{\phi})$ only for inputs   $\qubit{\phi}$ in $\{\qubit{001},\qubit{101}\}$.
Notice that $\qubit{\psi_{NOT,001}}=\qubit{101}$ and $\qubit{\psi_{NOT,101}} = \qubit{001}$. Recall from (1)--(2) that $F(\qubit{01})=\qubit{01}$ and $F(\bm{0})=\bm{0}$.

\s
\n\hs{5}(i) $F(\qubit{001}) = h( \qubit{0}\otimes f_0(\measure{0}{\psi_{NOT,001}}) + \qubit{1}\otimes f_1(\measure{1}{\psi_{NOT,001}}) = h( \qubit{0}\otimes I(\measure{0}{101}) + \qubit{1}\otimes F(\measure{1}{101}) ) =
h( \qubit{0}\otimes I(\bm{0}) + \qubit{1}\otimes F(\qubit{01}) ) =
h(\qubit{1}\otimes F(\qubit{01})) = SWAP( \qubit{1}\otimes \qubit{01}) = SAWP( \qubit{101}) = \qubit{011}$.

\s
\n\hs{5}(ii) $F(\qubit{101}) = h( \qubit{0}\otimes f_0(\measure{0}{\psi_{NOT,101}}) + \qubit{1}\otimes f_1(\measure{1}{\psi_{NOT,101}}) = h( \qubit{0}\otimes I(\measure{0}{001}) + \qubit{1}\otimes F(\measure{1}{001}) ) =
h( \qubit{0}\otimes I(\qubit{01}) + \qubit{1}\otimes F(\bm{0}) ) =
h(\qubit{0}\otimes I(\qubit{01})) = SWAP( \qubit{0}\otimes \qubit{01}) = SAWP( \qubit{001}) = \qubit{001}$.

\s
If $\qubit{\phi}$ is of the form $\alpha\qubit{001}+\beta\qubit{101}$ for example, Lemma \ref{hatsquare-property}(2) implies that $F(\qubit{\phi}) = F(\alpha\qubit{001}+\beta\qubit{101}) =  \alpha F(\qubit{001}) + \beta F(\qubit{101}) = \alpha\qubit{011} + \beta\qubit{001}$.

\s
\n{\bf Example 2.}
We see another example for the case of $k=2$. By setting $t=3$, $g=NOT$, $h=SWAP$, and $p=NOT$ for instance, we write $F$ for $2QRec_{3}[NOT,SWAP,NOT|f_{00},f_{01},f_{10},f_{11}]$ with $f_{00}=I$, $f_{01}=NOT$, $f_{10}=F$, and $f_{11}= F$. Let $\qubit{\phi}$ denote any quantum state in $\HH_{\infty}$.

\s
(1) When the length of $\qubit{\phi}$ is at most $3$, we instantly obtain $F(\qubit{\phi}) = g(\qubit{\phi}) = NOT(\qubit{\phi})$. For example, $F(\qubit{0}) = \qubit{1}$, $F(\qubit{10}) = \qubit{00}$, and $F(\qubit{101}) = \qubit{001}$.

\s
(2) Assuming that the length of $\qubit{\phi}$ is $4$, let us consider the basis quantum states in $B_4 = \{\qubit{0000},\qubit{0001},\qubit{0010},\ldots,\qubit{1111}\}$.
Here, we are focused only on two cases:  $\qubit{\phi}\in\{\qubit{0101},\qubit{1011}\}$.
We remark that $\qubit{\psi_{NOT,0010}} = NOT(\qubit{0010}) = \qubit{1010}$ and $\qubit{\psi_{NOT,1011}} = NOT(\qubit{1011}) = \qubit{0011}$.

\s
\n\hs{10}(i) $F(\qubit{0010}) = h(
\qubit{00}\otimes f_{00}(\measure{00}{\psi_{NOT,0010}})
+ \qubit{01}\otimes f_{01}(\measure{01}{\psi_{NOT,0010}})
+ \qubit{10}\otimes f_{10}(\measure{10}{\psi_{NOT,0010}})
+ \qubit{11}\otimes f_{11}(\measure{11}{\psi_{NOT,0010}}) )
= h(
\qubit{00}\otimes I(\measure{00}{1010})
+ \qubit{01}\otimes NOT(\measure{01}{1010})
+ \qubit{10}\otimes F(\measure{10}{1010})
+ \qubit{11}\otimes F(\measure{11}{1010}) )
= h(
\qubit{00}\otimes I(\bm{0})
+ \qubit{01}\otimes NOT(\bm{0})
+ \qubit{10}\otimes F(\qubit{10})
+ \qubit{11}\otimes F(\bm{0}) )
= h(
\qubit{10}\otimes F(\qubit{10}) )
= SWAP(
\qubit{10}\otimes \qubit{00} ) = SWAP(\qubit{1000}) = \qubit{0100}$.

\s
\n\hs{10}(ii) $F(\qubit{1101}) = h(
\qubit{00}\otimes f_{00}(\measure{00}{\psi_{NOT,1101}})
+ \qubit{01}\otimes f_{01}(\measure{01}{\psi_{NOT,1101}})
+ \qubit{10}\otimes f_{10}(\measure{10}{\psi_{NOT,1101}})
+ \qubit{11}\otimes f_{11}(\measure{11}{\psi_{NOT,1101}}) )
= h(
\qubit{00}\otimes I(\measure{00}{0101})
+ \qubit{01}\otimes NOT(\measure{01}{0101})
+ \qubit{10}\otimes F(\measure{10}{0101})
+ \qubit{11}\otimes F(\measure{11}{0101}) )
= h(
\qubit{00}\otimes I(\bm{0})
+ \qubit{01}\otimes NOT(\qubit{01})
+ \qubit{10}\otimes F(\bm{0})
+ \qubit{11}\otimes F(\bm{0}) )
= h(
\qubit{01}\otimes NOT(\qubit{01}) )
= SWAP(
\qubit{01}\otimes \qubit{11} ) = SWAP(\qubit{0111}) = \qubit{1011}$.

\subsection{A Subclass of $\squareqp$ and Basic Properties}

Among the six initial quantum functions in Scheme I, $MEAS$ makes a quite different behavior. It can change the norm as well as the dimensionality of quantum states, and that fact makes it irreversible in nature.
For this reason, it is often beneficial in practice to limit our attention to a subclass of $\squareqp$, called $\hatsquareqp$, which entirely prohibits the use of $MEAS$.

\begin{definition}
The notation $\hatsquareqp$ denotes the subclass of $\squareqp$ defined by Schemata I-IV except for $MEAS$ listed in Scheme I.
\end{definition}

With no use of $MEAS$ in Scheme I, every quantum function $f$ in $\hatsquareqp$ preserves the dimensionality of inputs; in other words, $f$ satisfies $\ell(f(\qubit{\phi})) = \ell(\qubit{\phi})$ for any input $\qubit{\phi}\in\HH_{\infty}$.

Next, let us demonstrate how to construct typical
unitary gates using our schematic definition.

\begin{lemma}\label{simple-function}
The following functions are in $\hatsquareqp$. Let $\qubit{\phi}$ be any element in $\HH_{\infty}$.
\begin{enumerate}\vs{-2}
  \setlength{\topsep}{-2mm}%
  \setlength{\itemsep}{1mm}%
  \setlength{\parskip}{0cm}%

\item $CNOT(\qubit{\phi}) = \left\{
\begin{array}{ll}
\qubit{\phi} & \mbox{if $\ell(\qubit{\phi})\leq1$,} \\
\qubit{0}\!\measure{0}{\phi} +
\qubit{1}\otimes NOT(\measure{1}{\phi}) & \mbox{otherwise.}
\end{array} \right.$
(controlled-NOT)

\item $Z_{1,\theta}(\qubit{\theta}) =
e^{i\theta}\qubit{0}\!\measure{0}{\phi} + \qubit{1}\!\measure{1}{\phi}$.

\item $zROT_{\phi}(\qubit{\phi}) =
e^{i\theta}\qubit{0}\!\measure{0}{\phi} +  e^{-i\theta}\qubit{1}\!\measure{1}{\phi}$.
(rotation around the $z$-axis)

\item  $GPS_{\theta}(\qubit{\phi}) = e^{i\theta} \qubit{\phi}$. (global phase shift)

\item $WH(\qubit{\phi}) = \frac{1}{\sqrt{2}}\qubit{0}\otimes
(\measure{0}{\phi} + \measure{1}{\phi}) +
\frac{1}{\sqrt{2}}\qubit{1}\otimes (\measure{0}{\phi} -
\measure{1}{\phi}) $. (Walsh-Hadamard transform)

\item $CPHASE_{\theta}(\qubit{\phi}) = \left\{
\begin{array}{ll}
\qubit{\phi} & \mbox{if $\ell(\qubit{\phi})\leq1$,} \\
\frac{1}{\sqrt{2}} \sum_{b\in\{0,1\}} ( \qubit{0}\!\measure{b}{\phi} + e^{i\theta b} \qubit{1}\!\measure{b}{\phi} ) & \mbox{otherwise.}
\end{array} \right.$
(controlled-PHASE)
\end{enumerate}
\end{lemma}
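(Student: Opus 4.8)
The plan is to construct each of the seven quantum functions explicitly as a finite composition of the initial quantum functions of Scheme I via the construction rules II--V, and to verify the claimed action by a direct computation on an arbitrary input $\qubit{\phi}$ (equivalently, by Lemma \ref{hatsquare-property}(2)--(3), on basic qustrings $\qubit{s}$). I would present the constructions in an order that lets later items reuse earlier ones. First I would handle $Z_{1,\theta}$: since $PHASE_{\theta}(\qubit{\phi}) = \qubit{0}\!\measure{0}{\phi} + e^{i\theta}\qubit{1}\!\measure{1}{\phi}$ applies the phase to the $\qubit{1}$ component while $Z_{1,\theta}$ wants it on the $\qubit{0}$ component, one takes $Z_{1,\theta} = Compo[NOT, Compo[PHASE_{\theta}, NOT]]$; conjugating by $NOT$ swaps the roles of the two basis states, and this gives descriptional complexity at most $3$. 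Next $GPS_{\theta}$: observe $e^{i\theta}\qubit{\phi} = Z_{1,\theta}$ applied to the $\qubit{0}$-branch plus $PHASE_{\theta}$ applied to the $\qubit{1}$-branch would require branching, but more simply $GPS_{\theta}(\qubit{\phi}) = e^{i\theta}(\qubit{0}\!\measure{0}{\phi} + \qubit{1}\!\measure{1}{\phi})$, so $GPS_{\theta} = Compo[Z_{1,\theta}, PHASE_{\theta}]$ — a phase on the $\qubit{1}$-part followed by a phase on the $\qubit{0}$-part — giving complexity at most $4$. Then $zROT_{\theta} = Compo[PHASE_{-2\theta}, Z_{1,\theta}]$ (or $Compo[GPS_{\theta}, PHASE_{-2\theta}]$ etc.), again complexity at most $4$.

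For $CNOT$ and $WH$ I would use the branching rule. For $WH$, $Branch[g,h](\qubit{\phi})$ splits off the first qubit and applies $g$ to the $\qubit{0}$-tail and $h$ to the $\qubit{1}$-tail, which is not quite the structure of the Hadamard on the first qubit; instead one wants an operation that acts on the first qubit itself. Here the key realization is that $ROT_{\theta}$, $PHASE_{\theta}$, $NOT$, $SWAP$ all act on the \emph{first} qubit of $\qubit{\phi}$, exactly as $WH$ should, so $WH$ is obtained from $ROT_{\pi/4}$ up to phases: a short calculation shows $WH = Compo[Z_{1,\pi}, ROT_{\pi/4}]$ or a similar two- or three-gate word (since $WH = R_{\pi/4}$ up to a sign convention, as already noted in Section \ref{sec:quantum-circuit}, one must check whether an extra phase correction is needed), giving complexity at most $3$. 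For $CNOT$, the natural route is the branching rule applied to the first qubit as control: $CNOT = Branch[I, NOT']$ where $NOT'$ negates the (new) first qubit of the tail — but one must be careful that $Branch$ peels off qubit $1$ and $NOT$ then acts on qubit $2$ of the original string, which is precisely the target; so $CNOT = Branch[I, NOT]$ works directly when $\ell(\qubit{\phi}) \geq 2$, and the $\ell \leq 1$ clause of $Branch$ returns $\qubit{\phi}$, matching the stated definition. That is complexity at most $3$ (the rule plus $I$ plus $NOT$).

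The last and most delicate item is $CPHASE_{\theta}$, which the statement flags as having descriptional complexity at most $15$. Unpacking its definition, $CPHASE_{\theta}(\qubit{\phi}) = \frac{1}{\sqrt2}\sum_b(\qubit{0}\!\measure{b}{\phi} + e^{i\theta b}\qubit{1}\!\measure{b}{\phi})$ — this first applies a Hadamard-like spread on the first qubit and then a controlled phase keyed to the \emph{second} qubit's value $b$. The construction I would give is: spread the first qubit with $WH$, then conditionally on the first qubit being $\qubit{1}$ apply $PHASE_{\theta}$ to the second qubit (a $CPHASE$-type gate built from $Branch$, or from $CNOT$-conjugated single-qubit phases via the standard circuit identity $\Lambda(Z_{2,\theta}) = (I\otimes Z_{2,\theta/2})\,CNOT\,(I\otimes Z_{2,-\theta/2})\,CNOT\,(Z_{1,\theta/2}\otimes I)$ up to conventions). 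Counting: $WH$ costs $3$, each $CNOT$ costs $3$, the three single-qubit phase gates (each a $Z_{1,\cdot}$ or $Z_{2,\cdot}=PHASE$, costing at most $3$ apiece but the bare $PHASE$ costing $1$), plus the compositions, and one arrives at the bound $15$. The main obstacle is bookkeeping: matching the exact sign/phase conventions (in particular the relation $WH = R_{\pi/4}$ noted earlier, which differs from the textbook Hadamard by a global phase, so phase-correction gates must be inserted and counted) and checking that each $Branch$-based gate acts on the intended qubit index. I expect no conceptual difficulty beyond this, since each required transformation is a known universal-gate identity and Lemma \ref{hatsquare-property} reduces verification to the action on basis states $\qubit{s}$.
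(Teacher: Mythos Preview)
Your approach for items (1)--(5) is essentially the paper's: the paper sets $CNOT = Branch[I,NOT]$, $Z_{1,\theta} = NOT\circ PHASE_{\theta}\circ NOT$, $zROT_{\theta} = Z_{1,\theta}\circ PHASE_{-\theta}$, $GPS_{\theta} = Z_{1,\theta}\circ PHASE_{\theta}$, and $WH = ROT_{\pi/4}\circ NOT$, which match your constructions up to trivial variants. Two minor corrections: your first formula for $zROT_{\theta}$, namely $PHASE_{-2\theta}\circ Z_{1,\theta}$, gives $e^{-2i\theta}$ on the $\qubit{1}$-component; the right exponent is $-\theta$ (your second alternative is fine). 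For $WH$, the paper precomposes with $NOT$ rather than postcomposes with a $Z$-type phase, which avoids the sign-bookkeeping you flagged.

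For item (6) you take a genuinely different route. You propose realizing $CPHASE_{\theta}$ by a Hadamard on qubit one followed by a controlled phase built from the standard $CNOT$-conjugation identity for $\Lambda(Z_{2,\theta})$. The paper instead uses the branching rule directly: $CPHASE_{\theta} = Branch[WH,\,f]$ with $f = Branch[I,GPS_{\theta}]\circ WH\circ NOT$. The paper's construction is cleaner in this schematic framework because it never needs to address a single-qubit gate to the \emph{second} qubit (which would otherwise require extra $SWAP$ or recursion machinery and inflate the complexity count). Note also a small misreading on your part: in the stated formula the index $b$ runs over the value of the \emph{first} qubit (since $\measure{b}{\phi}$ projects qubit one), not the second; once you read it that way, the nested-$Branch$ structure the paper uses becomes the natural fit.
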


\begin{proof}
It suffices to build each of the quantum functions given in the lemma from the initial quantum functions and by applying the construction rules.
These target functions are constructed as follows.
(1) $CNOT=Branch[I,NOT]$. Note that, when $\ell(\qubit{\phi})\leq1$, Scheme III(i) implies $CNOT(\qubit{\phi})=\qubit{\phi}$, matching Item 1 for $CNOT$.
(2) $Z_{1,\theta} = NOT \circ PHASE_{\theta}\circ NOT$.
(3) $zROT_{\theta}= Z_{1,\theta}\circ PHASE_{-\theta}$.
(4) $GPS_{\theta} = Z_{1,\theta} \circ PHASE_{\theta}$.
(5) $WH= ROT_{\frac{\pi}{4}}\circ NOT$.
(6) $CPHASE_{\theta} = Branch[WH,f]$, where $f=Branch[I,GPS_{\theta}]\circ WH\circ NOT$.
\end{proof}

Since any $\squareqp$-function $f$ is constructed by applying Schemata I--IV, an application of one of the schemata is viewed as a basic step of the construction process of generating $f$. This fact helps us define the \emph{descriptional complexity} of $f$ to be the minimal number of times we use those schemata in order to construct $f$. For instance, all the initial functions have descriptional complexity $1$ because they use Scheme I only once. As demonstrated in the proof of Lemma \ref{simple-function}, the quantum functions $CNOT$, $Z_{1,\theta}$, and $WH$  have descriptional complexity at most $3$ and $zROT_{\theta}$ and $GPS_{\theta}$ have descriptional complexity at most $4$, whereas $CPHASE_{\theta}$ is of descriptional complexity at most $15$. This complexity measure is essential in proving, e.g.,
Lemma \ref{hatsquare-property} since the lemma will be proven by induction on the descriptional complexity of a target quantum function.
In Section \ref{sec:descriptional-complexity}, we will give a short discussion on this complexity measure for a future study.

Fundamental properties of
$\hatsquareqp$-functions are given in the following lemma.
A quantum function from $\HH_{\infty}$ to
$\HH_{\infty}$ is called {\em norm-preserving} if
$\|f(\qubit{\phi})\|=\|\qubit{\phi}\|$ holds for all quantum states
$\qubit{\phi}$ in $\HH_{\infty}$.

\begin{lemma}\label{hatsquare-property}
Let $f$ be any quantum function in $\hatsquareqp$ and let $\qubit{\phi},\qubit{\psi}\in \HH_{\infty}$ and $\alpha\in\complex$.
\begin{enumerate}\vs{-2}
  \setlength{\topsep}{-2mm}%
  \setlength{\itemsep}{1mm}%
  \setlength{\parskip}{0cm}%

\item $f(\bm{0}) = \bm{0}$, where $\bm{0}$ is the null vector.
\item $f(\qubit{\phi}+\qubit{\psi}) = f(\qubit{\phi}) + f(\qubit{\psi})$.
\item $f(\alpha\qubit{\phi}) = \alpha \cdot f(\qubit{\phi})$.
\item $f$ is dimension-preserving and norm-preserving.
\end{enumerate}
\end{lemma}

\begin{proof}
Let $f$ be any $\hatsquareqp$-function, let $\qubit{\phi},\qubit{\psi}\in\HH_{\infty}$, and let $\alpha\in\complex$ and $\theta\in[0,2\pi)$ be constants.  As mentioned earlier, we will prove the lemma by induction of the \emph{descriptional complexity} of $f$. If $f$ is one of the initial quantum functions in Scheme I, then it is easy to check that it satisfies Conditions 1--4 of the lemma. In particular, when $\qubit{\phi}$ is the null vector $\bm{0}$, all of $e^{i\theta}\qubit{\phi}$, $\cos\theta\qubit{\phi}$, $\sin\theta\qubit{\phi}$,  $\measure{0}{\phi}$, $\measure{1}{\phi}$, and $\measure{ba}{\phi}$ used in Scheme I are $\bm{0}$; thus, $\qubit{b}\otimes \measure{1}{\phi}$ and $\qubit{b}\otimes \measure{0}{\phi}$ are also $\bm{0}$ for each bit $b\in\{0,1\}$. Therefore, Condition 1 follows.

Among Schemata II--IV, let us consider Scheme IV since the other schemata are easily shown to meet Conditions 1--4.
Let $g$, $h$, and $p$ be quantum functions in $\hatsquareqp$ and assume that  $p$ is dimension-preserving. By induction hypothesis, we assume that $g$, $h$, and $p$ satisfy Conditions 1--4. For readability, we write $f$ for $kQRec_t[g,h,p|\FF_k]$. In what follows, we are focused only on Conditions 2 and 4 since the other conditions are easy to check. Our argument will  employ induction on the length of input $\qubit{\phi}$ given to $f$.

(i) Our goal is to show that $f$ satisfies Condition 2. Firstly, consider the case of $\ell(\qubit{\phi})\leq t$. It then follows that $f(\qubit{\phi}+\qubit{\xi}) = g(\qubit{\phi}+\qubit{\xi}) = g(\qubit{\phi}) + g(\qubit{\xi})$ since $g$ is assumed to meet Condition 2.
Next, we consider the case where $\ell(\qubit{\phi})>t$. From the definition of $f$, we obtain $f(\qubit{\phi}+\qubit{\xi}) = h(\sum_{s:|s|=k} \qubit{s}\otimes f_s(\measure{s}{\psi_{p,\phi,\xi}}))$, where  $\qubit{\psi_{p,\phi,\xi}} = p(\qubit{\phi}+\qubit{\xi})$. Since $p(\qubit{\phi}+\qubit{\xi}) = p(\qubit{\phi}) + p(\qubit{\xi})$ by induction hypothesis, we conclude that  $\measure{s}{\psi_{p,\phi,\xi}} = \measure{s}{\psi_{p,\phi}} + \measure{s}{\psi_{p,\xi}}$ for each string $s\in\{0,1\}^k$.
It then follows  by induction hypothesis that  $f_s(\measure{s}{\psi_{p,\phi,\xi}}) = f_s(\measure{s}{\psi_{p,\phi}})+f_s(\measure{s}{\psi_{p,\xi}})$, leading to the equation $\qubit{s}\otimes f_s(\measure{s}{\psi_{p,\phi,\xi}}) = \qubit{s}\otimes f_s(\measure{s}{\psi_{p,\phi}}) + \qubit{s}\otimes f_s(\measure{s}{\psi_{p,\xi}})$.
Using Condition 2 for $h$, we obtain $h(\sum_{s:|s|=k}\qubit{s}\otimes f_s(\measure{s}{\psi_{p,\phi,\xi}})) = \sum_{s:|s|=k} h(\qubit{s}\otimes f_s(\measure{s}{\psi_{p,\phi}})) + \sum_{s:|s|=k} h(\qubit{s}\otimes f_s(\measure{s}{\psi_{p,\xi}}))$. We then conclude that $f(\qubit{\phi}+\qubit{\xi}) = f(\qubit{\phi})+f(\qubit{\xi})$.

(ii) We want to show that $f$ satisfies Condition 4. By induction hypothesis, it follows that, for any $s\in\{0,1\}^k$,  $\|f_s(\measure{s}{\psi_{p,\phi}})\| = \|\measure{s}{\psi_{p,\phi}}\|$ and $\|h(\qubit{\phi})\|=\|\qubit{\phi}\|$. These equalities imply that $\|f(\qubit{\phi})\|^2 = \| h( \sum_{s:|s|=k} \qubit{s}\otimes f_s(\measure{s}{\psi_{p,\phi}})) \|^2
= \|\sum_{s:|s|=k} \qubit{s}\otimes f_s(\measure{s}{\psi_{p,\phi}}) \|^2
= \sum_{s:|s|=k} \| f_s(\measure{s}{\psi_{p,\phi}}) \|^2$, which equals $\sum_{s:|s|=k} \|\measure{s}{\psi_{p,\phi}}\|^2$.
The last term coincides with $\|\qubit{\psi_{p,\phi}}\|^2$, which equals $\|p(\qubit{\phi})\|^2$.  This implies Condition 4 because $\|p(\qubit{\phi})\| = \|\qubit{\phi}\|$.
\end{proof}

Lemma \ref{hatsquare-property}(4) indicates that all functions in $\hatsquareqp$ also serve as functions mapping $\Phi_{\infty}$ to $\Phi_{\infty}$ in place of $\HH_{\infty}$ to $\HH_{\infty}$.

Given a quantum function $g$ that is dimension-preserving and norm-preserving, the \emph{inverse} of $g$ is a unique quantum function $f$ satisfying the condition that, for every $\qubit{\phi}\in\HH_{\infty}$, $f\circ g(\qubit{\phi}) = g\circ f(\qubit{\phi}) =\qubit{\phi}$. This special quantum function $f$ is expressed as $g^{-1}$.

The next proposition guarantees that $\hatsquareqp$ is closed
under ``inverse'' since $\hatsquareqp$ lacks $MEAS$.

\begin{proposition}\label{inverse}
For any quantum function $g\in\hatsquareqp$, $g^{-1}$ exists and belongs to $\hatsquareqp$.
\end{proposition}

\begin{proof}
We prove this proposition by induction on the aforementioned \emph{descriptional complexity} of $g$. If $g$ is one of the initial quantum functions, then we define its inverse $g^{-1}$ as: $I^{-1}=I$, $PHASE^{-1}_{\theta} = PHASE_{-\theta}$, $ROT^{-1}_{\theta} = ROT_{-\theta}$, $NOT^{-1} = NOT$, and $SWAP^{-1} = SWAP$. If $g$ is obtained from another quantum function or functions by one of the construction rules, then its inverse is defined as: $Compo[g,h]^{-1} = Compo[h^{-1},g^{-1}]$, $Branch[g,h]^{-1} = Branch[g^{-1},h^{-1}]$, and $kQRec_t[g,h,p|\{f_s\}_{s\in\{0,1\}^k}]^{-1} = kQRec_t[g^{-1},p^{-1},h^{-1}|\{f_s^{-1}\}_{s\in\{0,1\}^k}]$.
\end{proof}


Our choice of Schemata I--IV is motivated by a particular universal set of quantum gates. Notice that a different choice of initial quantum functions and construction rules may lead to a different set of $\squareqp$-functions. Scheme I uses an ``arbitrary'' angle of $\theta$ in $[0,2\pi)\cap\ptcomplex$ to introduce $PHASE_{\theta}$ and $ROT_{\theta}$; however, we can restrict $\theta$ to a unique value of $\frac{\pi}{4}$ since $PHASE_{\theta}$ and $ROT_{\theta}$ for an arbitrary value $\theta\in[0,2\pi)$ can be approximated to any desired accuracy using $WH$ and $Z_{2,\frac{\pi}{4}}$.
This last part comes from the fact that any single-qubit unitary operator can be approximated to any accuracy from the quantum gates $WH$ and $Z_{2,\frac{\pi}{4}}$ \cite{BMP+99} (see also \cite{NC00}). For a further discussion on the choice of the schemata, see Section \ref{sec:open-question}.

\subsection{Construction of More Complicated Quantum Functions}\label{sec:lemma}

Before presenting the main theorem (Theorem \ref{theorem:character}), we wish to prepare useful quantum functions and new construction rules derived directly from Schemata I--IV. These quantum functions and construction rules will be used for the proof of our key lemma (Lemma \ref{lemma:second-part}), which supports the main theorem.

For each $k\in\nat^{+}$, let us assume the standard lexicographic ordering $<$ on $\{0,1\}^k$ and all elements in $\{0,1\}^k$ are enumerated lexicographically as  $s_1<s_2<\cdots <s_{2^k}$.  For example, when $k=2$, we obtain $00<01<10<11$.
Given each string $s\in\{0,1\}^n$,  $s^R$ denotes the {\em reversal}
of $s$; that is,
$s^R=s_ns_{n-1}\cdots s_2s_1$ if $s=s_1s_2\cdots s_{n-1}s_n$.
We expand this notion to quantum states in the following manner.
Given a quantum state $\qubit{\phi}\in\HH_{2^n}$,  the {\em reversal} of
$\qubit{\phi}$,  denoted by $\qubit{\phi^R}$, is of the form  $\sum_{s:|s|=n}\measure{s}{\phi} \otimes \qubit{s^R}$, where $\measure{s}{\phi}$ is merely a scalar. For instance, if $\qubit{\phi}=\alpha\qubit{01}+\beta\qubit{10}$, then $\qubit{\phi^R} = \alpha\qubit{10}+\beta\qubit{01}$.

\begin{lemma}\label{lemma:special}
Let $k\in\nat$ with $k\geq2$, let $g,h\in\hatsquareqp$, and let $\GG_k=\{g_s\mid s\in\{0,1\}^k\}$  be a set of $\hatsquareqp$-functions. The following quantum functions all belong to $\hatsquareqp$. The lemma also holds even if  $\hatsquareqp$ is replaced by  $\squareqp$. Let $\qubit{\phi}$ be any quantum state in $\HH_{\infty}$.
\begin{enumerate}\vs{-2}
  \setlength{\topsep}{-2mm}%
  \setlength{\itemsep}{1mm}%
  \setlength{\parskip}{0cm}%

\item $Compo[\GG_k](\qubit{\phi}) = g_{s_1}\circ g_{s_2}\circ \cdots \circ  g_{s_{2^k}}(\qubit{\phi})$. (multiple composition)

\item $Switch_k[g,h](\qubit{\phi}) =
g(\qubit{\phi})$ if $\ell(\qubit{\phi}) < k$ and  $Switch_k[g,h](\qubit{\phi}) =
h(\qubit{\phi})$ otherwise. (switching)

\item $LENGTH_k[g](\qubit{\phi}) = \qubit{\phi}$ if $\ell(\qubit{\phi})< k$  and $LENGTH_k[g](\qubit{\phi}) = g(\qubit{\phi})$ otherwise.

\item $REMOVE_k(\qubit{\phi}) = \qubit{\phi}$ if $\ell(\qubit{\phi})< k$ and  $REMOVE_k(\qubit{\phi})= \sum_{s:|s|=k}\measure{s}{\phi}\otimes \qubit{s}$   otherwise.

\item $REP_k(\qubit{\phi}) = \qubit{\phi}$ if $\ell(\qubit{\phi})< k$ and $REP_k(\qubit{\phi}) =  \sum_{s:|s|=n-k} \measure{s}{\phi}\otimes \qubit{s}$ otherwise.

\item $SWAP_{k}(\qubit{\phi}) = \qubit{\phi}$ if $\ell(\qubit{\phi})<2k$ and $SWAP_{k}(\qubit{\phi}) = \sum_{s:|s|=k}\sum_{t:|t|=k} \qubit{st}\!\measure{ts}{\phi}$ otherwise.

\item $REVERSE(\qubit{\phi})= \qubit{\phi^R}$.

\item $Branch_k[\GG_k](\qubit{\phi})= \qubit{\phi}$  if $\ell(\qubit{\phi})<  k$ and $Branch_k[\GG_k](\qubit{\phi})= \sum_{s:|s|=k}\qubit{s}\otimes g_s(\measure{s}{\phi})$ otherwise.

\item $RevBranch_k[\GG_k](\qubit{\phi})= \qubit{\phi}$ if $\ell(\qubit{\phi})< k$ and $RevBranch_k[\GG_k](\qubit{\phi})= \sum_{s:|s|=k}g_s\left(\sum_{u:|u|=n-k}\measure{us}{\phi}\otimes \qubit{u}\right) \otimes\qubit{s}$ otherwise, where $n=\ell(\qubit{\phi})$.
\end{enumerate}
\end{lemma}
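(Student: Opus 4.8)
The plan is to build each of the eight quantum functions explicitly from the initial quantum functions in Scheme~I and the construction rules II--V, mimicking the style of the proof of Lemma~\ref{simple-function}, and then to observe that in every construction the only schemata used are those whose $MEAS$-free restriction still makes sense; hence the same argument works verbatim with $\hatsquareqp$ replaced by $\squareqp$. Throughout I would freely use Lemma~\ref{simple-function} (so $SWAP$, $WH$, $GPS_\theta$, etc.\ are available) and Lemma~\ref{hatsquare-property}, which guarantees that $\hatsquareqp$-functions are linear, dimension-preserving, and norm-preserving, so that it suffices to check the defining equations on basic qustrings $\qubit{s}$.

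The easy items come first. For (1), $Compo[\GG_k]$ is just a $(2^k-1)$-fold iteration of the binary rule $Compo[\cdot,\cdot]$ from Scheme~II, associated so that $g_{s_1}$ is applied last. For (2), $LENGTH_k[g] = Switch_{k-1}[I,g]$ directly from Scheme~III, using that $\ell(\qubit{\phi})<k$ iff $\ell(\qubit{\phi})\le k-1$. For (6), $REVERSE$ is a composition of nearest-neighbour $SWAP$'s: on an $n$-qubit input the reversal permutation is generated by adjacent transpositions, and each adjacent transposition of qubits $i,i+1$ is realized by conjugating $SWAP$ (which acts on the first two qubits) by the appropriate $Branch$-nesting that ``exposes'' qubits $i,i+1$; more cleanly, I would define it recursively via $QRec$, reversing the tail and then bubbling the first qubit to the end. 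For (5), $SWAP_k$ (swapping two adjacent $k$-qubit blocks) is obtained from $SWAP_1$-type adjacent swaps together with $REVERSE$, or again by a direct $QRec$ recursion on $k$ blocks; and $SWAP_1$ itself is handled by an induction that uses $Branch$ to peel off a leading qubit and $SWAP$ to transpose the two distinguished qubits. Items (3) $REMOVE_k$ and (4) $REP_k$ are variants that move the first $k$ qubits to the tail (respectively restructure the register), again expressible as compositions of block swaps, i.e.\ in terms of $SWAP_k$, $REVERSE$, and $LENGTH_k$.

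The genuinely substantive items are (7) $Branch_k[\GG_k]$ and (8) $RevBranch_k[\GG_k]$. For (7), the idea is to iterate the binary branching rule $k$ times: $Branch_1$ is essentially Scheme~IV itself, and having $Branch_{k-1}$ one wants $Branch_k[\{g_s\}_{|s|=k}]$ to first split off the leading qubit with $Branch$ and then, on each branch $\qubit{b}\otimes(\text{rest})$, invoke $Branch_{k-1}$ with the family $\{g_{bu}\}_{|u|=k-1}$; the bookkeeping is to package the $2^k$ functions into the two size-$2^{k-1}$ subfamilies and to wrap everything with $LENGTH_k$ so the ``short input'' clause is correct. The cleanest route is probably to use the $k$-qubit quantum recursion $kQRec$ of Scheme~V directly with $p = I$ and appropriate $g,h$, since $kQRec_t[g,h,I\mid\FF_k]$ already has exactly the shape $h(\sum_{|s|=k}\qubit{s}\otimes f_s(\measure{s}{\phi}))$ --- but here $\FF_k$ is constrained to contain copies of $I$ and of $kQRec$ itself, not arbitrary $g_s$, so one must instead realize the arbitrary family by composing $kQRec$-with-$I$'s against pre/post-processing that routes the relevant $g_s$ to the right branch, which is where care is needed. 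Item (8) is $Branch_k$ conjugated by the block-reversal that brings the \emph{last} $k$ qubits to the front (applying each $g_s$ to the other $n-k$ qubits while the address register $s$ sits at the end); so $RevBranch_k[\GG_k] = R^{-1}\circ Branch_k[\GG_k]\circ R$ for a suitable composite $R$ of $REVERSE$ and $SWAP_k$, using Proposition~\ref{inverse} to know $R^{-1}\in\hatsquareqp$. The main obstacle I anticipate is precisely this plumbing in (7)--(8): getting the recursion in Scheme~V --- whose $\FF_k$ may only reuse $I$ and the recursive call itself --- to simulate a branch with $2^k$ genuinely different, externally supplied functions $g_s$, and verifying the length-guard cases ($\ell(\qubit{\phi})<k$, and in (8) the splitting of the remaining $n-k$ qubits) match the stated formulas on the nose; all of this is routine but bookkeeping-heavy, and one checks it on basic qustrings and extends by linearity via Lemma~\ref{hatsquare-property}. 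Finally, since none of these constructions invokes $MEAS$, each function lies in $\hatsquareqp$; allowing $MEAS$ back in only enlarges the ambient class, so the same constructions witness membership in $\squareqp$, proving the last sentence of the lemma.
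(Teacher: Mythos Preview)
Your overall plan is sound and items (1), (2), (7), and (8) are essentially the paper's constructions: iterated $Compo$, $Switch_{k-1}[I,g]$, a $k$-fold nesting of the binary $Branch$ rule, and conjugation by a cyclic permutation. In particular, the ``main obstacle'' you flag in (7) is not real: the simple inductive definition $g^{(i+1)}_s = Branch[g^{(i)}_{s0},g^{(i)}_{s1}]$ with $g^{(0)}_s=g_s$ and $Branch_k[\GG_k]=g^{(k)}_\lambda$ already does the job using only Scheme~IV, with no need to coax $kQRec$ into accepting arbitrary $g_s$'s. For (8) the paper writes the conjugation as $REMOVE_k\circ Branch_k[\GG_k]\circ REP_k$, which is exactly your $R^{-1}\circ Branch_k\circ R$ once one notices $REP_k=REMOVE_k^{-1}$.

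Where your route genuinely diverges from the paper is the dependency order in (3)--(6). You propose to build $REVERSE$ and $SWAP_k$ first, then derive $REMOVE_k$ and $REP_k$ from them. But your recursive description of $REVERSE$ (``reverse the tail and bubble the first qubit to the end'') already presupposes $REMOVE_1$, and extracting a cyclic shift from reversal and block-swaps on a variable-length register is awkward without it. The paper avoids this tangle by going the other way: it defines $REMOVE_1=QRec_1[I,I,SWAP\,|\,REMOVE_1,REMOVE_1]$ and $REP_1=QRec_1[I,SWAP,I\,|\,REP_1,REP_1]$ directly from Scheme~V with only $SWAP$ as ingredient, then sets $REMOVE_k$, $REP_k$ as $k$-fold compositions wrapped in $LENGTH_k$, builds $SWAP_{i,j}$ from $REP_{i-1}\circ SWAP\circ REMOVE_{i-1}$ and adjacent-transposition recursion, and finally defines $REVERSE=QRec_1[I,REMOVE_1,I\,|\,REVERSE,REVERSE]$. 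This ordering keeps every step a one-line $QRec$ instance and eliminates the circularity in your sketch.
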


The difference between $REMOVE_k$ and $REP_k$ is subtle but $REMOVE_k$ moves the first $k$ qubits of an input to the end, whereas $REP_k$ moves
the last $k$ qubits to the front.
For basic qustrings of length $4$, for instance, it holds that $REP_{1}(\qubit{a_1a_2a_3a_4}) = \qubit{a_4a_1a_2a_3}$ and $REMOVE_1(\qubit{a_1a_2a_3a_4})
= \qubit{a_2a_3a_4a_1}$.
Similarly, $Branch_k[\GG_k]$ applies each $g_s$ in $\GG_k$ to the quantum state obtained from $\qubit{\phi}$ by eliminating the first $k$ qubits
whereas
$RevBranch_k[\GG_k]$ applies $g_s$ to
the quantum state obtained by eliminating the last $k$ qubits, whenever  $k\leq \ell(\qubit{\phi})$. For example, if $\GG_k=\{h\}_{s\in\{0,1\}^k}$ for a single quantum function $h$, then, for every string $s\in\{0,1\}^k$, we obtain $Branch_k[\GG_k](\qubit{s}\qubit{\phi}) = \qubit{s}\otimes h(\qubit{\phi})$ and $RevBranch_k[\GG_k](\qubit{\phi}\qubit{s}) = h(\qubit{\phi}) \otimes \qubit{s}$.

\vs{-2}
\begin{proofof}{Lemma \ref{lemma:special}}\hs{2}
Let $k\in\nat^{+}$, $g\in\hatsquareqp$, and $\GG_k=\{g_s\mid s\in\{0,1\}^k\}\subseteq \hatsquareqp$. For each index $i\in[k]$, let $s_i$ denote lexicographically the $i$th element of $\{0,1\}^k$.

1) We first  set $f_{2^k}=g_{s_{2^k}}$ and inductively define $f_{i-1}= Compo[g_{s_{i-1}},f_i]$ for every index $i\in[2,2^k]_{\integer}$ to obtain $f_1 = Compo[\GG_k]$. The resulted quantum function $f_1$ clearly belongs to $\hatsquareqp$ because $k$ is a fixed constant independent of inputs to $f_1$.

2) This is a special case of the single-qubit quantum recursion rule (or the $1$-qubit quantum recursion rule) in which $t=k-1$, $p=I$, and $f_0=f_1=I$. Therefore, $Switch_k[g,h]$ belongs to $\hatsquareqp$.

3) Since $LENGTH_k[g] = Switch_{k}[I,g]$, $LENGTH_k$ is in $\hatsquareqp$.

4) We begin with the case of $k=1$. The desired quantum function $REMOVE_1$ is defined as
\[
REMOVE_1(\qubit{\phi}) = \left\{
\begin{array}{ll}
\qubit{\phi} & \mbox{if $\ell(\qubit{\phi})\leq 1$,} \\
\sum_{a\in\{0,1\}} \qubit{a}\otimes REMOVE_1(\measure{a}{\psi_{SWAP,\phi}}) & \mbox{otherwise,}
\end{array} \right.
\]
More formally, we set $REMOVE_1 = QRec_1[I,I,SWAP|REMOVE_1,REMOVE_1]$.
In the case of $k=2$, we define $REMOVE_2=REMOVE_1\circ REMOVE_1$. For each index $k\geq3$, $REMOVE_k$ is obtained as follows. Let $h'_k$ be the $k$ compositions of $REMOVE_1$ and define $REMOVE_k$ to be $LENGTH_{k}[h'_k]$. We remark that $LENGTH_{k}$ is necessary in our definition because $h'_k$ is not guaranteed to equal $REMOVE_k$ when $\ell(\qubit{\phi})\leq k-1$.

5) First, we define $REP_1$ as $REP_1 =QRec_1[I,SWAP,I|REP_1,REP_1]$, that is,
\[
REP_1(\qubit{\phi}) = \left\{
\begin{array}{ll}
 \qubit{\phi} & \mbox{if $\ell(\qubit{\phi})\leq 1$,} \\
\sum_{a\in\{0,1\}} SWAP(\qubit{a}\otimes
REP_1(\measure{a}{\phi})) & \mbox{otherwise.}
\end{array} \right.
\]
Clearly, $REP_1$ belongs to $\hatsquareqp$.  For a general index
$k>1$, we define $REP_k$ in the following way. We first set $h'_k$ to be the $k$ compositions of $REP_1$. Finally, we set $REP_k = LENGTH_{k}[h'_k]$.

6) We first realize a quantum function $SWAP_{i,i+j}$, which swaps between the $i$th and the $(i+j)$th qubits of any input.
This goal is achieved by inductively constructing $SWAP_{i,i+j}$ in the following way.  Initially, we set $SWAP_{i,i+1}= REP_{i-1}\circ SWAP \circ REMOVE_{i-1}$. For any index $j\in[k-i]$, we define $SWAP_{i,i+j} = SWAP_{i+j-1,i+j}\circ
SWAP_{i,i+j-1}\circ  SWAP_{i+j-1,i+j}$. We then set $g = SWAP_{k,2k}\circ SWAP_{k-1,2k-1}\circ \cdots \circ SWAP_{2,k+2}\circ SWAP_{1,k+1}$. At last, it suffices to define $SWAP_{k}$ to be $LENGTH_{2k}[g]$.

7) The desired quantum function $REVERSE$ can be defined as $REVERSE = QRec_1[I,REMOVE_1,I|REVERSE,REVERSE]$, namely,
\[
REVERSE(\qubit{\phi}) = \left\{
\begin{array}{ll}
\qubit{\phi} &
\mbox{if $\ell(\qubit{\phi})\leq 1$,} \\
REMOVE_1( \sum_{a\in\{0,1\}} (\qubit{a}\otimes
REVERSE(\measure{a}{\phi}))) & \mbox{otherwise.}
\end{array} \right.
\]

8) Note that, when $k=2$,  $Branch_k[\GG_k]$ coincides with $Branch$ and it thus belongs to $\hatsquareqp$. Hereafter, we assume that $k\geq3$.
For each string $s\in\{0,1\}^k$, let $g_s^{(0)}=g_s$. For each index $i\in\nat$ with $i<k$ and each string $s\in\{0,1\}^*$ with $|s|=k-i-1$, we inductively define $g_s^{(i+1)}$ to be $Branch[g_{s0}^{(i)},g_{s1}^{(i)}]$, that is,
\[
 g^{(i+1)}_s(\qubit{\phi}) = \left\{
 \begin{array}{ll}
\qubit{\phi} & \mbox{if $\ell(\qubit{\phi}\leq 1$,} \\
\qubit{0}\otimes
g^{(i)}_{s0}(\measure{0}{\phi}) + \qubit{1}\otimes
g^{(i)}_{s1}(\measure{1}{\phi}) & \mbox{otherwise.}
\end{array}  \right.
\]
Finally, we set $Branch_k[\GG_k]= g^{(k)}_{\lambda}$, where $\lambda$ is the empty string.

9) Assuming $k\geq2$, let $RevBranch_k[\{g_s\}_{s\in\{0,1\}^k}]=
REMOVE_k\circ Branch_k[\{g_s\}_{s\in\{0,1\}^k}] \circ REP_k$.
\end{proofof}


The next lemma shows that we can extend any classical bijection on $\{0,1\}^k$ to its associated $\hatsquareqp$-function, which behaves  in exactly the same way as the
bijection does on the first $k$ bits of its input.

\begin{lemma}\label{bijection-function}
Let $k$ be a constant in $\nat^{+}$. For any bijection $f$ from $\{0,1\}^k$ to
$\{0,1\}^k$, there exists a $\squareqp$-function $g_f$ such that, for any quantum state $\qubit{\phi}\in\HH_{\infty}$,
\[
g_f(\qubit{\phi})= \left\{
\begin{array}{ll}
\qubit{\phi} & \mbox{if $\ell(\qubit{\phi})\leq k-1$,} \\
\sum_{s\in
\{0,1\}^k}  \qubit{f(s)}\!\measure{s}{\phi} & \mbox{otherwise.}
\end{array} \right.
\]
\end{lemma}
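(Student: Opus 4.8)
The plan is to realize $g_f$ as a finite composition of the ``transposition gates'' $SWAP_{i,j}$ already produced inside the proof of Lemma~\ref{lemma:special}(5), exploiting the fact that every bijection (permutation) of the $2^k$ basic strings in $\{0,1\}^k$ is not directly a product of qubit-swaps, so instead I would first reduce to the case where $f$ is \emph{implemented by a reversible classical circuit on $k$ wires}. Concretely, since $k$ is a fixed constant, $f:\{0,1\}^k\to\{0,1\}^k$ is a fixed permutation, and it is a standard fact that any permutation of $\{0,1\}^k$ can be written as a finite product of generalized-Toffoli-type maps (controlled negations), each of which acts on a bounded number of qubits among the first $k$; moreover each such controlled negation is itself a $\hatsquareqp$-function, obtainable by repeatedly applying $Branch$, $NOT$, and the $SWAP_{i,j}$ from Lemma~\ref{lemma:special} to bring the relevant control/target qubits into the leading positions. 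Thus it suffices to (a) fix a decomposition of $f$ into such elementary reversible pieces, (b) exhibit each piece as a $\squareqp$-function that leaves inputs of length $\le k-1$ untouched, and (c) compose them, wrapping the whole composition in $LENGTH_{k}[\cdot]$ (Lemma~\ref{lemma:special}(2)) so that short inputs are mapped identically, as required by the statement.

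In more detail, the key steps in order are as follows. First, I would record the elementary building block: for any index set $C\subseteq[k]$, any pattern $c\in\{0,1\}^{C}$, and any target $j\in[k]\setminus C$, the ``if qubits in $C$ read $c$ then negate qubit $j$'' operation, call it $X_{j}^{C=c}$, lies in $\hatsquareqp$; this follows by permuting qubits $C\cup\{j\}$ to the front via iterated $SWAP_{i,i+1}$'s (Lemma~\ref{lemma:special}(5)), applying a fixed depth-$|C|$ nesting of $Branch[\cdot,\cdot]$ whose innermost leaves are $NOT$ on the appropriate branches and $I$ elsewhere, and then undoing the permutation — all finitely many schema applications since $k$ is constant. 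Second, I would invoke the classical fact that the symmetric group on $\{0,1\}^k$ is generated by such controlled-NOT operations (equivalently, every reversible Boolean function on $k$ bits is a product of generalized Toffoli gates — a textbook result on reversible circuits), so write $f = \pi_m\circ\cdots\circ\pi_1$ with each $\pi_i = X_{j_i}^{C_i=c_i}$. Third, set $g_f' = Compo[\pi_1^{(\mathrm{fun})},\ldots]$, i.e. the iterated $Compo$ of the $\hatsquareqp$-functions realizing $\pi_m,\dots,\pi_1$; by Lemma~\ref{hatsquare-property}(2,3) these functions are linear, so on every basic qustring $\ket{s}$ with $|s|\ge k$ the composition sends the first $k$ qubits $u$ to $f(u)$ and fixes the rest, hence $g_f'(\ket{\phi}) = \sum_{s:|s|=k}\ket{f(s)}\!\measure{s}{\phi}$ for all $\ell(\ket\phi)\ge k$, again by linearity. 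Finally, set $g_f = LENGTH_{k}[g_f']$ to obtain the stated identity for $\ell(\ket\phi)\le k-1$ as well; all of this stays within $\hatsquareqp$ (hence within $\squareqp$).

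The main obstacle I anticipate is \emph{step two}: cleanly citing or proving that every permutation of $\{0,1\}^k$ decomposes into a bounded product of controlled-negations of the form $X_j^{C=c}$. The underlying statement is classical (it is the reversibility analogue of functional completeness, and generalized Toffoli gates are universal for reversible computation), but one must be a little careful that each generator negates a \emph{single} qubit conditioned on a fixed pattern of the others — the cleanest route is: any permutation is a product of transpositions of two basic strings $s\leftrightarrow s'$, and a transposition $(s\,s')$ with $s,s'$ differing in exactly one coordinate $j$ is precisely $X_j^{[k]\setminus\{j\}=s_{\restriction[k]\setminus\{j\}}}$, while an arbitrary transposition is conjugated into such a form by first applying controlled-NOTs that make $s$ and $s'$ Hamming-adjacent. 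Since $k$ is a fixed constant, the number of generators is a constant, so descriptional complexity stays bounded and every $Compo$/$Branch$/$SWAP_{i,j}$ invoked is legitimate. A secondary, purely bookkeeping point is checking that the ``undo the permutation'' half of each $X_j^{C=c}$ really restores all qubits (including those in positions $>k$), which is immediate because $SWAP_{i,i+1}$ is its own inverse and $Branch$ acts trivially outside its controlled branch.
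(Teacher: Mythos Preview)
Your proposal is correct and follows the same overall outline as the paper: reduce to building a function $h$ that realizes $f$ on the first $k$ qubits of any longer input, then set $g_f=LENGTH_k[h]$. Where you differ is in how the pieces are implemented. The paper simply observes that any bijection on $\{0,1\}^k$ is a product of transpositions of basis strings and asserts in one line that ``this can be done by $SWAP_{i,j}$,'' leaving the reader to unpack how a qubit-swap operator actually yields a basis-string transposition. You instead decompose $f$ into generalized Toffoli maps $X_j^{C=c}$ and spell out explicitly how each is built from $SWAP_{i,j}$ conjugation plus nested $Branch[\cdot,\cdot]$ with a $NOT$ at the appropriate leaf; you then recover the paper's transposition route as the special case $|C|=k-1$ and note how arbitrary transpositions are conjugated into Hamming-adjacent ones. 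Your version is more explicit and self-contained, at the cost of a longer argument; the paper's version is a two-line sketch that relies on the reader filling in essentially the same reversible-circuit reasoning you wrote out.
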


\begin{proof}
Given a bijection $f$ on $\{0,1\}^k$, it suffices to show the existence of $\hatsquareqp$-function $h$ satisfying $h(\qubit{s}\qubit{\phi}) = \qubit{f(s)}\qubit{\phi}$ for any string $s\in\{0,1\}^k$ and any quantum state $\qubit{\phi}\in\HH_{\infty}$ since $g_f$ is obtained from $h$ simply by setting $g_f=LENGTH_{k}[h]$. Notice that, if $h\in\hatsquareqp$, $h(\bm{0})=\bm{0}$ makes $g_f(\bm{0})$ equal $\bm{0}$.

A bijection on $\{0,1\}^k$ is, in essence, a \emph{permutation} on $\{s_1,s_2,\ldots,s_{2^k}\}$, where each $s_i$ is lexicographically the $i$th string in $\{0,1\}^k$, and thus it can be expressed as the multiplication of a finite number of \emph{transpositions}, each of which
swaps between two distinct numbers. This can be done by an application of the multiple composition of the corresponding $SWAP_{i,i+j}$, which is defined in the proof of Lemma \ref{lemma:special}(6).
Therefore, $h$ belongs to $\hatsquareqp$.
\end{proof}


Given a quantum state $\qubit{\phi}$, it is possible to apply simultaneously a quantum function $f$ to the first $k$ qubits of $\qubit{\phi}$ and another quantum function $g$ to the rest.

\begin{lemma}\label{two-diff-function}
Given $f,g\in\hatsquareqp$ and $k\in\nat^{+}$, the quantum function $f^{\leq k}\otimes g$, which is defined by
\[
(f^{\leq k}\otimes g)(\qubit{\phi})= \left\{
\begin{array}{ll}
f(\qubit{\phi}) & \mbox{if $\ell(\qubit{\phi})\leq k$,} \\
\sum_{s\in
\{0,1\}^k} f( \qubit{s}) \otimes g(\measure{s}{\phi}) & \mbox{otherwise,}
\end{array} \right.
\]
belongs to $\hatsquareqp$. We write this $g$ as $Skip[f]$.
\end{lemma}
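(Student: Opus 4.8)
The plan is to construct $f^{\leq k}\otimes g$ by composing three building blocks already available from Lemma~\ref{lemma:special}: the function $Branch_k[\GG_k]$ that applies a chosen operation to the tail after reading the first $k$ qubits, a mechanism to apply $f$ to the first $k$ qubits, and the length-guard $LENGTH_k[\cdot]$ (or $Switch_{k-1}[\cdot,\cdot]$) to handle the short-input case $\ell(\qubit{\phi})\le k$. First I would dispose of the case $k=1$: here I can write $(f^{\le 1}\otimes g)$ directly as $Switch_1[f,\, h]$ where $h=Branch[g,g]\circ (\text{apply }f\text{ to the first qubit})$; but since $f$ is a $\hatsquareqp$-function on one qubit, $Branch[g,g]$ followed by a one-qubit operation needs care, so it is cleaner to treat general $k$ uniformly via the following three-step decomposition.

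Here is the construction for $\ell(\qubit{\phi})>k$. Step one: apply $Branch_k[\GG_k]$ with $\GG_k=\{g\}_{s\in\{0,1\}^k}$ (every component equal to $g$); by the worked example following Lemma~\ref{lemma:special}, this sends $\qubit{s}\otimes\measure{s}{\phi}$ to $\qubit{s}\otimes g(\measure{s}{\phi})$, i.e.\ it realizes $I^{\le k}\otimes g$ on inputs of length exceeding $k$. Step two: apply $f$ to the first $k$ qubits only, leaving the rest untouched. This second operation is exactly the $\squareqp$-function whose existence is the content of Lemma~\ref{bijection-function} when $f$ is a classical bijection on $\{0,1\}^k$; for a general $f\in\hatsquareqp$ I would instead invoke $LENGTH_k$ applied to $f$ itself combined with the tail-ignoring structure — more precisely, I claim that $LENGTH_{k+1}[\,f\,]$ does not suffice because $f$ must be made to act trivially on the tail. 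The correct object is: use $Branch_k$-type reasoning in reverse, or directly note that $f$, being in $\hatsquareqp$, is linear and dimension-preserving (Lemma~\ref{hatsquare-property}), so the map $\qubit{s}\otimes\qubit{\psi}\mapsto f(\qubit{s})\otimes\qubit{\psi}$ for $|s|=k$ is well-defined and equals $LENGTH_k$ applied to a suitably "tail-padded" version of $f$. Cleanly: define $\tilde f$ by iterating $Branch$ so that $\tilde f$ reads off nothing but applies $f$ to the leading block — concretely, $\tilde f = Compo$ of $REMOVE$-type reshuffles is unnecessary if we simply observe $f^{\le k}\otimes I \in\hatsquareqp$ by Lemma~\ref{bijection-function}'s proof technique generalized from permutations to arbitrary $\hatsquareqp$-functions on the first $k$ qubits (the same $SWAP_{i,j}$-conjugation argument works because $f$'s action is confined to coordinates $1,\dots,k$). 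Step three: the two operations act on disjoint qubit ranges, so $(f^{\le k}\otimes g) = (f^{\le k}\otimes I)\circ(I^{\le k}\otimes g) = Compo[\,f^{\le k}\otimes I,\; Branch_k[\{g\}_s]\,]$ on long inputs. Finally wrap with the guard: $f^{\le k}\otimes g = Switch_k[\,f,\; Compo[\,f^{\le k}\otimes I,\; Branch_k[\{g\}_s]\,]\,]$, which matches both cases in the statement since $Switch_k$ returns $f(\qubit{\phi})$ exactly when $\ell(\qubit{\phi})\le k$.

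To verify correctness on long inputs I would expand: write $\qubit{\phi}=\sum_{s:|s|=k}\qubit{s}\otimes\measure{s}{\phi}$; applying $Branch_k[\{g\}_s]$ gives $\sum_s \qubit{s}\otimes g(\measure{s}{\phi})$; then applying $f^{\le k}\otimes I$, and using linearity of everything in sight (Lemma~\ref{hatsquare-property}(2,3)) together with the fact that $I^{\le k}\otimes I$ acts as $f$ on the $s$-register and identity on the remainder, yields $\sum_s f(\qubit{s})\otimes g(\measure{s}{\phi})$, which is the desired output. That $f^{\le k}\otimes I$ lies in $\hatsquareqp$ is the one point needing its own short argument: I would state and use a mild strengthening of Lemma~\ref{bijection-function}, namely that for \emph{any} $\hatsquareqp$-function $f'$ on $k$ qubits there is a $\hatsquareqp$-function acting as $f'$ on the first $k$ coordinates and as identity elsewhere — proved by the identical $SWAP_{i,j}$-conjugation plus $LENGTH_k$ scheme, since the locality of the action is all that matters.

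The main obstacle is precisely this last point — cleanly realizing "apply $f$ to the first $k$ qubits, identity on the rest" for an arbitrary $\hatsquareqp$-function $f$ rather than just a classical bijection. Lemma~\ref{bijection-function} as stated only covers bijections; one must either (a) observe that the same proof goes through verbatim for arbitrary $\hatsquareqp$-functions because the argument never used bijectivity beyond locality, or (b) derive $f^{\le k}\otimes I$ from scratch using $Branch_k$ with all components equal to $I$ composed with $f$ in a reshuffled coordinate order, e.g.\ $REVERSE$-conjugation so that the "active" block sits where $Branch_k$ leaves it untouched. Approach (a) is the shortest and I would take it, with a one-sentence remark that the proof of Lemma~\ref{bijection-function} applies unchanged. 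Everything else is routine linearity bookkeeping justified by Lemma~\ref{hatsquare-property}.
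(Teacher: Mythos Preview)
Your overall decomposition is the same as the paper's: realize $I^{\le k}\otimes g$ via $Branch_k[\{g\}_{s}]$, realize $f^{\le k}\otimes I$ separately, compose, and wrap with $Switch_k$. The paper does exactly this (with a small typo in its final displayed formula, where $h$ should appear). So the architecture is fine.

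The gap is in your construction of $f^{\le k}\otimes I$. Your route (a), that the proof of Lemma~\ref{bijection-function} ``goes through verbatim'' for arbitrary $\hatsquareqp$-functions, is not correct: that proof works by writing a bijection on $\{0,1\}^k$ as a product of transpositions of basis states and realizing each transposition with qubit swaps. This uses the permutation structure essentially; a generic $f\in\hatsquareqp$ admits no such decomposition. Your route (b) is not fleshed out, and the natural attempts (conjugating by $REVERSE$, or using $RevBranch_k$) run into the same problem: $Branch_m$ and $RevBranch_m$ require a \emph{fixed} number $m$ of index qubits, whereas here the block on which we want the identity to act has length $n-k$ that varies with the input.

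The paper resolves this with the quantum recursion scheme, which is the missing idea. Set $f' = QRec_k[f,I,I\mid f',f']$; unwinding the definition, $f'$ strips off the leading qubit one at a time until only $k$ qubits remain and then applies $f$, so $f'(\qubit{\psi}\qubit{s}) = \qubit{\psi}\otimes f(\qubit{s})$ for every $s\in\{0,1\}^k$. Conjugating by $REMOVE_k$ and $REP_k$ then gives $h = REP_k\circ f'\circ REMOVE_k$ with $h(\qubit{s}\qubit{\psi}) = f(\qubit{s})\otimes\qubit{\psi}$, which is exactly your $f^{\le k}\otimes I$ on long inputs. With this $h$ in hand, your Step three and the $Switch_k$ guard go through unchanged.
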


\begin{proof}
Given a quantum function $f$, we restrict the application of $f$ to the last $k$ qubits of any input $\qubit{\phi}$ by setting $f' = QRec_k[f,I,I|f_0,f_1]$, where $f_0$ and $f_1$ are $f'$. It follows that $f'(\qubit{\phi}\qubit{s}) = \qubit{\phi}\otimes f(\qubit{s})$ for any $\qubit{\phi}\in\HH_{\infty}$ and $s\in\{0,1\}^k$.

Let $h= REP_k\circ f' \circ REMOVE_k$. This quantum function $h$ satisfies that $h(\qubit{s}\qubit{\phi}) = f(\qubit{s})\otimes \qubit{\phi}$ for any $s\in\{0,1\}^k$. We then define $\GG_k =\{g_s\}_{s\in\{0,1\}^k}$ with $g_s=g$ for every string $s\in\{0,1\}^k$ and set $g'= h\circ Branch_k[\GG_k]$. It then follows that the desired quantum function $f^{\leq k}\otimes g$ equals $Switch_{k+1}[f,g']$.  \end{proof}


Next, we present Lemma \ref{lemma:add-zero}, which is useful for the proof of our key lemma (Lemma \ref{lemma:second-part}) in Section \ref{sec:second-part}. The lemma allows us to skip, before applying a given quantum function, an arbitrary number of $0$s until we read a fixed number of $1$s.

\begin{lemma}\label{lemma:add-zero}
Let $f$ be a quantum function in $\hatsquareqp$ and let $k$ be a constant in $\nat^{+}$. There exists a quantum
function $g$ in $\hatsquareqp$ such that $g(\qubit{0^m1^k}\otimes
\qubit{\phi}) =
\qubit{0^m1^k}\otimes f(\qubit{\phi})$ and $g(\qubit{0^{m+1}}) = \qubit{0^{m+1}}$
for any number $m\in\nat$ and any quantum state $\qubit{\phi}\in\HH_{\infty}$. The lemma also holds when $\hatsquareqp$ is replaced by $\squareqp$.
\end{lemma}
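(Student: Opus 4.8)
The idea is to build the desired $g$ by recursion on the length of the input, using the $k$-qubit quantum recursion $kQRec_t$ (in fact the single-qubit version $QRec_t$ suffices) to "peel off" one qubit at a time, and to arrange the recursion so that it keeps skipping over $0$s as long as the leading qubit is $0$, but switches behaviour once it sees a $1$. Concretely, I would use a recursion whose base case fires when the input has been whittled down to length at most $k$: at that point we should have stripped away $0^m$ and be left with $1^k\otimes\qubit{\phi}$ — except that the recursion cannot "look ahead" $k$ qubits at once when it operates one qubit at a time, so the natural setup is a $k$-qubit recursion $kQRec_k$ whose base case (length $\le k$) does nothing and whose recursive branch, for each length-$k$ block $s$, applies $f$ to the tail when $s=1^k$ and recurses otherwise. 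That is, set
\[
g_0 = kQRec_k[\,I,\ I,\ I\ |\ \{f_s\}_{s\in\{0,1\}^k}\,],
\]
where $f_{1^k}$ is the "apply $f$ to the remaining qubits" function (obtained from Lemma~\ref{two-diff-function}, namely $f_{1^k}=I^{\le k}\otimes f$ restricted appropriately, or more simply using the construction inside that lemma), and $f_s = g_0$ for every $s\ne 1^k$. Since the quantum recursion rule requires the multiset $\FF_k$ to contain at least one copy of the recursion function itself, and $2^k>1$, this is legitimate.

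The subtlety is that on input $\qubit{0^m1^k}\otimes\qubit{\phi}$ the first length-$k$ block is $0^k$ (assuming $m\ge k$), so the recursion strips $\qubit{0^k}$ and recurses on $\qubit{0^{m-k}1^k}\otimes\qubit{\phi}$; after $\lfloor m/k\rfloor$ iterations we are left with $\qubit{0^{m'}1^k}\otimes\qubit{\phi}$ where $m'=m\bmod k<k$, and now the leading length-$k$ block is $0^{m'}1^{k-m'}$, which is neither $0^k$ nor $1^k$ — so the recursion would strip this block and apply $I$ to the tail, which is wrong. To fix the "boundary block" issue I would not use a single $k$-qubit recursion but rather a one-qubit recursion $QRec_1$ that strips one $0$ at a time and, the moment it sees a leading $1$, hands control to a fixed-size $\hatsquareqp$-function $h$ that expects the input to begin with exactly $1^k$ and performs $h(\qubit{1^k}\otimes\qubit{\phi})=\qubit{1^k}\otimes f(\qubit{\phi})$. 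Such an $h$ is built from Lemma~\ref{lemma:special} and Lemma~\ref{two-diff-function}: it is $LENGTH_k$-guarded so that it is the identity on short inputs, and on longer inputs it uses $Branch$-type schemata to verify a $1^k$ prefix and then applies $f$ to the remainder (formally $h = Switch_{k-1}[\,I,\ I^{\le k}\otimes f\,]$ precomposed with nothing, since $f_{1^k}$ in Lemma~\ref{two-diff-function} already does exactly this when the prefix is $1^k$; on a prefix other than $1^k$ it does the wrong thing, but by construction $h$ is only ever invoked when the prefix is genuinely $1^k$). Then define
\[
g = QRec_t[\,g,\ I,\ I\ |\ f_0,\ f_1\,]\ \text{with}\ t=k-1,\ f_0=g,\ f_1=h,
\]
so that on input $\qubit{0}\otimes\psi$ it recurses (stripping the $0$), on input $\qubit{1}\otimes\psi$ it applies $h$ to the whole string $\qubit{1}\otimes\psi$ (which, when $\psi$ begins with $1^{k-1}$, i.e. the full string begins with $1^k$, does the right thing), and on inputs of length $\le k-1$ it is the identity. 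A short induction on $m$ then shows $g(\qubit{0^m1^k}\otimes\qubit{\phi})=\qubit{0^m1^k}\otimes f(\qubit{\phi})$; a separate check on the all-zero string $\qubit{0^{m+1}}$ (where the recursion never sees a leading $1$ and bottoms out at length $k-1$ with the identity, so nothing happens) gives $g(\qubit{0^{m+1}})=\qubit{0^{m+1}}$. Linearity of all $\hatsquareqp$-functions (Lemma~\ref{hatsquare-property}) makes the extension from basic qustrings $\qubit{0^m1^k s}$ to arbitrary tails $\qubit{\phi}$ automatic.

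The main obstacle is the bookkeeping at the "transition" step — ensuring that the one-qubit recursion, once it reads the first $1$, invokes a helper $h$ that correctly recognizes the full $1^k$ block and applies $f$ past it, while doing no harm on all the earlier all-$0$ prefixes (where $h$ is never called) and on short inputs (where the $LENGTH$/$Switch$ guards make everything the identity). Getting the guard thresholds ($t=k-1$, the $LENGTH_k$ in $h$) exactly right, and verifying that $h$'s behaviour on a non-$1^k$ prefix is never exercised along the relevant computation paths, is the delicate part; everything else is a routine induction on input length using the closure properties established in Lemmas~\ref{lemma:special}--\ref{two-diff-function}. The final sentence of the statement — that the lemma holds with $\hatsquareqp$ replaced by $\squareqp$ — is immediate, since the construction uses only schemata that are also available in $\squareqp$ and invoking $f\in\squareqp$ in place of $f\in\hatsquareqp$ changes nothing in the argument.
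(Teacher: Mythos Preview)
Your high-level plan coincides with the paper's: strip leading $0$s one qubit at a time via single-qubit quantum recursion, and once a leading $1$ is seen, pass to a helper that skips the remaining $1^{k-1}$ and applies $f$. The helper you describe (a nest of $Branch[I,\cdot]$'s) is exactly the paper's $f'$, built as $f_{k-1}=Branch[I,f]$, $f_i=Branch[I,f_{i+1}]$, $f'=f_1$.

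There is, however, a genuine gap in your formal construction. You set
\[
g \;=\; QRec_{k-1}[\,\cdot,\ I,\ I\ \mid\ f_0,\ f_1\,]\quad\text{with }f_0=g,\ f_1=h,
\]
but Scheme~V requires $\FF_1=\{f_0,f_1\}\subseteq\{g,I\}$: each branch function must be either the recursion itself or the identity. Taking $f_1=h$ for an arbitrary helper $h\in\hatsquareqp$ is not permitted by the rule. (Your abandoned first attempt with $kQRec_k$ has the same defect: $f_{1^k}$ was to be ``apply $f$ to the tail'', which is again neither $g_0$ nor $I$ --- the boundary-block problem you diagnosed is real, but this schema violation already kills the construction.) Also, the first argument --- the base-case function --- should be $I$, not $g$.

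The paper's resolution is to move the helper into the \emph{wrapper} slot rather than into $\FF_1$. It takes $f_0=g$, $f_1=I$, and uses $Branch[f',I]$ as the second argument:
\[
g \;=\; QRec_1[\,I,\ Branch[f',I],\ I\ \mid\ g,\ I\,],
\]
so that $g(\qubit{\phi})=\qubit{0}\otimes f'\bigl(g(\measure{0}{\phi})\bigr)+\qubit{1}\!\measure{1}{\phi}$. The helper $f'$ is therefore applied \emph{after} the recursive call returns, on the $0$-branch; since $f'$ acts as the identity whenever its input still begins with $0$ and performs the real work only once its input begins with $1$, the desired behaviour follows. This relocation of the helper from $\FF_1$ to the wrapper is the missing step; once you make it, your induction on $m$ goes through exactly as you sketched.
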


\begin{proof}
Let $k\geq2$. Given a quantum function $f\in\hatsquareqp$, we first expand $f$ to $f'$ so that $f'(\qubit{1^{k-1}}\qubit{\phi}) =
\qubit{1^{k-1}}\otimes f(\qubit{\phi})$ for any $\qubit{\phi}\in\HH_{\infty}$ and $f'(\qubit{0^{m+1}}) = \qubit{0^{m+1}}$ for any $m\in\nat$. This quantum function $f'$ can be obtained inductively as follows.
We set $f_{k-1}= Branch[I,f]$, $f_i=Branch[I,f_{i+1}]$ for each $i\in[k-2]$, and finally define $f'$ to be $f_1$.
When $k=1$, we simply set $f'=f$.
The desired quantum function $g$ in the lemma must satisfy
\[
g(\qubit{\phi}) = \left\{
\begin{array}{ll}
\qubit{\phi} &
\mbox{if $\ell(\qubit{\phi})\leq 1$,} \\
 \qubit{0}\otimes f'( g(\measure{0}{\phi}) ) +
\qubit{1}\!\measure{1}{\phi}  & \mbox{otherwise.}
\end{array} \right.
\]
This $g$ is formally defined as $g=QRec_1[I,Branch[f',I],I|g,I]$.
This completes the proof.
\end{proof}


Within our framework, it is possible to construct a ``restricted'' form of the \emph{quantum Fourier transform} (QFT). Given a binary string $s=s_1s_2\cdots s_k$ of length $k$ with $s_i\in\{0,1\}$, we denote by $num(s)$ the integer of the form $\sum_{i=1}^{k}s_i2^{k-i}$. For instance,  $num(011) = 1\cdot 2^1 + 1\cdot 2^0 = 5$ and $num(1010) = 1\cdot 2^3 + 1\cdot 2^1 = 10$.
Moreover, let $\omega_{k} = e^{2\pi i/2^k}$, where $i=\sqrt{-1}$.

\begin{lemma}\label{Fourier-transform}
Let $k$ be any fixed constant in $\nat^{+}$. The following $k$-qubit quantum Fourier transform belongs to $\hatsquareqp$. For any element $\qubit{\phi}$ in $\HH_{\infty}$, let
\[
F_k(\qubit{\phi}) = \left\{
    \begin{array}{ll}
    \qubit{\phi} & \mbox{if $\ell(\qubit{\phi})<k$,} \\
    \frac{1}{2^{k/2}}\sum_{t:|t|=k}\sum_{s:|s|=k} \omega_{k}^{num(s) num(t)}  \qubit{s}\! \measure{t}{\phi} & \mbox{otherwise.}
    \end{array} \right.
\]
\end{lemma}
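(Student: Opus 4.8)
The plan is to build the $k$-qubit quantum Fourier transform $F_k$ by the standard recursive decomposition of the QFT into Walsh--Hadamard gates and controlled phase rotations, and then express each of these building blocks as a $\hatsquareqp$-function using the primitives already established in Lemmas \ref{simple-function}--\ref{lemma:add-zero}. Since $k$ is a fixed constant, the whole circuit has $O(k^2)$ elementary gates, so a finite composition of $\hatsquareqp$-functions suffices and there is no need to invoke the quantum recursion rule $kQRec$ except through the auxiliary functions ($SWAP_{i,j}$, $REP_k$, $REMOVE_k$, $REVERSE$) that we already have. First I would recall the recursive identity $F_k = (\text{permutation reversing the } k \text{ wires}) \circ \big( \text{one } WH \text{ on qubit }1\big) \circ \big(\text{controlled rotations of qubit }1\text{ by qubits }2,\dots,k\big)\circ\big(I\otimes F_{k-1}\text{ acting on qubits }2,\dots,k\big)$, or equivalently its mirror image; the precise placement of the final bit-reversal is exactly what $REVERSE$ from Lemma \ref{lemma:special}(6) takes care of.

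The key steps, in order: (1) realize a controlled phase gate $CPHASE_\theta$ — already in $\hatsquareqp$ by Lemma \ref{simple-function}(6) — and, more generally, a version $CROT_{i,j,\theta}$ that applies the phase $e^{i\theta}$ conditioned on qubits $i$ and $j$ both being $1$; this is obtained by conjugating $CPHASE_\theta$ with the wire-swap functions $SWAP_{1,i}$ and $SWAP_{2,j}$ (equivalently the $SWAP_{i,j}$ of Lemma \ref{lemma:special}(5) and Lemma \ref{bijection-function}), all of which live in $\hatsquareqp$. (2) For each $j\in\{2,\dots,k\}$, form the operator that rotates qubit $1$ by angle $2\pi/2^{\,j}$ conditioned on qubit $j$; composing these for $j=2,\dots,k$ and then $WH$ on qubit $1$ gives the ``head'' of the QFT. (3) Use $Switch_{k-1}[\cdot,\cdot]$ (or $LENGTH_k$ from Lemma \ref{lemma:special}(2)) to make the construction act as the identity on inputs of length $<k$, matching the first case of the displayed definition. (4) Recurse on $k$: having $F_{k-1}$ in $\hatsquareqp$, lift it to act on qubits $2,\dots,k$ via $Branch_1[\{F_{k-1}\}]$ (so that $Branch_1[\{F_{k-1}\}](\qubit{a}\qubit{\psi}) = \qubit{a}\otimes F_{k-1}(\qubit{\psi})$), compose with the head from step (2), then apply $REVERSE$ to implement the bit-reversal; the base case $F_1 = WH$ is Lemma \ref{simple-function}(5). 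Finally, (5) verify by a direct computation on the basis qustrings $\qubit{t}$ (for $|t|=k$) that this composition sends $\qubit{t}$ to $\frac{1}{2^{k/2}}\sum_{s:|s|=k}\omega_k^{num(s)num(t)}\qubit{s}$, i.e.\ agrees with the stated $F_k$; by Lemma \ref{hatsquare-property}(2)--(3) linearity then extends the identity to all of $\HH_\infty$.

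The main obstacle is bookkeeping of the wire permutations: the textbook QFT circuit naturally produces the output with its bits in reversed order, and each controlled rotation $CROT_{i,j,\theta}$ as defined via $CPHASE_\theta$ acts on a \emph{fixed} pair of leading qubits, so one must carefully interleave the $SWAP_{i,j}$ conjugations (Lemma \ref{lemma:special}(5), Lemma \ref{bijection-function}) to bring the right pair of qubits into position and restore them afterward, without accidentally disturbing the qubits that $F_{k-1}$ has already processed. Getting the index arithmetic right — which $\omega_k$-power attaches to which pair, and that the accumulated reversal is undone exactly once by $REVERSE$ — is the part that needs genuine care; everything else is a routine assembly from the already-available $\hatsquareqp$ primitives, with closure under composition and branching guaranteed by Definition \ref{def:initial} and Lemma \ref{lemma:special}, and norm/dimension preservation automatic from Lemma \ref{hatsquare-property}(4).
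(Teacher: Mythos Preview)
Your proposal is correct and follows essentially the same approach as the paper: both implement the textbook QFT circuit out of $WH$ and $CPHASE_\theta$ (Lemma \ref{simple-function}), conjugated by the wire-swaps $SWAP_{i,j}$ to place each gate on the right pair of qubits, and then undo the bit-reversal; the base case $F_1=WH$ is identical. The paper packages the induction a little differently---it builds a doubly-indexed family $G_j^{(i)}$ via the telescoping recursion $G_k^{(i)} = G_{k-1}^{(i)}\circ CPHASE^{(i,i+k-1)}_{\pi/2^{k-1}}\circ (G_{k-2}^{(i)})^{-1}\circ G_{k-1}^{(i+1)}$ and then sets $F_k = G_k^{(1)}\circ REVERSE$---whereas you recurse directly on $F_{k-1}$ lifted to qubits $2,\dots,k$; your version is the more transparent of the two. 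One small caveat (present in both write-ups): the $REVERSE$ of Lemma \ref{lemma:special}(6) reverses the \emph{entire} input qustring, so for $\ell(\qubit{\phi})>k$ you need a reversal confined to the first $k$ qubits; since $k$ is a fixed constant this is immediate from Lemma \ref{bijection-function} or from a finite composition of the $SWAP_{i,j}$ you already invoke, so no new idea is needed.
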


\begin{proof}
When $k=1$, $F_1$ coincides with $WH$ and therefore $F_1$ belongs to $\hatsquareqp$ by Lemma \ref{simple-function}. Next, assume that $k\geq2$.
It is known that, for any $x_1,x_2,\ldots,x_k\in\{0,1\}$,
\begin{equation}\label{eqn:QFT}
F_k(\qubit{x_1x_2\cdots x_k}) = \frac{1}{2^{k/2}}(\qubit{0}+\omega_1^{x_k}\qubit{1}) (\qubit{0}+\omega_1^{x_{k-1}}\omega_2^{x_k}\qubit{1})\cdots (\qubit{0}+\prod_{i=1}^{k}\omega_i^{x_{i}}\qubit{1}).
\end{equation}
For this fact and its proof, refer to, e.g., \cite{NC00}.

Let us recall the special quantum function $SWAP_{i,i+j}$ from the proof of Lemma \ref{lemma:special}(6), which swaps between the $i$th and the $j$th qubits.
Using $CPHASE_{\theta}$, for any index pair $i,j$ with $i<j$, we define $CPHASE_{\theta}^{(i,j)}$ to be $SWAP_{1,i}\circ SWAP_{2,j}\circ CPHASE_{\theta} \circ SWAP_{2,j}\circ SWAP_{1,i}$, in which we apply  $CPHASE_{\theta}$ to the $i$th and the $j$th qubits.
We first want to construct $G_k^{(1)}  = F_k\circ REVERSE$, which works similarly to $F_k$ but takes $\qubit{\phi^R}$ as an input instead. To achieve this goal, we define $\{G_{j}^{(i)}\}_{i,j\in\nat^{+}}$ inductively as follows.
Initially, we set $G_0^{(i)} = I$ for any $i\in\nat^{+}$.
Next, we define $G_1^{(i)}$ as $G_1^{(i)} = H$ if $i=1$, and $G_1^{(i)}= REP_{i-1}\circ H\circ REMOVE_{i-1}$ otherwise.
For any index $k\geq2$, $G_k^{(i)}$ is defined to be $G_{k-1}^{(i)}\circ CPHASE_{\frac{\pi}{2^{k-1}}}^{(i,i+k-1)} \circ (G_{k-2}^{(i)})^{-1} \circ G_{k-1}^{(i+1)}$.
It is not difficult to show that $G_k^{(1)}$ coincides with $F_k\circ REVERSE$ by Eqn.~(\ref{eqn:QFT}).

Since $G^{(1)}_{k} = F_k\circ REVERSE$, it suffices to define $F_k$ as  $G_k^{(1)}\circ REVERSE$.
\end{proof}

A general form of QFT, in which $k$ is not limited to a particular constant, will be discussed in Section \ref{sec:open-question} in connection to our choice of Schemata I--IV that form the function class $\squareqp$.

\section{Main Contributions}\label{sec:main-contributions}

In Section \ref{sec:definition}, we have introduced the $\squareqp$-functions and the $\hatsquareqp$-functions mapping $\HH_{\infty}$ to $\HH_{\infty}$ by applying Schemata I--IV for finitely many times.
Our main theorem (Theorem \ref{theorem:character}) asserts that $\squareqp$ can precisely characterize all functions in $\fbqp$ mapping $\{0,1\}^*$ to $\{0,1\}^*$, and therefore characterize all languages in $\bqp$ over $\{0,1\}$ by identifying languages with their corresponding characteristic functions. This theorem will be proven by using two key lemmas, Lemmas \ref{lemma:first-part}--\ref{lemma:second-part}.

\subsection{A New Characterization of FBQP}\label{sec:main-theorem}

Our goal is to demonstrate the power of $\squareqp$-functions (and thus  $\hatsquareqp$-functions) by showing in Theorem \ref{theorem:character} that $\squareqp$-functions (as well as $\hatsquareqp$-functions) precisely characterize
$\fbqp$-functions on $\{0,1\}^*$.
For this purpose, there are unfortunately two major difficulties to overcome.

The first difficulty arises in dealing with tape symbols of QTMs by qubits alone. Notice that QTMs working over non-binary input alphabets are known to be simulated by QTMs taking the binary input alphabet $\{0,1\}$.
Even if we successfully reduce the size of input alphabets down to $2$, in fact, the simulation of such QTMs must require the proper handling of  \emph{non-binary tape symbols}, in particular,
the distinguished blank symbol. For our later convenience, we use ``$b$'' to denote the blank symbol, instead of $\#$.
In a similar vein, suppose that the outcome of a $\squareqp$-function $f$ is composed of an important,  meaningful portion and the other ``garbage'' portion, which is a remnant of the computation process. Since we use only qubits ($\qubit{0}$ and $\qubit{1}$) to express inputs and outputs, how can we distinguish between the meaningful portion and the garbage portion?
In order to simulate QTMs by $\squareqp$-functions, we therefore need to ``encode'' all tape symbols into qubits.

This paper introduces the following simple coding scheme. We set $\hat{0}=00$, $\hat{1}=01$, and $\hat{b}=10$.
We also set $\hat{2}=11$ and $\hat{3}=10$ for a later use.
The input alphabet $\Sigma=\{0,1\}$ is thus translated into
$\{\hat{0},\hat{1}\}$ and the tape alphabet $\Gamma=\Sigma\cup\{b\}$
is encoded into $\{\hat{0},\hat{1},\hat{b}\}$.
Given each binary string
$s=s_1s_2\cdots s_n$ with $s_i\in\{0,1\}$ for every index $i\in[n]$, a \emph{code}  $\tilde{s}$ of $s$ indicates the string $\hat{s}_1\hat{s}_2\cdots \hat{s}_n\hat{2}$, where the last item $\hat{2}$ serves as an endmarker, which marks the end of the code.
It then follows that  $|\tilde{s}| =
2|s| +2$. As quick examples, we obtain $\widetilde{0110} = \hat{0}\hat{1}\hat{1}\hat{0}\hat{2} = 0001010011$ and $\widetilde{01}\cdot \widetilde{11} = \hat{0}\hat{1}\hat{2}\hat{1}\hat{1}\hat{2} = 000111010111$. Later, we will show in lemma \ref{encode-decode} that the encoding of strings and the decoding of encoded strings can be carried out by suitable $\hatsquareqp$-functions.

Another difficulty comes from the inability of $\hatsquareqp$-functions to expand their qubits. Notice that a QTM is designed to freely use additional storage space by moving its tape head simply to new, blank tape cells beyond its \emph{input area}, in which an input is initially written. To simulate such a QTM, we need to simulate its entire activities made in the freely expanding tape space as well.
On the contrary, every $\hatsquareqp$-function is dimension-preserving by Lemma \ref{hatsquare-property}(4), and thus the number of input qubits must match that of output qubits. If we want to simulate extra storage space of the QTM, then we need to feed the same amount of extra qubits to the target $\hatsquareqp$-function for use at the very beginning.

We resolve this second issue by extending each input of quantum functions by adding extra $0$s whose length is associated with the running time of the QTM. For any polynomial $p$ and any quantum function $g$ on $\HH_{\infty}$, we define $\qubit{\phi^p(x)} =
\qubit{0^{|x|}1} \qubit{0^{p(|x|)}10^{11p(|x|)+6}1} \qubit{x}$ and $\qubit{\phi_g^p(x)} =
g(\qubit{\phi^p(x)})$ for every string $x\in\{0,1\}^*$. Similarly, for any function $f$ on $\{0,1\}^*$, we set $\qubit{\phi^{p,f}(x)} = \qubit{\widetilde{0^{|f(x)|}}} \qubit{0^{|f(x)|+1}1} \qubit{\phi^p(x)}$ and $\qubit{\phi^{p,f}_g(x)} = g(\qubit{\phi^{p,f}(x)})$.


Any $\fbqp$-function takes classical input strings and produces classical strings that are outcomes of a polynomial-time quantum Turing machine with high probability. In contrast, our $\squareqp$-functions $f$ are to transform each quantum state in $\HH_{\infty}$ to another one in $\HH_{\infty}$. To obtain classical output strings, we need to observe the outcomes of $f$.

The main theorem (Theorem 4.1) roughly asserts the following: for any $\fbqp$-function $f$, there always exists a $\hatsquareqp$-function $g$ such that, when we observe the first $|f(x)|$ qubits of the outcome  $g(\qubit{\phi^p(x)})$ ($=\qubit{\phi^p_g(x)}$) of $g$ on input $\qubit{\phi^p(x)}$, we correctly obtain $f(x)$ with high probability.
Notice that $g(\qubit{\phi^p(x)})$ also contains extra qubits, called ``garbage'' qubits, which are left unobserved in the process of calculating  $f(x)$. Those garbage qubits are actually the remnant of the computation process of $f$.
It is, however, possible to remove those qubits by partly reversing the whole computation, if we know the output size $|f(x)|$ ahead of the computation (by expanding $\qubit{\phi^p(x)}$ to $\qubit{\phi^{p,f}(x)}$ and $\qubit{\phi^p_g(x)}$ to $\qubit{\phi^{p,f}_g(x)}$).

\begin{theorem}[Main Theorem]\label{theorem:character}
Let $f$ be a polynomially-bounded function on $\{0,1\}^*$.  The following three statements are logically equivalent.
\begin{enumerate}\vs{-1}
  \setlength{\topsep}{-2mm}%
  \setlength{\itemsep}{1mm}%
  \setlength{\parskip}{0cm}%

\item The function $f$ is in $\fbqp$.

\item For any constant $\varepsilon\in[0,1/2)$, there exist a quantum function $g$ in $\hatsquareqp$ and a polynomial $p$ such that $|f(x)|\leq p(|x|)$ and
$\|\measure{f(x)}{\phi_g^p(x)}\|^2 \geq 1-\varepsilon$ for all $x\in\{0,1\}^*$.

\item For any constant $\varepsilon\in[0,1/2)$, there exist a quantum function $g$ in $\hatsquareqp$ and a
polynomial $p$ such that $|f(x)|\leq p(|x|)$ and
$|\measure{\Psi_{f(x)}}{\phi_g^{p,f}(x)}|^2 \geq 1-\varepsilon$ for all $x\in\{0,1\}^*$, where $\qubit{\Psi_{f(x)}} = \qubit{f(x)}\qubit{(0^{|f(x)|+1}1)^2} \qubit{\phi^p(x)}$.
\end{enumerate}
\end{theorem}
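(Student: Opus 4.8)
The plan is to prove Theorem~\ref{theorem:character} as a cycle of implications, with the bulk of the work isolated into two lemmas. The cycle is $(1)\Rightarrow(2)\Rightarrow(3)\Rightarrow(1)$. The step $(1)\Rightarrow(2)$ is the substantial simulation result: given an $\fbqp$-function $f$, we must build a $\hatsquareqp$-function $g$ and a polynomial $p$ so that, reading off the code of the output from the appropriate register of $g(\qubit{\phi^p(x)})$, we recover $\widetilde{f(x)}$ with probability $\geq 1-\varepsilon$. This is exactly what Lemma~\ref{lemma:second-part} asserts, so here I would simply invoke it; the real argument is postponed to Section~\ref{sec:second-part}, where the construction of a $\squareqp$-function (indeed a $\hatsquareqp$-function, using the extra-$0$ padding $\qubit{0^{p(|x|)}10^{9p(|x|)}1}$ as workspace in place of the QTM's infinite tape) simulating a well-formed polynomial-time QTM is carried out, leaning on Yao's circuit simulation (Proposition~\ref{uniform-circuit-BQP}) and on the derived schemata of Section~\ref{sec:lemma}.

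For $(2)\Rightarrow(3)$: assume $g\in\hatsquareqp$ witnesses statement~2 with error $\varepsilon$. I would produce a $\squareqp$-function $g'$ witnessing statement~3 by prepending, in front of $\qubit{\phi^p(x)}$, an initially-$\qubit{\widetilde{0^{|f(x)|}}}$ register of the right length (this is why statement~3 uses $\qubit{\phi^{p,f}(x)} = \qubit{\widetilde{0^{|f(x)|}}}\qubit{\phi^p(x)}$) and then running $g$ on the $\qubit{\phi^p(x)}$-part via a $RevBranch$-style scheme from Lemma~\ref{lemma:special} that leaves the prefix register untouched, followed by a controlled "copy" step that, conditioned on the measured code $\widetilde{f(x)}$ appearing in $g$'s output register, writes it onto the prefix register using the bijection-extension Lemma~\ref{bijection-function} together with $CNOT$ (Lemma~\ref{simple-function}). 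Since $g\in\hatsquareqp\subseteq\squareqp$, all building blocks are available; the probability bound $1-\varepsilon$ is inherited verbatim because the added operations are deterministic classical reversible moves controlled on the already-present good branch, so $\qubit{\Psi_{f(x)}}=\qubit{\widetilde{f(x)}}\qubit{\phi^p(x)}$ is the state we land in on that branch. (If one instead wants to keep $g$ exactly and only reinterpret, the cleanest route is to note statement~3 is essentially statement~2 with an extra spectator register and a relabeled target vector, so the implication is almost a definitional unpacking.)

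For $(3)\Rightarrow(1)$: assume $g\in\squareqp$ and $p$ witness statement~3 with some fixed $\varepsilon\in[0,1/2)$, say $\varepsilon=1/3$. The idea is the converse simulation --- a $\squareqp$-function is itself implementable by a well-formed polynomial-time QTM. Here I would argue by induction on the descriptional complexity of $g$ (the measure introduced after Definition~\ref{def:initial}): each initial quantum function in Scheme~I is a fixed, local, $\tilde{\complex}$-amplitude operation, hence computable by a well-formed QTM in constant time per invocation; and each construction rule II--V turns QTM-computable quantum functions into QTM-computable ones, with only a polynomial blow-up in running time, because $Compo$ is sequential composition, $Switch$ and $Branch$ are length-tests and single-qubit-controlled branchings, and the recursion rule $kQRec_t$ unwinds in at most $\ell(\qubit{\phi})/k$ levels (the number of accessible qubits strictly decreases), so a straightforward recursion-depth count gives polynomial time on inputs of the form $\qubit{\phi^{p,f}(x)}$, whose length is polynomial in $|x|$. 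Running this QTM on the classical string $\widetilde{0^{|f(x)|}}\,0^{p(|x|)}10^{9p(|x|)}1x$, measuring the designated output register, and decoding via the $\hat{0},\hat{1},\hat{2}$ scheme yields $f(x)$ with probability $\geq 1-\varepsilon \geq 2/3$, which is exactly the definition of $f\in\fbqp$. (One subtlety: $|f(x)|$ must be computable to lay out the padded input; this is fine because $|f(x)|\leq p(|x|)$ and, $f$ being $\fbqp$, its length is itself polynomial-time approximable with high probability, or one can cycle over candidate lengths.)

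The main obstacle is unquestionably $(1)\Rightarrow(2)$, i.e.\ Lemma~\ref{lemma:second-part}: faithfully encoding a multi-tape well-formed QTM's configuration evolution into a fixed-qubit-count, dimension-preserving $\hatsquareqp$-function. The delicate points there are (a) representing an unbounded tape within a fixed register by exploiting the $0^{9p(|x|)}$ workspace and the $2$-bit symbol code with endmarker $\hat{2}$, (b) realizing a single QTM step --- which couples the head position, the scanned symbols on all tapes, and the new state --- as a composition of $Branch$, $SWAP$-family, and $QRec$ schemata while respecting the local well-formedness conditions (unit length, orthogonality, separability) so that the resulting quantum function is genuinely norm-preserving as guaranteed by Lemma~\ref{hatsquare-property}(4), and (c) iterating the step operator $p(|x|)$ many times, which must itself be expressed by a single quantum recursion counting down on the number of accessible qubits rather than on an explicit step counter. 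I expect to handle (c) by the same device used for $REMOVE_k$ and $REVERSE$ in Lemma~\ref{lemma:special}, and (b) by decomposing the time-evolution operator $U_\delta$ into elementary-gate layers via Yao's construction (Proposition~\ref{uniform-circuit-BQP}) and then translating each layer through Lemmas~\ref{simple-function}--\ref{lemma:add-zero}.
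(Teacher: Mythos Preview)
Your overall cycle and the treatment of $(1)\Rightarrow(2)$ and $(3)\Rightarrow(1)$ are aligned with the paper: the former is Lemma~\ref{lemma:second-part}, and the latter is Lemma~\ref{lemma:first-part} (your induction on descriptional complexity is exactly how that lemma is proved). One minor point on $(3)\Rightarrow(1)$: you cannot assume $|f(x)|$ is computable ``because $f\in\fbqp$'' --- that is what you are proving. The paper resolves this by padding $f$ to $f_1(x)=f(x)01^{p(|x|)+2-|f(x)|}$, so that $|f_1(x)|=p(|x|)+2$ is a deterministic polynomial, and modifies $g$ to $g_1$ accordingly; your ``cycle over candidate lengths'' idea is not obviously compatible with the inner-product form of statement~3.

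The real gap is in $(2)\Rightarrow(3)$. Look again at the target vector: $\qubit{\Psi_{f(x)}}=\qubit{\widetilde{f(x)}}\otimes\qubit{\phi^p(x)}$ has the \emph{original} input $\qubit{\phi^p(x)}$ in the second register, not $\qubit{\phi_g^p(x)}$. If you merely run $g$ and copy the output code into the prefix register, your state on the good branch is $\qubit{\widetilde{f(x)}}\otimes\bigl(\qubit{\widetilde{f(x)}}\!\measure{\widetilde{f(x)}}{\phi_g^p(x)}\bigr)$, and there is no reason this has large overlap with $\qubit{\widetilde{f(x)}}\otimes\qubit{\phi^p(x)}$. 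Your parenthetical claim that statement~3 is ``almost a definitional unpacking'' of statement~2 with a spectator register is therefore incorrect. The paper's missing ingredient is \emph{uncomputation}: after copying via $COPY_2$, apply $g^{-1}$ (which exists in $\hatsquareqp$ by Proposition~\ref{inverse}) to the second register. A short calculation then shows $\measure{\Psi_{f(x)}}{\xi_x}=\|\measure{\widetilde{f(x)}}{\phi_g^p(x)}\|^2$, so one obtains $|\measure{\Psi_{f(x)}}{\xi_x}|^2\geq(1-\varepsilon')^2$; this is also why the paper starts from $\varepsilon'=1-\sqrt{1-\varepsilon}$ in statement~2 rather than inheriting the bound ``verbatim'' as you claim.
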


In Statements 2--3 of Theorem \ref{theorem:character},  $\measure{f(x)}{\phi_g^p(x)}$ is a non-null vector whereas  $\measure{\Psi_{f(x)}}{\phi_g^{p,f}(x)}$ is just a scalar because $\ell(\qubit{f(x)}\qubit{\phi^p(x)})= \ell(\qubit{\phi_g^{p,f}(x)})$.

Hereafter, we wish to prove the main theorem, Theorem \ref{theorem:character}. For a strategic reason, we split the theorem into two technical lemmas, Lemmas \ref{lemma:first-part} and \ref{lemma:second-part}.
To present the lemmas, we need to introduce additional terminology for QTMs.
It is easier in practice to design multi-tape well-formed QTMs rather than  single-tape ones. However, since multi-tape QTMs can be simulated by single-tape QTMs by translating multiple tapes to multiple tracks of a single tape, toward the proof of Theorem \ref{theorem:character}, it suffices to focus our attention only on single-tape QTMs.

A single-tape QTM is said to be in \emph{normal form} if  $\delta(q_f,\sigma)=\qubit{q_0}\qubit{\sigma}\qubit{R}$  holds for any tape symbol $\sigma\in\Gamma$.
If a QTM halts in a superposition of final configurations in which a tape head returns to the start cell, then we call such a machine \emph{stationary}. Refer to \cite{BV97} for their basic properties.
For convenience, we further call a QTM {\em conservative} if it is
well-formed, stationary, and in normal form.
Moreover, a well-formed QTM is said to be {\em plain} if its transition
function satisfies the following specific requirement: for every pair  $(p,\sigma)\in
Q\times\Gamma$, $\delta(p,\sigma)$ has the form either
$\delta(p,\sigma) =e^{i\theta}\qubit{q,\tau,d}$ or $\delta(p,\sigma) =
\cos\theta\qubit{q,\tau,d} + \sin\theta\qubit{q',\tau,',d'}$ for certain $\theta\in[0,2\pi)$ and two distinct tuples $(q,\tau,d)$ and $(q',\tau',d')$.
Bernstein and Vazirani \cite{BV97} claimed
that any single-tape, polynomial-time,
conservative QTM $M$ can be simulated by an appropriate
single-tape, polynomial-time, conservative, plain QTM $M'$. Additionally, when  $M$ is of $\tilde{\complex}$-amplitudes, so is $M'$.

Let us recall that any output string of a QTM begins at the start cell and stretches to the right until the first blank symbol although there may be tape symbols left unerased in other parts of the tape. For our later convenience, a QTM $M$ is said to have \emph{clean outputs} if,  when $M$ halts, no non-blank symbol appears in the left-side region of the output string, i.e., the region consisting of all cells indexed by negative integers.
Take a polynomial $p$ that bounds the running time of $M$ on every input.
Since $M$ halts in at most $p(|x|)$ steps on every instance $x\in\{0,1\}^*$,
it suffices for us to pay attention only to its \emph{essential tape region} that covers all tape cells indexed by numbers between $-p(|x|)$ and $+p(|x|)$.
In practice, we redefine a configuration $\gamma$ of $M$ on input $x$ of length $n$ as a triplet $(q,h,\sigma_1\cdots\sigma_{2p(n)+1})$, where $p\in Q$, $h\in\integer$ with $-p(n)\leq h\leq p(n)$, and $\sigma_1,\cdots, \sigma_{2p(n)+1}\in\{0,1,b\}$ such that, for every index $i\in[2p(n)+1]$,  $\sigma_i$ is a tape symbol written at the cell indexed by $i-p(n)-1$.
Note that the start cell comes in the middle of $\sigma_1\cdots\sigma_{2p(n)+1}$. For notational convenience, we further modify the notion of configuration by splitting the essential tape region into two parts $(z_1,z_2)$, in which $z_1=\sigma_1\sigma_2\cdots \sigma_{p(n)}$ refers to the left-side region of the start cell, not including the start cell, and $z_2=\sigma_{p(n)+1}\sigma_{p(n)+2}\cdots \sigma_{2p(n)+1}$ refers to the rest of the essential tape region.
This provides us with a modified configuration of the form  $(q,h,z_1z_2)$. For a practical reason, we further alter it into $(z_2,z_1,h,q)$, which we call by a \emph{skew configuration}.
Associated with this alteration of configurations, we also modify the original time-evolution operator, $U_{\delta}$, of $M$ so that it works on skew configurations. Be aware that this new operator cannot be realized by the standard QTM model. To distinguish it from the original time-evaluation operator $U_{\delta}$ of $M$, we call it the \emph{skew time-evolution operator} and write it as $\widehat{U}_{\delta}$.

\begin{lemma}\label{lemma:first-part}
Let $f$ be any quantum function in $\squareqp$.
There exist a polynomial $p$
and a single-tape, conservative, plain QTM $M$ producing clean outputs with $\appcomplex$-amplitudes  such
that, for any quantum state $\qubit{\phi}$ in $\HH_{2^n}$, $M$ starts with a non-null quantum state $\qubit{\phi}$ given on its input/work tape and, when it halts after $p(n)$ steps, the superposition $\qubit{\eta_{M,\phi}}$ of skew final configurations of $M$ on $\qubit{\phi}$ is of the form  $f(\qubit{\phi})\otimes \qubit{0,q_f}$, where $f(\qubit{\phi})$ appears as the content of the tape from the start cell to the right until the first blank symbol and $q_f$ is a unique final inner state of $M$.
\end{lemma}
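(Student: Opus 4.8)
The goal is to simulate an arbitrary $\squareqp$-function $f$ by a single-tape, polynomial-time, conservative, plain QTM $M$ with $\appcomplex$-amplitudes. Since every $\squareqp$-function is built from the initial quantum functions of Scheme I by finitely many applications of the construction rules II--V, the natural approach is structural induction on the descriptional complexity of $f$. For each scheme I would exhibit a QTM implementing the associated transformation, and then invoke the known closure properties of conservative QTMs (from \cite{BV97}: composability of stationary normal-form QTMs, the ability to branch on the contents of a tape cell, the reversibility/synchronization constructions) to combine the machines produced by the inductive hypothesis. The key bookkeeping device is the \emph{skew configuration} $(z_2,z_1,h,q)$ introduced just before the statement: it places the output-side tape contents first, so that a halting superposition naturally factors as $f(\qubit{\phi})\otimes\qubit{\psi_\phi}$ with $f(\qubit{\phi})$ sitting in the output region from the start cell. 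Throughout, one must track a single global polynomial time bound and pad all the sub-machines to run in lock-step the same number of steps $p(n)$, so that the final superposition is synchronized; this uses the standard trick of making each conservative QTM run in \emph{exact} time depending only on the input length.

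\textbf{Base cases (Scheme I).}
For $I$ the machine does nothing (after a trivial sweep to remain stationary). For $NOT$, $PHASE_\theta$, $ROT_\theta$, $SWAP$, each is a fixed unitary acting on one or two designated qubits at the head position; one builds a tiny plain QTM that reads the relevant cell(s), applies the corresponding elementary rotation/phase (note $ROT_\theta$ and $PHASE_\theta$ match $R_\theta$ and $Z_{2,\theta}$ from Section~\ref{sec:quantum-circuit}, and plain QTMs are explicitly allowed transitions of exactly the forms $e^{i\theta}\ket{q,\tau,d}$ and $\cos\theta\ket{q,\tau,d}+\sin\theta\ket{q',\tau',d'}$), then sweeps back to the start cell. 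For $MEAS[a]$, which is \emph{not} unitary, I would use the standard device of an ancilla tape cell: $CNOT$ the first qubit into a fresh blank cell and then ``forget'' that cell (leave it written); the partial trace over that cell realizes the partial projective measurement, and this is exactly where the non-unitary behavior of $\squareqp$ (as opposed to $\hatsquareqp$) enters. Each base machine is manifestly conservative, plain, $\appcomplex$-amplitude, and runs in constant time on inputs of any length $n$ (after padding to a length-dependent exact time).

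\textbf{Inductive cases (Schemata II--V).}
For $Compo[g,h]$, run the machine for $h$, then the machine for $g$; conservativeness (stationary + normal form) is precisely what makes this sequential composition legal, and the time bounds add. For $Switch_t[g,h]$ and $Branch[g,h]$, the machine first reads $\ell(\qubit\phi)$ (which it can compute by scanning to the endmarker region, or more simply by comparing against the fixed constant $t$ — only finitely many bits of the length matter for $Switch_t$), then deterministically dispatches to the appropriate sub-machine, or, for $Branch$, reads the first qubit, shifts the ``window'' past it, and runs $g$ or $h$ on the controlled branches in superposition. The genuinely delicate scheme is $V$, the multi-qubit quantum recursion $kQRec_t[g,h,p\,|\,\FF_k]$. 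Here the subtlety is that the recursion is \emph{not} bounded a priori by a counter (unlike primitive recursion) — instead the recursion depth is $\lfloor(\ell(\qubit\phi)-t)/k\rfloor$ or so, since each level strips $k$ qubits. The recursion is applied to $p(\qubit\phi)$, and then $\ket{s}\otimes f_s(\cdots)$ is reassembled and fed to $h$. I would implement this by: (a) running the machine for $p$; (b) iterating — decrement a ``window length'' register by $k$ each pass, peel off the next $k$ qubits, and recurse — which, because $k$ is a fixed constant and the window can only shrink $\le \ell(\qubit\phi)/k$ times, terminates in $O(\ell(\qubit\phi))$ passes; at the base the window has length $\le t$ so we run the machine for $g$; (c) on the way back up, apply the machine for $h$ at each level to the reassembled state. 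The whole iteration is governed by a classical loop, so it is a legitimate QTM program, and the total running time is polynomial in $n$ because the machines for $g,h,p$ run in polynomial time and there are only linearly many levels. One must also check $\FF_k\subseteq\{kQRec_t[\cdots],I\}$ so the ``recursive calls'' $f_s$ are either the same recursion (handled by the loop) or the identity (handled trivially) — no genuinely new recursion appears, which is what keeps the depth linear.

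\textbf{The main obstacle.}
The hard part is the quantum recursion: making the iterative implementation of Scheme~V genuinely \emph{conservative and plain} while threading the sub-machine for $h$ into the ``unwind'' phase of every recursion level, and simultaneously keeping everything reversible and time-synchronized. In particular, the naive ``recurse then apply $h$'' structure is not tail-recursive, so one needs either an explicit stack encoded on auxiliary tape tracks (whose size is bounded by the linear recursion depth, hence still within the essential tape region once $p$ is chosen generously — this is why $\qubit{\phi^p(x)}$ is padded by $9p(|x|)$ extra zeros) or a clever reorganization exploiting the fact that each level only touches a shrinking prefix of the tape. Getting the padding, the exact-time synchronization of all the recursive invocations, and the stationarity of the composite machine to all line up is the bulk of the real work; the base cases and Schemata II--IV are routine by comparison. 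Finally, once such an $M$ is built, I would apply the Bernstein--Vazirani reductions to ensure it can be taken conservative, plain, and with $\appcomplex$-amplitudes without loss, completing the proof.
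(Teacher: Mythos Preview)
Your overall strategy---structural induction on the descriptional complexity of $f$, with Scheme V as the only genuinely delicate case---is exactly the paper's approach. The paper handles the multi-qubit quantum recursion by an explicit two-phase multi-tape program (a ``splitting phase'' that repeatedly runs $M_p$, peels off the leading $k$ qubits to a separate tape, and decrements a counter, followed by a ``process phase'' that runs $M_g$ at the base and then unwinds by retrieving each stored block and applying $M_h$), together with an internal clock to synchronize all branches; your stack-on-auxiliary-tracks description is the same construction at a higher level of abstraction, and your identification of the recursion depth as linear in $\ell(\qubit\phi)$ (hence polynomial total time) matches the paper's argument.

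There is one genuine gap: your handling of $MEAS[a]$ does not work. Copying the first qubit to a fresh ancilla via $CNOT$ and then tracing out the ancilla implements \emph{decoherence} of the first qubit in the computational basis; the resulting (mixed) state contains both outcomes and your construction has no dependence on the particular value $a$ at all, so it cannot realize the projection $\qubit{a}\!\measure{a}{\phi}$ onto the specified outcome. The paper instead uses a transition that routes the $\bar a$-branch into a distinct rejecting inner state, $\delta(q_0,\bar a)=\qubit{q_{rej},\bar a,N}$ and $\delta(q_0,a)=\qubit{q_f,a,N}$, so that the two branches are separated in the inner-state register and only the $a$-branch carries the accepting final state. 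You need some mechanism that actually depends on $a$ and flags the unwanted branch, not one that merely destroys coherence between the two branches.
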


\begin{proof}
We first show that all the initial functions in Scheme I can be exactly computed in polynomial time on appropriate single-tape, $\appcomplex$-amplitude, conservative, plain QTMs having clean outputs over input/output alphabet $\{0,1\}$.
For clarity, our goal here is to demonstrate that, for every quantum function $f$ in Scheme I, there exists a QTM with the lemma's properties such that a superposition $\qubit{\eta_{M,\phi}}$ of $M$'s skew final  configurations is of the form $f(\qubit{\phi})\otimes \qubit{0,q_f}$, where $f(\qubit{\phi})$ is the content of $M$'s tape from the start cell to the right until the first blank symbol.

Since $I$ (identity) is easy to simulate, let us  consider $PHASE_{\theta}$. In this case, we take a QTM that applies a  transition of the form $\delta(q_0,\sigma) = e^{i\theta \sigma}\qubit{q_f,\sigma,N}$
for any bit $\sigma\in\{0,1\}$. Clearly, if we start with a skew initial  configuration $\qubit{\phi}\qubit{0}\qubit{q_0}$, then we halt with $PHASE_{\theta}(\qubit{\phi})\otimes \qubit{0}\qubit{q_f}$; in short,  this QTM ``exactly computes''  $PHASE_{\theta}$.
For $ROT_{\theta}$, we use the transition defined by $\delta(q_0,\sigma) =\cos\theta\qubit{q_f,\sigma,N} + (-1)^{\sigma}\sin\theta \qubit{q_f,\bar{\sigma},N}$ to exactly compute $ROT_{\theta}$.
To simulate $NOT$, we then define a QTM to have a transition of $\delta(q_0,\sigma) = \qubit{q_f,\overline{\sigma},N}$.
In the case of $SWAP$, it suffices to prepare inner states $p_{\sigma_1}$ and $r_{\sigma_2}$ as well as the following transitions: $\delta(q_0,\sigma_1) = \qubit{p_{\sigma_1},\#,R}$, $\delta(p_{\sigma_1},\sigma_2) = \qubit{r_{\sigma_2},\sigma_1,L}$, $\delta(r_{\sigma_2},\#) = \qubit{q_f,\sigma_2,N}$ for bits $\sigma_1,\sigma_2\in\{0,1\}$.
Concerning $MEAS[a](\qubit{\phi})$, we start with checking the first qubit of $\qubit{\phi}$. If it is not $a$, then we make a QTM \emph{reject} the input; otherwise, we do nothing. More formally, we define $\delta(q_0,\overline{a})=\qubit{q_{rej},\overline{a},N}$ and $\delta(q_0,a) = \qubit{q_f,a,N}$. It is not difficult to see that $\qubit{\eta_{M,\phi}}$ has the form $f(\qubit{\phi})\otimes \qubit{0,q_0}$.

Next, we intend to simulate each of the construction rules on an appropriate QTM. By induction hypothesis, there exist three polynomial-time, single-tape, $\appcomplex$-amplitude, conservative, plain QTMs $M_g$, $M_h$, and $M_p$ that satisfy the lemma for $g$, $h$,
and $p$, respectively.  By installing an \emph{internal clock} in an appropriate way with a certain polynomial $r$, we can make $M_g$, $M_h$, and $M_p$ halt in  exactly $r(n)$ time on any input of length $n\in\nat$. In what follows, let $\qubit{\phi}$ denote any input in $\HH_{\infty}$.

[composition]
Consider the case of $f=Compo[g,h]$. We compute $f$ as follows. We first run $M_h$ on $\qubit{\phi}$ and obtain a superposition of skew final  configurations, say, $h(\qubit{\phi})\otimes \qubit{0,q'_f}$ for a unique final inner state $q'_f$ of $M_h$. Since $M_h$ is stationary and in normal form, we can further run $M_g$ on the resulted quantum state $h(\qubit{\phi})$, treating $q'_f$ as its new initial inner state and  generating $\qubit{\eta_{M,\phi}} = g(h(\qubit{\phi}))\otimes \qubit{0,q_f}$ for a unique final inner state $q_f$ of $M_g$.

[branching]
Assume that $f=Branch[g,h]$. We check the first qubit of $\qubit{\phi}$. If it is $0$, then we run $M_g$ on $\measure{0}{\phi}$; otherwise, we run $M_h$ on $\measure{1}{\phi}$.
This produces $\qubit{0}\otimes g(\measure{0}{\phi})\otimes \qubit{0,q_{h,f}} + \qubit{1}\otimes h(\measure{1}{\phi})\otimes \qubit{0,q_{g,f}}$, where $q_{g,f}$ and $q_{h,f}$ are respectively unique final inner states of $M_g$ and $M_h$.
Notice that we do not need to check whether $\ell(\qubit{\phi})\geq2$.
Formally, we add the following transitions to those of $M_g$ and $M_h$: $\delta(q_0,0)=\qubit{q_{0},\#_g,R}$ and $\delta(q_0,1)=\qubit{q_{0},\#_h,R}$, where $\#_g$ and $\#_h$ are designated symbols marking the left of the new ``start cells'' for $M_g$ and $M_h$, respectively.  At terminating, the tape head moves back to the start cell. We then replace $\#_g$ and $\#_h$ respectively with $0$ and $1$ by entering a new final inner state $q_f$; namely, $\delta(q_{g,f},\#_g) = \qubit{q_f,0,N}$ and $\delta(q_{h,f},\#_h) = \qubit{q_f,1,N}$. This transition is  unitarily possible.

[multi-qubit quantum recursion]
Finally, let us demonstrate a simulation of the multi-qubit quantum recursion introduced by $f= kQRec_t[g,h,p|\FF_k]$ with $\FF_k=\{f_s\}_{s\in\{0,1\}^k}$. For readability, let
us consider only the simplest case where $k=1$, $f_0=f$, and $f_1=I$. The other cases can be similarly treated.
Consider the ``multi-tape'' QTM $M_f$ that roughly behaves as described below.  Let $T$ be a sufficiently large positive constant.
\begin{quote}
(1) In this initial phase, starting with input $\qubit{\phi}$, we prepare a counter in a new work tape and an internal clock in another work tape. We use the clock to adjust the terminating timing of all  computation paths.
Count the number $\ell(\qubit{\phi})$ of qubits simply by incrementing the counter as moving the input tape head from the
start cell to the right. Initially, we set the current quantum state, say,  $\qubit{\xi}$ expressed on the tape to be $\qubit{\phi}$ and set the current counter $k$ to be $\ell(\qubit{\xi})$ ($=n$). Go to the splitting phase.

(2) In this splitting phase, we inductively perform the following procedure using the clock. (*)  Assume that the input tape currently contains a quantum state $\qubit{\xi}$ and the counter has $k=\ell(\qubit{\xi})$.
If $k\leq t$, then idle until the clock hits $T$ and then go to the processing phase.
Otherwise, run $M_p$ on $\qubit{\xi}$ to generate $\qubit{\psi_{p,\xi}}$ and observe the first qubit of $\qubit{\psi_{p,\xi}}$ in the computational basis, obtaining $\measure{b}{\psi_{p,\xi}}$ for each $b\in\{0,1\}$.
If $b$ is $1$, then run $M_h$ on $\qubit{1}\! \measure{1}{\psi_{p,\xi}}$, obtain $h(\qubit{1}\! \measure{1}{\psi_{p,\xi}})$, which is viewed as  $f(\qubit{1}\! \measure{1}{\xi})$, idle until the clock hits $T$, and then start the processing phase. On the contrary, when $b$ is $0$, move this bit $0$ to a separate tape to remember and then update both $\qubit{\xi}$ and $k$ to be $\measure{0}{\psi_{p,\xi}}$ and $k-1$, respectively.  Continue (*).

(3) In this processing phase, we start with a quantum state  $\qubit{\xi}$, which is produced in the splitting phase.
Let $k=\ell(\qubit{\xi})$. We inductively perform the following procedure. (**) If
$k\leq t$, then we run $M_g$ on the input $\qubit{\xi}$ and
produce $g(\qubit{\xi})$, which is viewed as $f(\qubit{\xi})$. Update $\qubit{\xi}$ to be the resulted quantum state.
Otherwise, we move back the last stored bit $0$ from the separate tape, run $M_h$ on $\qubit{0}\qubit{\xi}$, obtain $h( \qubit{0}\qubit{\xi})$, and then update $\qubit{\xi}$ to be the obtained quantum state and $k$ to be $k+1$  since $\ell(\qubit{0}\qubit{\xi})=k+1$. If all $b$'s are consumed (equivalently, $k=n$), then idle until the clock hits $2T$, output $\qubit{\xi}$, and halt. Otherwise, continue (**).
\end{quote}

The running time of the above QTM is bounded from above by a certain polynomial in the length $\ell(\qubit{\phi})$ because each of the procedures (*) and (**) is repeated for at most $\ell(\qubit{\phi})$ times.
Although $M_f$ stores bits on the separate tape, those bits are all moved back and used up by the end of the computation. This fact shows that a superposition of $M_f$'s skew final configurations is of the form $f(\qubit{\phi})\otimes \qubit{0,q_f}$ for a unique final inner state $q_f$. Finally, we convert this multi-tape QTM into a computationally  equivalent single-tape QTM of the desired properties in the lemma.

This completes the proof of Lemma \ref{lemma:first-part}.
\end{proof}


For notational sake, we write $M[r]$ for an output string written in a  skew final configuration $r$, which covers only an essential tape region of $M$. We write  $FSC_{M,n}$ to denote the set of all possible skew final  configurations of $M$ produced for any input of length $n$.

The key to the proof of Theorem \ref{theorem:character} is the following lemma, which ensures the existence of a $\hatsquareqp$-function $g$ whose outcome $g(\qubit{\phi^p(\qubit{\psi})})$ ($=\qubit{\phi^p_g(\qubit{\psi})}$) ``almost'' characterize the encoded skew final configuration $\sum_{x:|x|=n}\sum_{r\in FSC_{M,n}} \measure{x}{\psi} \qubit{\widetilde{M[r]}} \qubit{\xi_{x,r}}$ of $M$, where $\widetilde{M[r]}$ is the encoding of $M[r]$.
Here, the encoding $\widetilde{M[r]}$ is needed for the construction of $g$ in the proof of the lemma. However, as shown also in the proof, executing an extra decoding procedure for $\widetilde{M[r]}$ allows us to replace $\widetilde{M[r]}$ by $M[r]$.

\begin{lemma}[Key Lemma]\label{lemma:second-part}
Let $M$ be a single-tape, polynomial-time, $\appcomplex$-amplitude, conservative, plain QTM having clean outputs over input/output alphabet $\Sigma=\{0,1\}$ and tape alphabet $\Gamma=\{0,1,b\}$. Assume that, when  $M$ halts on input $\qubit{\psi}$ of length $n$, a superposition of coded skew final configurations is of the form $\sum_{x:|x|=n}\sum_{r\in FSC_{M,n}}\measure{x}{\psi} \qubit{\widetilde{M[r]}} \qubit{\xi_{x,r}}$,  where $\qubit{\xi_{x,r}}$ denotes an appropriate quantum state describing the rest of the coded skew final configurations other than $\widetilde{M[r]}$.
There exist a quantum function $g$ in $\hatsquareqp$
and a polynomial $p$ such that, for any number $n\in\nat^{+}$ and every quantum state $\qubit{\psi}\in\HH_{2^n}$,
$\qubit{\phi^p_g(\qubit{\psi})}$ has the form $\sum_{x:|x|=n} \sum_{r\in FSC_{M,n}} \measure{x}{\psi}  \qubit{\widetilde{M[r]}} \qubit{\widehat{\xi}_{x,r}}$ for certain quantum states $\{\qubit{\widehat{\xi}_{x,r}}\}_{x,r}$ satisfying  that, for any $x,x'\in\{0,1\}^n$ and $r,r'\in FSC_{M,n}$, (i) $\|\measure{\xi_{x,r}}{\xi_{x',r'}}\|= \|\measure{\widehat{\xi}_{x,r}}{\widehat{\xi}_{x',r'}}\|$ and (ii) $\measure{\widehat{\xi}_{x,r}}{\widehat{\xi}_{x,r'}}=0$ if $r\neq r'$.
Furthermore, it is possible to modify $g$ to $g'$, which satisfies  $\qubit{\phi^p_{g'}(\qubit{\phi})} = \sum_{x:|x|=n} \sum_{r\in FSC_{M,n}} \measure{x}{\psi} \qubit{M[r]} \qubit{\widehat{\xi}'_{x,r}}$ for appropriate quantum states $\{\qubit{\widehat{\xi}'_{x',r'}}\}_{x',r'}$ satisfying Conditions (i)--(ii).
\end{lemma}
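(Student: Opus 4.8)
The argument follows the template of Yao's step-by-step circuit simulation of a QTM \cite{Yao93}, with $\hatsquareqp$-functions playing the role of circuit layers; recall that $g$ must lie in $\hatsquareqp$, so no use of $MEAS$ is permitted and every component we build must be reversible. Fix a polynomial $p$ bounding the running time of $M$ and, after installing an internal clock exactly as in the proof of Lemma~\ref{lemma:first-part}, assume that on every input of length $n$ the machine $M$ halts after precisely $p(n)$ steps without leaving its essential region $[-p(n),p(n)]_{\integers}$. The padded state $\qubit{\phi^p(\qubit{\phi})}=\sum_{x:|x|=n}\measure{x}{\phi}\,\qubit{0^{p(n)}10^{9p(n)}1}\qubit{x}$, of length $10p(n)+n+2$, carries far more blank qubits than are needed to host the $2p(n)+1$ tape cells (each cell coded by two qubits via $\hat{0}=00,\hat{1}=01,\hat{b}=10$), a per-cell head marker, and the inner state, while the two interspersed $1$'s serve as delimiters between the clock block, the scratch block, and the input. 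First I would assemble, using Lemmas~\ref{lemma:special}, \ref{bijection-function}, \ref{two-diff-function}, and \ref{lemma:add-zero}, a $\hatsquareqp$-function $INIT$ that, leaving the clock block $0^{p(n)}1$ untouched, turns $\qubit{0^{9p(n)}1}\qubit{x}$ into (the code of) the initial skew configuration of $M$ on $x$ in the skew order $(z_2,z_1,h,q_0)$: $Branch_k$, $RevBranch_k$ lay in the cell codes and the head marker, while $REMOVE_k,REP_k,SWAP_k,REVERSE$ arrange the registers (the delimiter $1$'s let $INIT$ locate the scratch block irrespective of $n$).

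The heart of the construction is a $\hatsquareqp$-function realizing a single application of the skew time-evolution operator $\hat{U}_\delta$ to a configuration code. Here the hypotheses that $M$ is conservative and \emph{plain} are decisive: for each $(q,\sigma)\in Q\times\Gamma$, $\delta(q,\sigma)$ is a single elementary rotation — either $e^{i\theta}\qubit{q',\tau,d}$ or $\cos\theta\qubit{q',\tau,d}+\sin\theta\qubit{q'',\tau',d'}$ — so updating the inner state, the coded scanned symbol, and a direction field at the head position is merely a composition of $PHASE_\theta$, $ROT_\theta$, $NOT$ and classical bijections, which lies in $\hatsquareqp$ by Lemma~\ref{bijection-function}, with the choice of rotation selected by a $Branch_k$ on the $O(1)$ head-window qubits; head motion is then the local, reversible relocation of the head marker to a neighbouring cell, chosen by the direction field. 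A $QRec_1$-style recursion is used to walk along the tape register, locate the unique marked cell while carrying the inner state to it and back, and apply the above local update there. Unitarity is automatic: any function assembled from Schemata~I--V without $MEAS$ is, by Lemma~\ref{hatsquare-property}(4) and Proposition~\ref{inverse}, unitary on each $\HH_{2^m}$, and the well-formedness conditions (unit length, separability, orthogonality) are exactly what make $\hat{U}_\delta$ itself unitary; the real work — which I expect to be the main obstacle — is to verify that the assembled function actually equals $\hat{U}_\delta$ on basic qustrings coding skew configurations, in particular that the inner-state bookkeeping and the local update at the boundary between visited cells and fresh blanks never run off the essential region.

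Finally, let $STEP$ be the Lemma~\ref{lemma:add-zero}-construction applied to the one-step operator above, so that $STEP$ first skips any remaining $0$'s of the clock block and the separating $1$ and only then acts on the configuration code; and let $RUN$ be a $QRec_1$-style recursion that peels the $p(n)$ leading clock qubits one at a time — stopping, via the identity branch, as soon as the trailing $1$ of the clock block is reached — and applies one copy of $STEP$ at each of the $p(n)$ unwinding levels. Put $g:=OUT\circ RUN\circ INIT$, where $OUT\in\hatsquareqp$ uses $REMOVE_k,SWAP_k,REVERSE$ and a Lemma~\ref{lemma:add-zero}-style scan to bring the coded output — which, by the clean-outputs hypothesis, occupies the front of $z_2$ and is terminated by the coded first blank $\hat{b}=10$, which $OUT$ rewrites into the endmarker $\hat{2}=11$ — to the front of the work register. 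Then $\qubit{\phi^p_g(\qubit{\phi})}=\sum_{x:|x|=n}\sum_{r\in FSC_{M,n}}\measure{x}{\phi}\,\qubit{\widetilde{M[r]}}\qubit{\widehat{\xi}_{x,r}}$. Since $g$ is unitary on each $\HH_{2^m}$ and faithfully reproduces $M$'s computation, the correspondence $\qubit{\xi_{x,r}}\mapsto\qubit{\widehat{\xi}_{x,r}}$ is the restriction of an isometry, which gives (i), while distinct final skew configurations are distinct basic qustrings, which after factoring off the common first tensor factor gives (ii). For $g'$, postcompose $g$ with a $\hatsquareqp$-function $DECODE$ that locates $\hat{2}$ by a Lemma~\ref{lemma:add-zero}-style scan and maps the coded prefix $\widetilde{M[r]}$ onto $M[r]$ while shuffling the coding redundancy into the trailing register (the total length being preserved throughout); this delivers $\qubit{\phi^p_{g'}(\qubit{\phi})}=\sum_{x:|x|=n}\sum_{r\in FSC_{M,n}}\measure{x}{\phi}\,\qubit{M[r]}\qubit{\widehat{\xi}'_{x,r}}$ with (i)--(ii) inherited.
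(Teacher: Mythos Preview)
Your proposal is correct and follows essentially the same architecture as the paper's proof: an $INIT$ stage (the paper's $F_1$, Section~\ref{step-initialization}) that encodes the initial skew configuration using the per-cell head markers $\hat{2},\hat{3}$ and the $\hat{\cdot}$ symbol codes, a one-step operator (the paper's $F_2$, Section~\ref{step-single}) built from $PHASE_\theta$, $ROT_\theta$, $GPS_\theta$ and Lemma~\ref{bijection-function} acting on a constant-size window located by recursion, a clock-driven $QRec_1$ iteration (the paper's $F_3$, Section~\ref{step-entire}) applying $p(n)$ copies via Lemma~\ref{lemma:add-zero}, and an output-extraction stage (the paper's $F_4,F_5$, Section~\ref{step-output}) that brings $\widetilde{M[r]}$ to the front and then decodes it. The only cosmetic difference is that the paper keeps the inner state $q$ in front and works with three-cell ``target'' windows $q\,\hat{s}_1\hat{\sigma}_1\hat{s}_2\hat{\sigma}_2\hat{s}_3\hat{\sigma}_3$ rather than a separate head-position register, but your sketch already anticipates this by speaking of a per-cell head marker and ``carrying the inner state to it and back.''
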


The proof of Lemma  \ref{lemma:second-part} is lengthy and it is postponed until Section  \ref{sec:second-part}.
Meanwhile, we return to Theorem \ref{theorem:character} and present its proof using Lemmas \ref{lemma:first-part} and \ref{lemma:second-part}.

\begin{proofof}{Theorem \ref{theorem:character}}\hs{2}
Let $\varepsilon\in[0,1/2)$ be any constant and let $f$ be any polynomially-bounded function mapping $\{0,1\}^*$ to $\{0,1\}^*$.

(1 $\Rightarrow$ 2)
Assume that $f$ is in $\fbqp$. Take a
multi-tape, polynomial-time, $\appcomplex$-amplitude, well-formed QTM $N$ that computes $f$ with bounded-error probability.
Let us choose a
polynomial $p$ that bounds the running time of $N$ on every input.
It is possible to ``simulate'' $N$ with high success probability by a ceratin
single-tape, $\appcomplex$-amplitude, conservative, plain QTM, say, $M$ having clean outputs in such a way that the machine takes input $x$ and terminates with generating ${f(x)}bw$ in the right-side region of the start cell of the input/work tape with bounded-error probability, where $b$ is a unique blank tape symbol.
Since any bounded-error QTM freely amplifies its success probability,  we assume without loss of generality that the error probability of $M$ is at most   $\varepsilon$.
Notice that the coded skew final configuration begins with an output string. Thus, the superposition of coded skew final configurations of $M$ on input $x$ of length $n$ must be of the form $\sum_{r\in FSC_{M,n}}\qubit{\widetilde{M[r]}} \qubit{\xi_{x,r}}$ for appropriately chosen quantum states $\{\qubit{\xi_{x,r}}\}_{r\in FSC_{M,n}}$. Since $M$ computes $f$ with error probability at most $\varepsilon$,
we conclude that
$\sum_{r\in FSC_{M,n}}\|\measure{\widetilde{f(x)}}{\widetilde{M[r]}} \qubit{\xi_{x,r}}\|^2
\geq 1-\varepsilon$ for all $x$.
Lemma \ref{lemma:second-part} further provides us with a special
quantum function
$g\in\hatsquareqp$ such that, for any number $n\in\nat$ and any string $x\in\{0,1\}^n$,
$\qubit{\phi^p_g(x)}$ has the form $\sum_{r\in FSC_{M,n}} \qubit{M[r]} \qubit{\widehat{\xi}_{x,r}}$  for certain quantum states $\{\qubit{\widehat{\xi}_{x,r}}\}_{r\in FSC_{M,n}}$ satisfying that, for all $r,r'\in FSC_{M,n}$, $\|\measure{\xi_{x,r}}{\xi_{x,r'}}\|= \|\measure{\widehat{\xi}_{x,r}}{\widehat{\xi}_{x,r'}}\|$  and $\measure{\widehat{\xi}_{x,r}}{\widehat{\xi}_{x,r'}}=0$ if $r\neq r'$.
We thus conclude that
$\|\measure{f(x)}{\phi^p_g(x)}\|^2  = \sum_{r\in FSC_{M,n}}
\| \measure{f(x)}{M[r]} \qubit{\widehat{\xi}_{x,r}}\|^2
= \sum_{r\in FSC_{M,n}} | \measure{\widetilde{f(x)}}{\widetilde{M[r]}} |^2 \|\qubit{\widehat{\xi}_{x,r}}\|^2$ since
and $\measure{f(x)}{M[r]} = \measure{\widetilde{f(x)}}{\widetilde{M[r]}}$ and $\measure{\widehat{\xi}_{x,r}}{\widehat{\xi}_{x,r'}}=0$ if $r\neq r'$.
Moreover, from $\|\qubit{\xi_{x,r}}\|=\|\qubit{\widehat{\xi}_{x,r}}\|$, it follows that the term $| \measure{\widetilde{f(x)}}{\widetilde{M[r]}} |^2 \|\qubit{\widehat{\xi}_{x,r}}\|^2$ equals $\|\measure{\widetilde{f(x)}}{\widetilde{M[r]}} \qubit{\xi_{x,r}}\|^2$. Therefore, $\|\measure{f(x)}{\phi^p_g(x)}\|^2$ is at least $1-\varepsilon$.

(2 $\Rightarrow$ 3)
Let $\varepsilon$ be any constant in $[0,1/2)$ and set $\varepsilon'=1-\sqrt{1-\varepsilon}$.
Let us choose a polynomial $p$ and a function $g\in\hatsquareqp$ for which $|f(x)|\leq p(|x|)$ and $\|\measure{f(x)}{\phi^p_g(x)}\|^2 \geq 1-\varepsilon'$ for all strings $x\in\{0,1\}^*$.
Starting with an input $\qubit{\phi^{p,f}(x)}$ ($= \qubit{\widetilde{0^{|f(x)|}}} \qubit{0^{|f(x)|+1}1} \qubit{\phi^p(x)}$),
we first apply $g$ to the last part $\qubit{\phi^p(x)}$ of $\qubit{\phi^{p,f}(x)}$ and obtain $\qubit{\widetilde{0^{|f(x)|}}} \qubit{0^{|f(x)|+1}1} \qubit{\phi^p_g(x)}$.
Next, we apply Lemma \ref{encode-decode} to encode the first $|f(x)|$ qubits of $\qubit{\phi^p_g(x)}$ with the help of $\qubit{0^{|f(x)|+1}1}$ and then  obtain $\qubit{\eta_{f,x}} = \sum_{s:|s|=|f(x)|} \qubit{\widetilde{0^{|f(x)|}}} \qubit{\tilde{s}}\otimes \measure{s}{\phi^p_g(x)}$.
Using the quantum function $COPY_2$ given in Lemma \ref{copy-algorithm}, we then copy each qustring $\qubit{\tilde{s}}$ of $\qubit{\eta_{f,x}}$ onto $\qubit{\widetilde{0^{|f(x)|}}}$ and generate a quantum state of the form $\sum_{s:|s|=|f(x)|} ( \qubit{\tilde{s}}\qubit{\tilde{s}} \otimes  \measure{s}{\phi^p_g(x)} )$.
We then decode the first and the second occurrences of $\qubit{\tilde{s}}$  to $\qubit{0^{|f(x)|+1}1}\qubit{s}$. For later convenience, we transform the first occurrence of $\qubit{0^{|f(x)|+1}1}\qubit{s}$ to $\qubit{s}\qubit{0^{|f(x)|+1}1}$.
In what follows, we abbreviate
$\qubit{s}\!\measure{s}{\phi^p_g(x)}$ as $\qubit{\zeta^p_g[s]}$.
Finally, we locally apply $g^{-1}$ to the last part $\qubit{\zeta^p_g[s]}$, producing $\qubit{\xi_x} = \sum_{s:|s|=|f(x)|} (\qubit{s} \qubit{(0^{|f(x)|+1}1)^2} \otimes g^{-1}(\qubit{\zeta^p_g[s]}))$.

Let $\qubit{\Psi_{f(x)}} = \qubit{f(x)} \qubit{(0^{|f(x)|+1}1)^2} \qubit{\phi^p(x)}$.
Since $\qubit{\phi^{p}_g(x)} = \sum_{s:|s|=|f(x)|} \qubit{\zeta^p_g[s]}$ and $g^{-1}(\qubit{\phi^{p}_g(x)}) = \qubit{\phi^{p}(x)}$,
we derive $\qubit{\phi^{p}(x)} = \sum_{s:|s|=|f(x)|} g^{-1}(\qubit{\zeta^p_g[s]})$.
Therefore, $\qubit{\Psi_{f(x)}}$ coincides with $\sum_{s:|s|=|f(x)|} ( \qubit{f(x)}\qubit{(0^{|f(x)|+1}1)^2} \otimes g^{-1}( \qubit{\zeta^p_g[s]}))$.
Let us consider the inner product $\measure{\Psi_{f(x)}}{\xi_x}$.
By a simple calculation, $\measure{\Psi_{f(x)}}{\xi_x}$ equals $\sum_{s:|s|=|f(x)|}\measure{f(x)}{s} \otimes \tau_x(s)$, where $\tau_x(s)$ is the inner product between $\qubit{\phi^p(x)}$ and $g^{-1}(\qubit{\zeta^p_g[s]})$. Since $g^{-1}$ belongs to $\hatsquareqp$ and is dimension-preserving and norm-preserving by Proposition \ref{inverse} and Lemma \ref{hatsquare-property}, $\tau_x(s)$ equals $\measure{\zeta^p_g[s]}{\zeta^p_g[s]}$.
Since $s$ is forced to take the value $f(x)$ in $\measure{\Psi_{f(x)}}{\xi_x}$, it follows that $\measure{\Psi_{f(x)}}{\xi_x} = \measure{\zeta^p_g[f(x)]}{\zeta^p_g[f(x)]}$, which equals $\|\measure{f(x)}{\phi^p_g(x)}\|^2$.
Therefore, we conclude that $|\measure{\Psi_{f(x)}}{\xi_x}|^2 = \|\measure{f(x)}{\phi^p_g(x)}\|^4\geq (1-\varepsilon')^2 = 1-\varepsilon$, as requested.

(3 $\Rightarrow$ 1)
Since $f$ is polynomially bounded, take a polynomial $p$ such that $|f(x)|\leq p(|x|)$ holds for all strings $x$. Since we do not know the length of  $f(x)$, we want to expand $f$ by setting   $f_1(x)=f(x)01^{p(|x|)+2-|f(x)|}$ so that $|f_1(x)|=p(|x|)+2$ for all strings $x$. Fix $\varepsilon\in[0,1/2)$.
Let us assume that
there exist a function $g_1\in\hatsquareqp$ and a polynomial
$p_1$ that satisfy $|f_1(x)|\leq p_1(|x|)$ and
$|\measure{\Psi_{f_1(x)}}{\phi^{p_1,f_1}_g(x)}|^2 \geq 1-\varepsilon$
for all instances $x\in\{0,1\}^*$.

Using Lemma \ref{lemma:first-part} for the quantum function $g_1$, we can take a single-tape,
polynomial-time,  conservative, plain QTM $M$ with $\appcomplex$-amplitudes for which $M$ on input $\qubit{\phi}$ produces a clean output of  $g_1(\qubit{\phi})$ on its tape. We consider the following machine. On input $x\in\Sigma^*$, we first compute the value $p(|x|)$ deterministically and  generate  $\qubit{\phi^{p,f_1}} = \qubit{0^{p(|x|)+2}1}\qubit{0^{p(|x|)}10^{9p(|x|)}1}\otimes \qubit{x}$. We then run $M$ on $\qubit{\phi^{p,f_1}(x)}$ to produce $\qubit{\phi^{p,f_1}_{g_1}(x)}$.
Since $|\measure{\Psi_{f_1(x)}}{\phi^{p,f_1}_{g_1}(x)}|^2 \geq 1-\varepsilon$, $\qubit{\phi^{p,f_1}_{g_1}(x)}$ contains $f_1(x)$ with probability at least  $1-\varepsilon$. From $f_1(x)$, we extract $f(x)$ and output it. This concludes that $f$ is in $FBQP$.
\end{proofof}

\subsection{Quantum Normal Form Theorem}\label{sec:normal-form}

Our key lemmas, Lemmas \ref{lemma:first-part} and \ref{lemma:second-part}, further lead to a quantum version of Kleene's \emph{normal form theorem} \cite{Kle36,Kle43}, which asserts the existence of a primitive recursive predicate $T(e,x,y)$ and a primitive recursive function $U(y)$ such that, for any recursive function $f(x)$, an appropriate index (called a G\"{o}del number) $e\in\nat$ satisfies $f(x)=U(\mu y.T(e,x,y))$ for all inputs $x\in\nat$, where $\mu$ is the \emph{minimization operator}. This statement is, in essence, equivalent to the existence of universal Turing machine \cite{Tur36}.
Here, we wish to prove a slightly weaker form of the \emph{quantum
normal form theorem} using Lemmas \ref{lemma:first-part}--\ref{lemma:second-part}.

\begin{theorem}[Quantum Normal Form Theorem]\label{normal-form}
There exists a quantum function $f$ in $\squareqp$ such that, for any quantum function $g$ in
$\squareqp$ and any constant $\varepsilon\in (0,1/2)$, there exist a binary string $e$ and a polynomial $p$ satisfying
$\tracenorm{ \density{\psi_{g,x}}{\psi_{g,x}} -
tr_n(\density{\eta_{f,x}}{\eta_{f,x}}) }  \leq \varepsilon$ for any input  $\qubit{x}$ with $x\in\{0,1\}^n$, where $\qubit{\psi_{g,x}} = g(\qubit{x})$ and $\qubit{\eta_{f,x}} =  f(\qubit{\tilde{e}} \qubit{0^{p(|x|)}1}\qubit{x})$. Such a function $f$ is called \emph{universal}.
\end{theorem}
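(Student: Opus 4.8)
The plan is to obtain the universal quantum function $f$ by combining the two directions of the Main Theorem (Theorem~\ref{theorem:character}) with the existence of a universal QTM of $\appcomplex$-amplitudes \cite{BV97,NO02}. First, let me fix the encoding: given an arbitrary $g\in\squareqp$, Lemma~\ref{lemma:first-part} supplies a single-tape, polynomial-time, $\appcomplex$-amplitude, conservative, plain QTM $M_g$ and a polynomial $q$ such that, started on $\qubit{x}$ (padded appropriately), $M_g$ halts after $q(n)$ steps with a superposition of final skew configurations of the form $g(\qubit{x})\otimes\qubit{\psi_{x}}$. I would then invoke a universal QTM $U$ which, given a (classical) description of a QTM and a time bound written in unary, simulates that machine step-by-step with only polynomial overhead and incurs at most an $\varepsilon/2$ error in trace norm; this is the standard Bernstein--Vazirani universal machine, refined by Nishimura--Ozawa so that its amplitudes stay in $\appcomplex$. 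The key point is that $U$ itself is one fixed polynomial-time, conservative, plain QTM with $\appcomplex$-amplitudes, so by the direction (1$\Rightarrow$2)/(1$\Rightarrow$3) of Theorem~\ref{theorem:character} — or more directly by Lemma~\ref{lemma:second-part} applied to $U$ — there is one fixed quantum function $f\in\squareqp$ that simulates $U$ in the schematic sense.

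The steps, in order, would be: (1) Fix the universal QTM $U$ and the coding scheme $\tilde{\cdot}$ from Section~\ref{sec:main-contributions}; let $e=\widetilde{\langle M_g\rangle}$ be the code of the description of $M_g$ obtained from $g$ via Lemma~\ref{lemma:first-part}, and let $p$ be a polynomial large enough to bound the running time of $U$ when simulating $M_g$ for $q(n)$ steps on inputs of length $n$ (so $p$ absorbs $q$ and the simulation overhead). (2) Apply Lemma~\ref{lemma:second-part} to $U$ to obtain the fixed $f\in\squareqp$ (indeed $\hatsquareqp$ if we use the $MEAS$-free version, but $\squareqp$ suffices for the statement) together with its polynomial; here the padding $\qubit{\tilde e}\qubit{\widetilde{0^{p(n)}}}\qubit{x}$ plays exactly the role of the padded input $\qubit{\phi^{p}(x)}$ in that lemma, with $\tilde e$ serving as the program to be interpreted. (3) Trace out all the ancillary registers: $f(\qubit{\tilde e}\qubit{\widetilde{0^{p(n)}}}\qubit{x})$ carries, in its first $n$ qubits, a state that is $\varepsilon/2$-close (in trace norm) to the first $n$ qubits of $M_g$'s output on $\qubit{x}$, which by Lemma~\ref{lemma:first-part} is exactly $g(\qubit x)=\qubit{\psi_{g,x}}$; combining the simulation error of $U$ with the (zero) error of the schematic simulation in Lemma~\ref{lemma:second-part}, the triangle inequality for trace norm gives $\tracenorm{\density{\psi_{g,x}}{\psi_{g,x}} - tr_n(\density{\eta_{f,x}}{\eta_{f,x}})}\leq\varepsilon$. (4) Note that $e$ depends only on $g$, not on $x$ or $\varepsilon$ (the role of $\varepsilon$ is only to force enough amplification/precision inside $U$, which can be folded into the choice of $p$), so the quantified structure of the statement is respected.

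The main obstacle will be step~(2)–(3): making the universal QTM $U$ genuinely \emph{one fixed} conservative, plain, $\appcomplex$-amplitude machine whose time overhead is polynomial and whose simulation error is controllable, and then checking that the padded-input convention $\qubit{\tilde e}\qubit{\widetilde{0^{p(n)}}}\qubit{x}$ matches the input format required by Lemma~\ref{lemma:second-part} (which expects a register of $0$s whose length is tied to the simulated running time, preceded here by the program code $\tilde e$). Two subtleties deserve care: first, $M_g$ as produced by Lemma~\ref{lemma:first-part} outputs $g(\qubit\phi)$ entangled with a ``garbage'' state $\qubit{\psi_\phi}$, so after simulating via $U$ and applying the schematic $f$ I must argue that tracing out everything but the first $n$ qubits recovers $\rho := tr_n(\density{\psi_{g,x}}{\psi_{g,x}}\otimes\cdots)$, which by the partial-trace conventions in Section~\ref{sec:notion-notation} collapses to $\density{\psi_{g,x}}{\psi_{g,x}}$ only when $\qubit{\psi_{g,x}}$ is a genuine quantum state and the garbage is in tensor product with it — this is exactly the tensor-product form guaranteed by Lemma~\ref{lemma:first-part}, so it goes through, but it must be spelled out. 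Second, since $g$ is an arbitrary $\squareqp$-function (not a classical-string function), I cannot route through the $\fbqp$ characterization directly; instead I must use the ``schematic simulation'' content of Lemmas~\ref{lemma:first-part} and \ref{lemma:second-part} at the level of quantum states, which is why the theorem is stated with trace norm and $tr_n$ rather than with observation probabilities. Everything else — the amplification of $U$'s fidelity, the polynomial bookkeeping for $p$, and the triangle inequality — is routine.
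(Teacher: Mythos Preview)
Your proposal is correct and follows essentially the same route as the paper: go from $g$ to a QTM $M_g$ via Lemma~\ref{lemma:first-part}, feed its code and a time bound to the Bernstein--Vazirani/Nishimura--Ozawa universal QTM $U$, then apply Lemma~\ref{lemma:second-part} to the fixed machine $U$ to obtain the single $f\in\squareqp$; the trace-norm bound then comes from the total-variation guarantee of $U$ together with the tensor-product form $g(\qubit{x})\otimes\qubit{\psi_x}$ supplied by Lemma~\ref{lemma:first-part}. One small correction: you assert that $e$ depends only on $g$ and that $\varepsilon$ can be ``folded into $p$,'' but in the paper's treatment the precision parameter is part of the program code (the paper sets $e=\pair{M,\varepsilon}$), since $U$ must be told how accurately to simulate and the padding length alone does not carry that information; this is harmless for the theorem because the quantifier order permits $e$ to depend on $\varepsilon$, but your step~(4) should be amended accordingly.
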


The extra term $\qubit{0^{p(|x|)}1}$ in $\qubit{\eta_{f,x}}$
is needed for providing $g$ with enough work space as in the case of Theorem \ref{theorem:character}. To prove the theorem, we utilize the fact that there is a \emph{universal QTM}, which can simulate all single-tape well-formed QTMs with polynomial slowdown with
any desired accuracy. Such a universal machine was constructed by Bernstein and Vazirani \cite[Theorem 7.1]{BV97} and by Nishimura and Ozawa \cite[Theorem 4.1]{NO02}. We say that $M_1$ on input $x_1$ \emph{simulates} $M_2$ on input $x_2$ \emph{with accuracy}
at most $\varepsilon$ if the total variation distance between two probability distributions  $\{\|\bra{y}{U_{M_1}^{p_1(|x_1|)}} \ket{c^{(x_1)}_{0,1}}\|^2\}_{y\in\{0,1\}^{\leq n}}$ and $\{\|\bra{y}{U_{M_2}^{p_2(|x_2|)}} \ket{c^{(x_2)}_{0,2}}\|^2\}_{y\in\{0,1\}^{\leq n}}$ is at most $\varepsilon$, where $n=\max\{p_1(|x_1|),p_2(|x_2|)\}$ and, for each index $i\in\{1,2\}$, $U_{M_i}$ is the time-evolution operator of $M_i$, $p_i(\cdot)$ expresses the running time of $M_i$, $c^{(x_i)}_{0,i}$ is the skew initial configuration of $M_i$ on input $x_i$, and $y$ ranges over all possible output strings of $M_i$.


\begin{proposition}\label{lemma:universal}{\rm \cite{BV97,Kit97,NC00,NO02}}\hs{2}
There exists a single-tape, well-formed, stationary QTM $U$ such that, for every constant
$\varepsilon\in(0,1)$, a number $t\in\nat$, a single-tape well-formed
QTM $M$ with $\appcomplex$-amplitudes, $U$ on input
$\pair{M,x,t,\varepsilon}$ simulates $M$
on input $x$ for $t$ steps with accuracy at most $\varepsilon$ with
slowdown of a polynomial in $t$ and $\log(1/\varepsilon)$, where $\pair{M,x,t,\varepsilon}$ refers to a fixed, efficient encoding of a quadruplet $(M,x,t,\varepsilon)$. Such a QTM is called \emph{universal}.
\end{proposition}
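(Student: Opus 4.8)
The plan is to reconstruct the Bernstein--Vazirani universal machine \cite{BV97} (with the refinements of \cite{NO02}), adapted to the single-tape conventions fixed earlier. The overall strategy is to reduce the simulation of an \emph{arbitrary} $\appcomplex$-amplitude well-formed QTM $M$ to the simulation of a QTM whose transition amplitudes come from a single fixed finite ``universal'' set, and then to hard-wire into $U$ a finite repertoire of well-formed rotation and phase subroutines that realize every such amplitude while keeping $U$ itself unitary.

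First I would handle the \emph{finite-precision reduction}. Because $M$ is of $\appcomplex$-amplitude, every entry of its transition function is polynomial-time approximable, so from the code of $M$ one can deterministically compute a rational approximation of each amplitude to any requested number of bits. Replacing $M$ by a nearby QTM $M'$ whose amplitudes are truncated to $k$ bits perturbs the time-evolution operator in operator norm by $2^{-\Omega(k)}$; over $t$ steps this error accumulates at most linearly (a standard hybrid argument bounding $\|U_{\delta}^{t}-U_{\delta'}^{t}\|$ by $t$ times the single-step error), so taking $k=\mathrm{poly}(\log(t/\varepsilon))$ forces the total variation distance between the two output distributions below $\varepsilon$. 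A second, subtler density step replaces these truncated amplitudes by words in a fixed finite gate alphabet: one fixes an irrational angle whose integer multiples are dense on the circle and approximates each required rotation or phase to precision $2^{-k}$ by a product of $\mathrm{poly}(k)$ elementary operations. All of this is a \emph{classical} deterministic preprocessing that $U$ performs from its input $\pair{M,x,t,\varepsilon}$.

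Next I would build the \emph{step simulator}. On designated tracks of its single tape, $U$ maintains an encoding of $M'$'s current configuration (simulated inner state, head location, and tape contents) together with the precomputed gate program. One simulated step proceeds by: (i) a classical reversible lookup locating, in the encoded transition table, the entry matching the current simulated state and scanned symbol; (ii) application of the corresponding unitary transition, realized as the fixed sequence of elementary rotations and phase shifts produced by the density step; and (iii) a reversible update of the simulated head position. The crux of the whole argument is Step (ii): applying a nontrivial rotation \emph{conditioned} on configuration data is not automatically well-formed, so I would invoke Bernstein--Vazirani's elementary-rotation and phase decomposition together with their well-formed rotation and phase subroutines, which apply each two-level operation using a fresh ancilla slot so that the local unit-length, orthogonality, and separability conditions (Lemma \ref{lemma:well-formedness}) are preserved throughout. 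This is exactly where unitarity is fragile and where the bulk of \cite{BV97} is spent; I would rely on those lemmas verbatim.

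Finally I would secure the two global properties demanded by the statement. \emph{Stationarity} follows because every ingredient above is reversible: after $t$ simulated steps, $U$ reverses the bookkeeping---uncomputing the step counter, the lookup scratch, and the ancilla registers left by the rotation and phase subroutines---and returns its head to the start cell, so $U$ halts in a superposition of final configurations with the head home. \emph{Well-formedness} of $U$ itself is guaranteed step-by-step by the subroutine construction of Step (ii) together with the fact that the classical control is a reversible Turing machine. For the running time, each simulated step costs $\mathrm{poly}(|\pair{M,x,t,\varepsilon}|)\cdot\mathrm{poly}(k)$ with $k=\mathrm{poly}(\log(t/\varepsilon))$, and there are $t$ steps plus cleanup, giving slowdown polynomial in $t$ and $1/\varepsilon$ as required. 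The main obstacle, to reiterate, is neither the classical universality nor the error accounting but the reversible, well-formedness-preserving realization of conditionally applied transition amplitudes; this is the technical heart that \cite{BV97,NO02} supply.
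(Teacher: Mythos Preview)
The paper does not prove this proposition at all: it is stated with the citation \cite{BV97,NO02} and immediately used as a black box in the proof of Theorem~\ref{normal-form}. There is therefore no ``paper's own proof'' to compare against; your sketch is essentially a faithful outline of the Bernstein--Vazirani construction (finite-precision reduction, decomposition into near-trivial rotations via a dense-angle argument, well-formed conditional application of those rotations, and reversible cleanup), which is exactly what the citations point to, so in that sense your approach matches the intended source rather than anything in this paper.
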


The improved factor $\log(1/\varepsilon)$ in Proposition \ref{lemma:universal} is attributed to Kitaev \cite{Kit97} and Solovay (cited in \cite[Appendix 3]{NC00}).

For the proof of Theorem \ref{normal-form}, we need to simulate a universal QTM $U$ provided by Proposition \ref{lemma:universal} on a certain conservative QTM even with lower accuracy. Concerning the form of inputs given to $f$, we need to split a quadruplet $(M,x,t,\varepsilon)$ in the proposition into three parts $(M,\varepsilon)$, $t$, and $x$ and then modify $U$ so that $U$ can take any  input of the form $\qubit{\tilde{e}}\qubit{0^{t}1}\qubit{x}$, where $e=\pair{M,\varepsilon}$, and mimic $M$ on $x$ within time $t$ with accuracy
at most $\varepsilon$.
Furthermore, we need to force $U$ to produce \emph{clean outputs} by relocating  all non-blank symbols appearing in the left-side region of any output string to elsewhere in time polynomial in $t$.

Theorem \ref{normal-form} now follows directly by combining Lemmas \ref{lemma:first-part}--\ref{lemma:second-part} and Proposition
\ref{lemma:universal}.


\begin{proofof}{Theorem \ref{normal-form}}
As explained above, let $U$ denote a modified universal QTM that takes inputs of the form $\qubit{\tilde{e}}\qubit{0^t1}\qubit{x}$ for any numbers  $e,t\in\nat^{+}$ and any string $x\in\{0,1\}^*$ and produces clean outputs.
Given any quantum function $g\in\squareqp$,
Lemma \ref{lemma:first-part} guarantees the existence of a polynomial $r$ and a single-tape, $\appcomplex$-amplitude, conservative, plain QTM $M$ having clean outputs for which, on any input string $x\in\{0,1\}^n$,  $M$ produces within $r(n)$ steps a superposition  $g(\qubit{0^{p(n)}1}\qubit{x})\otimes \qubit{0,q_f}$ of skew final configurations composed of $M$'s essential tape region and $M$'s internal status.
To be more precise, we denote by $\widehat{U}_{M}$ the skew time-evolution operator of $M$ and by $c_{0,M}^{(x)}$ the skew initial  configuration of $M$ on any input $\qubit{x}$. We further set $\qubit{\zeta_{M,x}}$ to be $\widehat{U}_{M}^{r(n)}\qubit{c_{0,M}^{(x)}}$ and $g(\qubit{0^{p(n)}1}\qubit{x})$ to be $\qubit{\psi_{g,x}}$. Note that   $\qubit{\zeta_{M,x}}$ equals $\qubit{\psi_{g,x}}\otimes \qubit{0,q_f}$. Since  $\density{\psi_{g,x}}{\psi_{g,x}} = tr_{n}(\density{\psi_{g,x}}{\psi_{g,x}} \otimes \density{0,q_f}{0,q_f})$,  $\density{\psi_{g,x}}{\psi_{g,x}}$ can be expressed as $tr_{n}(\density{\zeta_{M,x}}{\zeta_{M,x}})$.

In contrast, we denote by $\widehat{U}_{\delta}$ the skew time-evolution operator of $U$ and by $c_{0,U}^{(x_e)}$ the skew initial configuration of $U$ on the input $\qubit{x_e}$ for $e=\pair{M,\varepsilon}$ and $x_e = \tilde{e} 0^{r(|x|)}1 x$. Let $m=|x_e|$ and set $\qubit{\zeta_{U,x_e}}$ to be $\widehat{U}_{\delta}^{p(m)}\qubit{c_{0,U}^{(x_e)}}$, which is written as  $\sum_{r\in FSC_{U,m}} \qubit{U[r]} \qubit{\xi_{x_e,r}}$ for an appropriate  set $\{\qubit{\xi_{x_e,r}}\}_{r}$ of orthogonal quantum states.
Proposition \ref{lemma:universal} then ensures that, by  an appropriate choice of a polynomial $s$,
the total variation distance between  $\{\|\bra{y}\widehat{U}_{M}^{t}\ket{c_{0,M}^{(x)}}\|^2\}_{y\in\{0,1\}^n}$ and $\{\|\bra{y}\widehat{U}_{\delta}^{s(t)} \ket{c_{0,U}^{(x_e)}}\|^2\}_{y\in\{0,1\}^n}$ is at most $\varepsilon$ for any number $t\geq0$.

We apply Lemma \ref{lemma:second-part} and then  obtain a $\squareqp$-quantum function $f$ such that, for any quantum state $\qubit{\phi}$, $f(\qubit{\phi})$ represents the result of $\widehat{U}^{s(r(n))}_{\delta}$ applied to $\qubit{\phi}$.
For convenience, we express  $f(\qubit{x_e})$ as $\qubit{\eta_{f,x_e}}$.
Lemma \ref{lemma:second-part} again implies that $\qubit{\eta_{f,x_e}} = \sum_{r\in FSC_{U,m}} \qubit{U[r]} \qubit{\widehat{\xi}_{x_e,r}}$
with $\measure{\xi_{x_e,r}}{\xi_{x_e,r'}} = \measure{\widehat{\xi}_{x_e,r}}{\widehat{\xi}_{x_e,r'}}$ and $\measure{\xi_{x,r}}{\xi_{x,r'}}=0$ if $r\neq r'$ for all $r,r'\in FSC_{U,m}$.
We thus obtain $tr_n(\density{\eta_{f,x_e}}{\eta_{f,x_e}}) = \sum_{r,r'\in FSC_{U,m}} \density{U[r]}{U[r']} \cdot tr(\density{\widehat{\xi}_{x_e,r}}{\widehat{\xi}_{x_e,r'}}) = \sum_{r,r'\in FSC_{U,m}} \measure{\widehat{\xi}_{x_e,r'}}{\widehat{\xi}_{x_e,r}}  \density{U[r]}{U[r']}$, which equals $\sum_{r\in FSC_{U,m}} \|\qubit{\widehat{\xi}_{x_e,r}}\|^2  \density{U[r]}{U[r]}$.
Similarly, we obtain $tr_n(\density{\zeta_{U,x_e}}{\zeta_{U,x_e}}) = \sum_{r\in FSC_{U,m}} \|\qubit{\xi_{x_e,r}}\|^2 \density{U[r]}{U[r]}$.
From those calculations together with $\|\qubit{\xi_{x_e,r}}\|=\|\qubit{\widehat{\xi}_{x_e,r}}\|$,
the equality $tr_{n}(\density{\eta_{f,x_e}}{\eta_{f,x_e}}) = tr_{n}(\density{\zeta_{U,x_e}}{\zeta_{U,x_e}})$ follows immediately.

We thus conclude that $\tracenorm{ \density{\psi_{g,x}}{\psi_{g,x}} -
tr_{n}(\density{\eta_{f,x}}{\eta_{f,x}}) } =
\tracenorm{ tr_{n}(\density{\zeta_{M,x}}{\zeta_{M,x}}) -  tr_{n}(\density{\zeta_{U,x_e}}{\zeta_{U,x_e}}) }$. The last term is upper-bounded by $\sum_{y:|y|=n} | \|\bra{y} \widehat{U}_{\delta}^{s(r(n))} \ket{c_{0,U}^{(x)}}\|^2 - \|\bra{y} \widehat{U}_{M}^{r(n)} \ket{c_{0,M}^{(x)}}\|^2   |$, which is clearly at most $\varepsilon$. Therefore, $f$ is universal.
\end{proofof}

\section{Proof of the Key Lemma}\label{sec:second-part}

To complete the proof of Theorem \ref{theorem:character}, we need to prove the key lemma, Lemma \ref{lemma:second-part}.
This section intends to provide  the lemma's desired proof.
Our proof is inspired by a result of Yao \cite{Yao93}, who demonstrated a quantum-circuit simulation of a QTM.

\subsection{Functional Simulation of QTMs}\label{sec:simulation-QTM}

An essence of the proof of Lemma \ref{lemma:second-part} is a direct step-by-step simulation of the behavior of a single-tape, $\appcomplex$-amplitude, conservative,
plain QTM $M=(Q,\Sigma,\Gamma,\delta,q_{0},Q_f)$, which has clean outputs.
For simplicity, we assume that
$\Sigma=\{0,1\}$ and $\Gamma=\{0,1,b\}$, where $b$
here stands for a unique blank tape symbol, instead of $\#$ used in early sections.
We further assume that $Q_f=\{q_f\}$ and $Q = \{0,1\}^{\ell}$ for a certain fixed even number $\ell>0$ with $q_0 =0^{\ell}$ and $q_f =1^{\ell}$.
Let us assume that, starting
with \emph{binary} input string $x$ written on the single input/work/output tape, $M$ halts in at most $p(|x|)$ steps, where $p$ is an appropriate polynomial associated only with $M$.
We further assume that all computation paths of $M$ on each input halt simultaneously.
For convenience, we also demand that $p(n)>\ell$ for any $n\in\nat$.
It is important to remember that $M$ eventually halts by entering a unique final inner state $q_f$ and making the tape head stationed at the start cell and that no non-blank symbol appears in the left-side region of any output string.

Let $x=x_1x_2\cdots x_n$
be any input given to $M$, where $x_i$ is a bit in $\{0,1\}$ for each index  $i\in[n]$. Associated with $p$, an essential tape region of $M$ on $x$ consists of all the tape cells indexed between $-p(n)$ and $+p(n)$.
We express the tape content of such an essential tape region as a string of the form $\sigma_1\sigma_2\cdots \sigma_{2p(n)+1}$ having length exactly $2p(|x|)+1$ over the tape alphabet $\Gamma = \{0,1,b\}$.
We trace the changes of these symbols as $M$ makes its moves, where  $\sigma_i$ is a tape symbol written at the cell indexed
$i-p(n)-1$ for every index $i\in[2p(n)+1]$.

Since all $\squareqp$-functions are defined to handle quantum states in $\HH_{\infty}$, we need to encode each tape symbol and
thus a tape content. We thus define a new qustring that properly encodes a
configuration $\gamma = (q,h,\sigma_1\sigma_2\cdots \sigma_{2p(n)+1})$
of $M$ to be
\[
 \qubit{q}\otimes \qubit{s_1,\hat{\sigma}_1}\otimes \qubit{s_2,\hat{\sigma}_2}\otimes
\qubit{s_3,\hat{\sigma}_3}\otimes
\cdots \otimes \qubit{s_{2p(n)+1},\hat{\sigma}_{2p(n)+1}},
\]
where each $\hat{\sigma}_i$ is in $\{\hat{0},\hat{1},\hat{b}\}$, each $s_i\in\{\hat{2},\hat{3}\}$ indicates the presence of the tape
head (where  $\hat{2}$ means ``the head rests here'' and $\hat{3}$ means ``no head is here'') at cell $i-p(n)-1$, and $q$ is an inner state in $Q$.
In the subsequent subsections, we call such a qustring a {\em code} of the
configuration $\gamma$ of $M$ and denote it by $\qubit{\hat{\gamma}}$. This code $\qubit{\hat{\gamma}}$ has length $\ell(\qubit{\hat{\gamma}}) = 8p(n)+\ell+4$, which is even and greater than $n$.

Given any binary input $x=x_1x_2\cdots x_n$ of length $n$, let us recall from Section \ref{sec:main-theorem} that    $\qubit{\phi^{p}(x)} = \qubit{0^{|x|}1} \qubit{0^{p(|x|)}1}\qubit{0^{11p(|x|)+6}1}\qubit{x}$ and $\qubit{\phi^{p,f}(x)} = \qubit{\widetilde{0^{|f(x)|}}} \qubit{0^{|f(x)|+1}1} \qubit{\phi^p(x)}$.
Except for Step 1) in Section \ref{step-initialization} as well as all steps in Section \ref{step-entire}, we always ignore the prefix strings  $\widetilde{0^{|f(x)|}} 0^{|f(x)|+1}1 0^{|x|}1 0^{p(|x|)}1$ in $\qubit{\phi^{p,f} (x)}$ and $0^{|x|}1 0^{p(|x|)}1$ in $\qubit{\phi^p(x)}$, and we pay our attention to the remaining qubits.
The desired quantum function $g$ will be constructed step by step through Sections \ref{step-initialization}--\ref{step-output}.

\subsection{Constructing a Coded Initial Configuration}\label{step-initialization}

Given a binary input $x=x_1x_2\cdots x_n$, the initial configuration $\gamma_0$ of $M$ on $x$ is of the form $(q_0,0,b\cdots b x b \cdots b)$, and thus the code  $\qubit{\hat{\gamma}_0}$ of $\gamma_0$ must have the form
\[
\qubit{q_0}\otimes
\qubit{\hat{3},\hat{b}}\otimes
\cdots \otimes \qubit{\hat{3},\hat{b}}\otimes
\qubit{\hat{2},\hat{x}_1}\otimes
\qubit{\hat{3},\hat{x}_2}\otimes
\qubit{\hat{3},\hat{x}_3}\otimes
\cdots \otimes \qubit{\hat{3},\hat{x}_n}\otimes
\qubit{\hat{3},\hat{b}}\otimes
\cdots \otimes \qubit{\hat{3},\hat{b}},
\]
where $\hat{x}_i$ is the code of $x_i$, $q_0$ ($=0^{\ell}$) is the initial inner state, and $x_1$ rests in cell $0$. Notice that $\ell(\qubit{\hat{\gamma}_0})=8p(n)+\ell+4$. In what follows, we show how to generate this particular code $\qubit{\hat{\gamma}_0}$ from the quantum state  $\qubit{\phi^p(x)}$.
For simplicity, we will ignore the term
$\qubit{\hat{q}_0}$ in the following steps except for Step 8).

1) Starting with the input $\qubit{\phi^p(x)}$, we first transform it to $\qubit{0^{n}1} \qubit{0^{p(n)}1} \qubit{10^{11p(n)+5}1}\qubit{x}$ by a quantum function  $h_1 = Skip[NOT]$, which satisfies both  $h_1(\qubit{0^m1}\qubit{\psi}) = \qubit{0^m1}\otimes NOT(\qubit{\psi})$ and $h_1(\qubit{0^{m+1}})=\qubit{0^{m+1}}$ for any number $m\in\nat$ and any quantum state  $\qubit{\psi}\in\HH_{\infty}$. Lemma \ref{lemma:add-zero} guarantees that $h_1$ actually exists in $\hatsquareqp$.

From $\qubit{0^{p(n)}1}\qubit{10^{11p(n)+5}1}$, we wish to generate $\qubit{0^{p(n)}1}\qubit{0^{p(n)}1}\qubit{0^{10p(n)+5}1}$ by an appropriately constructed $\hatsquareqp$-function $f_1$.
To define the desired quantum function $f_1$, we first construct another quantum function $g_1$ that maps   $\qubit{0^{p(n)}1}\qubit{0^m10^k1}$ to $\qubit{0^{p(n)}1}\qubit{0^{m+1}10^{k-1}1}$ for any two integers  $m\geq0$ and $k\geq1$ by setting
\[
g_1(\qubit{\phi}) = \left\{
\begin{array}{ll}
\qubit{\phi} & \mbox{if $\ell(\qubit{\phi})\leq 1$,} \\
\qubit{0}\otimes g_1(\measure{0}{\phi}) + \qubit{1}\otimes g_2(\measure{1}{\phi}) & \mbox{otherwise,}
\end{array} \right.
\]
where $g_2$ is introduced as
\[
g_2(\qubit{\phi}) = \left\{
\begin{array}{ll}
\qubit{\phi} & \mbox{if $\ell(\qubit{\phi})\leq 1$,} \\
SWAP(\qubit{0}\otimes g_2(\measure{0}{\phi}) + \qubit{1}\! \measure{1}{\phi}) & \mbox{otherwise.}
\end{array}  \right.
\]
More formally, we set $g_2 = QRec_1[I,SWAP,I|g_2,I]$ and set  $\hat{g}_2 = Branch[I,g_2]$. We then set $g_1= QRec_1[I,I,\hat{g}_2|g_1,I]$.
The quantum function $f_1$ is finally defined as $f_1=QRec_1[I,g_1,I|f_1,I]$; namely,
\[
f_1(\qubit{\phi}) = \left\{
\begin{array}{ll}
\qubit{\phi} & \mbox{if $\ell(\qubit{\phi})\leq 1$,} \\
g_1(\qubit{0}\otimes f_1(\measure{0}{\phi}) + \qubit{1}\! \measure{1}{\phi}) & \mbox{otherwise.}
\end{array} \right.
\]

In a similar manner, we further transform $\qubit{0^{p(n)}1}\qubit{0^{10p(n)+5}1}$ to $\qubit{0^{p(n)}1} \qubit{0^{2p(n)+2}1} \qubit{0^{8p(n)+2}1}$. We then change $\qubit{0^{n}1}\qubit{0^{2p(n)+2}1}$ to $\qubit{0^{n}1}\qubit{0^{n}1}\qubit{0^{2p(n)-n+1}1}$ and $\qubit{0^{n}1}\qubit{0^{8p(n)+2}1}$ to $\qubit{0^{n}1}\qubit{0^{n}1}\qubit{0^{8p(n)-n+1}1}$.
Overall, the input $\qubit{\phi^p(x)}$ is turned into $\qubit{0^{n}1}\qubit{(0^{p(n)}1)^3}  \qubit{(0^{n}1)^2} \qubit{0^{2p(n)-n+1}1} \qubit{0^{8p(n)-n+1}1} \qubit{x}$.

(*) In what follows, by ignoring $\qubit{0^{n}1}\qubit{(0^{p(n)}1)^3}  \qubit{(0^{n}1)^2}$, we assume that our input is temporarily
$\qubit{0^{8p(n)-n+1}1}\qubit{x}$.

2) For readability, we explain this step using an illustrative example of $\qubit{0^61}\qubit{x_1x_2x_3}$, which we intend to transform to $\qubit{00}\qubit{\hat{3}\hat{x}_1\hat{x}_2\hat{x}_3}$.
For this purpose, we begin with changing $\qubit{0^61x_1x_2x_3}$ to $\qubit{x_3x_2x_110^6}$ by applying $REVERSE$.

To obtain $\qubit{x_30x_20x_1001}\qubit{00}$ from $\qubit{x_3x_2x_1}\qubit{10^6}$, we further apply $g_3=SWAP\circ REP_1$, which changes  $\qubit{x_3x_2x_110^6}$ to $\qubit{x_30x_2x_110^5}$.
We repeatedly apply $g_3$ and transform $\qubit{x_3x_2x_1}\qubit{10^6}$ to $\qubit{x_30x_20x_10}\qubit{10^3}$. This process can be done by the quantum function
$h_3=2QRec_2[I,h',g_3|\{h''_s\}_{s\in\{0,1\}^2}]$ defined by the $2$-qubit quantum recursion, where $h' = Branch_2[\{h'_s\}_{s\in\{0,1\}^2}]$ with $h'_{a1}=SWAP$ and $h'_{a0} = I$ as well as $h''_{a1}=I$ and $h''_{a0}=h_3$ for each bit $a\in\{0,1\}$; namely,
\[
h_3(\qubit{\phi}) = \left\{
\begin{array}{ll}
\qubit{\phi} & \mbox{if $\ell(\qubit{\phi})\leq 2$,} \\
\sum_{a\in\{0,1\}} ( SWAP(\qubit{a1}\!\measure{a1}{\psi_{g_3,\phi}}) + \qubit{a0}\otimes h_3(\measure{a0}{\psi_{g_3,\phi}}) ) & \mbox{otherwise,}
\end{array} \right.
\]
where $\qubit{\psi_{g_3,\phi}}$ stands for $g_3(\qubit{\phi})$.
We further apply
$g_4 = REVERSE\circ h_3$ to change  $\qubit{x_3x_2x_1}\qubit{10^6}$ to $\qubit{0^31}\qubit{\hat{x}_1\hat{x}_2\hat{x}_3}$.
In a general case, the above process transforms  $\qubit{0^n1}\qubit{x_1x_2\cdots x_m}$ to $\qubit{1}\qubit{\hat{x}_1\hat{x}_2\cdots \hat{x}_n}$.

Finally, we apply $g_5$, which maps  $\qubit{0^31}\qubit{\hat{x}_1\hat{x}_2\hat{x}_3}$ to $\qubit{00}\qubit{\hat{3}\hat{x}_1\hat{x}_2\hat{x}_3}$, defined by
\[
g_5(\qubit{\phi}) = \left\{
\begin{array}{ll}
\qubit{\phi} & \mbox{if $\ell(\qubit{\phi})\leq 1$,} \\
SWAP( \qubit{00}\otimes g_5(\measure{00}{\phi}) + \sum_{y\in\{0,1\}^2-\{00\}}  (\qubit{y}\!\measure{y}{\phi})) & \mbox{otherwise.}
\end{array} \right.
\]
In general, $g_5$ changes $\qubit{0^{n+1}1}\qubit{x_1x_2\cdots x_n}$ to $\qubit{\hat{3}\hat{x}_1\hat{x_2}\cdots \hat{x}_n}$.

(*) To explain our procedure further, for readability, we include more $0$s to $\qubit{00}\qubit{\hat{3}\hat{x}_1\hat{x}_2\hat{x}_3}$, and hereafter we are focused on $\qubit{(00)^6}\qubit{\hat{3}\hat{x}_1\hat{x}_2\hat{x}_3}$.

3) We transform the first two bits $00$ in $\qubit{(00)^6}\qubit{\hat{3}\hat{x}_1\hat{x}_2\hat{x}_3}$ to $11$ ($=\hat{2}$)
and then move them to the end of the qustring, resulting in
$\qubit{(00)^5}\qubit{\hat{3}\hat{x}_1\hat{x}_2\hat{x}_3\hat{2}}$.
This transformation is carried out as follows. Similarly to $CNOT$, we define $k_1= NOT\circ SWAP\circ NOT$.
Since $k_1(\qubit{00}) = \qubit{11}$, it suffices to define $f_3= REMOVE_2\circ k_1$.

4) In the beginning, our qustring is of the form  $\qubit{\hat{3}\hat{x}_1\hat{x}_2\hat{x}_3\hat{2}}$ by ignoring the leading bits $(00)^5$.
We pay our attention to the qustring located between $\hat{3}$ and $\hat{2}$. We
place the last two bits $\hat{2}$ into the location immediately right to $\hat{3}\hat{x}_1$ together with changing $\hat{2}$ to $\hat{3}$.
We then obtain
$\qubit{\hat{3}\hat{x}_1 \hat{3}\hat{x}_2\hat{x}_3}$. This process is realized by an appropriate $\hatsquareqp$-function
in the following fashion.

Let $k_2$ denote a bijection
from $\{0,1\}^6$ to $\{0,1\}^6$ satisfying that
$k_2(vw\hat{2})=v\hat{2}w$ if $v,w\in\{\hat{0},\hat{1}\}$,
$k_2(\hat{3}w\hat{2})=\hat{3}w\hat{3}$ if $w\in\{\hat{0},\hat{1}\}$,
$k_2(w\hat{3}\hat{2}) = w\hat{3}\hat{3}$ if $w\in\{\hat{0},\hat{1}\}$,
and $k_2(vwz) = vwz$ if
$v,w,z\neq \hat{2}$. With this $k_2$, we define $g_5$ as
\[
g_5(\qubit{\phi}) = \left\{
\begin{array}{ll}
\qubit{\phi} & \mbox{if $\ell(\qubit{\phi})\leq 6$,} \\
\sum_{y:|y|=6}
\qubit{k_2(y)}\! \measure{y}{\phi} & \mbox{otherwise.}
\end{array} \right.
\]
By Lemma \ref{bijection-function}, $g_5$ belongs to $\hatsquareqp$.
We then define a quantum function $h_5$ by setting $h_5= 2QRec_2[I,g_5,I|h_5,h_5,h_5,h_5]$, namely,
\[
h_5(\qubit{\phi}) = \left\{
\begin{array}{ll}
\qubit{\phi} & \mbox{if
$\ell(\qubit{\phi})\leq 2$,} \\
\sum_{y\in\{\hat{0},\hat{1},\hat{2},\hat{3}\}}
g_5(\qubit{y}\otimes h_5(\measure{y}{\phi})) &
\mbox{otherwise.}
\end{array} \right.
\]
After placing $\hat{3}$ into the right of $\hat{3}\hat{x}_1$, we finally obtain the
qustring $\qubit{(00)^5}\qubit{\hat{3}\hat{x}_1\hat{3}\hat{x}_2\hat{x}_3}$.

5) Let us define $h'_5 = h_5\circ f_3$, which is the compositions of Steps 3)--4). By applying $h'_5$ repeatedly, we can generally change $\qubit{(00)^{n-1}}\qubit{\hat{3}\hat{x}_1\hat{x}_2\cdots \hat{x}_n}$ to $\qubit{\hat{3}\hat{x}_1 \hat{3}\hat{x}_2\cdots \hat{3}\hat{x}_n}$.
Furthermore, if we apply $h'_5$ twice to $\qubit{(00)^3} \qubit{\hat{3}\hat{x}_1\hat{3}\hat{x}_2\hat{3}\hat{x}_3}$ in our example,  then it is possible to append $\qubit{\hat{3}\hat{b}}$ to the end of the qustring by consuming $(00)^3$, and then we obtain $\qubit{\hat{3}\hat{x}_1\hat{3}\hat{x}_2\hat{3}\hat{x}_3} \qubit{\hat{3}\hat{b}}$.
Using $\qubit{0^{2p(n)-n+1}1}$ in $\qubit{\phi^p}$, we repeat Steps 3)--4)  $2p(n)-n+1$ times to encode the content of the first $p(n)+1$ tape cells indexed by nonnegative numbers.
Returning to our example, if we take $\qubit{0^41}$,
then this process transforms   $\qubit{0^{4}1}\qubit{(00)^6}\qubit{\hat{3}\hat{x}_1\hat{x}_2\hat{x}_3}$ into $\qubit{0^{4}1}\qubit{(00)^2} \qubit{\hat{3}\hat{x}_1\hat{3}\hat{x}_2\hat{3}\hat{x}_3} \qubit{\hat{3}\hat{b}}$.
To realize this transform by an appropriate $\hatsquareqp$-function,
since Steps 3)--4) exclude $\qubit{0^41}$, we first need to extend $h'_5$ to $H_1=Skip[h'_5]$ by Lemma \ref{lemma:add-zero}.
The repetition of Steps 3)--4) is done by the following
quantum function $f_5$:
\[
 f_5(\qubit{\phi}) = \left\{
\begin{array}{ll}
\qubit{\phi} & \mbox{if $\ell(\qubit{\phi})\leq 1$,} \\
H_1(\qubit{0}\otimes
 f_5(\measure{0}{\phi}) +
\qubit{1}\!\measure{1}{\phi}) &
\mbox{otherwise.}
\end{array} \right.
\]
More generally, $f_5$ changes $\qubit{0^{2p(n)-n+1}1} \qubit{(00)^{2p(n)-n+1}} \qubit{\hat{3}\hat{x}_1 \hat{x}_2\cdots \hat{x}_n}$ to $\qubit{0^{2p(n)-n+1}1} \qubit{\hat{3}\hat{x}_1\hat{3}\hat{x}_2\cdots \hat{3}\hat{x}_n} \qubit{\hat{3}\hat{b}\hat{3}\hat{b}\cdots \hat{3}\hat{b}}$ with $p(n)-n+1$ copies of $\hat{3}\hat{b}$.

6) Suppose that our qustring has the form $\qubit{(00)^2}\qubit{\hat{3}\hat{x}_1 \hat{3}\hat{x}_2\hat{3} \hat{x}_3}\qubit{\hat{3}\hat{b}}$ by ignoring $\qubit{0^41}$.
We want to change the leftmost
$\hat{3}$ to $\hat{2}$, resulting in $\qubit{(00)^2}\qubit{\hat{2}\hat{x}_1\hat{3}\hat{x}_2\hat{3}
\hat{x}_3}\qubit{\hat{3}\hat{b}}$.
For our purpose, we first choose a unique bijection $p$ satisfying that $p(\hat{2}) = \hat{3}$, $p(\hat{3}) = \hat{2}$, and $p(y)=y$ for all other $y\in\{0,1\}^2$. Using Lemma \ref{bijection-function}, we expand this bijection $p$ to its associated quantum function $g_p$. The quantum function $f_6$ is then defined as
\[
 f_6(\qubit{\phi}) = \left\{
\begin{array}{ll}
\qubit{\phi} & \mbox{if $\ell(\qubit{\phi})\leq 2$,}  \\
g_p ( \qubit{00}\otimes f_6(\measure{00}{\phi}) +  \sum_{y\in\{0,1\}^2-\{00\}} \qubit{y}\! \measure{y}{\phi} )
& \mbox{otherwise.}
\end{array} \right.
\]

7) We then change the series of $00$'s in $\qubit{(00)^2}\qubit{\hat{2}\hat{x}_1\hat{3}\hat{x}_2\hat{3}
\hat{x}_3}\qubit{\hat{3}\hat{b}}$ into
$10$'s. To make this change, we prepare $h_6 = SWAP \circ NOT \circ CNOT \circ NOT$. Note that $h_6(\qubit{00}) = \qubit{10}$ and $h_6(\qubit{11})=\qubit{11}$. We then set
\[
 f_7(\qubit{\phi}) = \left\{
\begin{array}{ll}
\qubit{\phi} & \mbox{if $\ell(\qubit{\phi})\leq 2$,} \\
h_6(\qubit{00}\otimes f_7(\measure{00}{\phi}) +
   \sum_{y:|y|=2\wedge y\neq 00} \qubit{y}\!\measure{y}{\phi} )
  & \mbox{otherwise.}
\end{array} \right.
\]
This quantum function $f_7$ changes $\qubit{(00)^2}$ to $\qubit{(10)^2}$, which equals $\qubit{\hat{3}\hat{b}}$, and thus
we obtain
$\qubit{\hat{3}\hat{b}}\qubit{\hat{2} \hat{x}_1\hat{3}\hat{x}_2\hat{3}\hat{x}_3}
\qubit{\hat{3}\hat{b}}$.
To obtain $p(n)$ copies of $\qubit{\hat{3}\hat{b}}$ in the left-side region of $\qubit{\hat{2}}$ in general,
the string $(00)^{2p(n)}$ is needed to consume.

8) In this final step, we include the term $\qubit{q_0}$ ($=\qubit{0^{\ell}}$) into our procedure.
We combine Steps 1)--7) to transform $\qubit{0^{\ell}} \qubit{0^{2p(n)-n+1}1} \qubit{0^{8p(n)-n+2}1} \qubit{x}$ to $\qubit{q_0} \qubit{\hat{3}\hat{b} \cdots \hat{3}\hat{b}} \qubit{\hat{2}\hat{x}_1\hat{3}\hat{x}_2 \hat{3}\hat{x}_3 \cdots \hat{3}\hat{x}_n} \qubit{\hat{3}\hat{b} \cdots \hat{3}\hat{b}}$ by the quantum function $F_1$ defined as $F_1 = RevBranch_{\ell}[\{g_s\}_{s\in\{0,1\}^{\ell}}]$, where
$g_{0^{\ell}}=f_7$ and $g_{s}=I$ for any string $s$ different from $0^{\ell}$.

\subsection{Simulating a Single Step}\label{step-single}

To simulate an entire computation of $M$ on any given input $x$, we need to simulate all steps of $M$ one by one until $M$ eventually enters the final inner state $q_f$. In what follows, we demonstrate how to simulate a single step of $M$ by changing a
head position, a tape symbol, and an inner state in a given skew configuration.

Note that $M$'s step involves only three consecutive cells, one of which is being scanned by the tape head. To describe three consecutive cells together with $M$'s inner state, in general, we use an \emph{expression} $r$ of the form $ps_1\sigma_1s_2\sigma_2s_3\sigma_3$ using  $p\in\{0,1\}^{\ell}$,
$\sigma_i\in\{0,1,b\}$, and
$s_i\in\{2,3\}$ for each index $i\in[3]$.
Each expression with $s_i=2$ indicates that $M$ is in state $q$, scanning the $i$th cell of the three cells. Note
that the length of $r$ is $\ell+6$.  Let $T$ be the set of all
possible such $r$'s.
Notice that $T$ is a finite set. The \emph{code} $\qubit{\hat{r}}$ of $r = ps_1\sigma_1s_2\sigma_2s_3\sigma_3$ is $\qubit{q}\qubit{\hat{s}_1,\hat{\sigma}_1} \qubit{\hat{s}_2,\hat{\sigma}_2}\qubit{\hat{s}_3,\hat{\sigma}_3}$, which is  of length $\ell+12$.


For simplicity, we call $r$ a {\em target} if $s_1=3$, $s_2=2$,
$s_3=3$, and $\delta(q,\sigma_2)$ is defined. For later use, $s_1\sigma_1s_2\sigma_2s_3\sigma_3$ without $q$ is called a \emph{pre-target} if $qs_1\sigma_1s_2\sigma_2s_3\sigma_3$ is a target.

1) Let $r = q s_1\sigma_1s_2\sigma_2s_3\sigma_3$ be any fixed element in $T$. We prepare a flag qubit $\qubit{0}$ in the end of $\qubit{\phi}$ to mark that a simulation of $M$'s single step is in progress or has been already done. Let us define a quantum function $f_8$, which transforms $\qubit{\tilde{r}}\qubit{0}$ to either $\qubit{\tilde{r}}\qubit{0}$ or $\qubit{\tilde{s}}\qubit{1}$, where $s$ is an expression obtained from $r$ by applying $\delta$ once if $r$ is a target. Let $A$ denote the set of all targets in $T$ and set $A^{\diamond} = \{\tilde{r}\mid r\in A\}$.
Similarly, we define $T^{\diamond}$ from $T$.

For this purpose, we first define a supporting quantum function $h_{r}(\qubit{b}\qubit{\psi})  = NOT(\qubit{b})\otimes \qubit{\tilde{r}}\!\measure{\tilde{r}}{\psi} + \sum_{s\in\{0,1\}^{\ell+12}-A^{\diamond}} (\qubit{b} \otimes \qubit{s}\!\measure{s}{\psi})$ for any target $r\in T$ and any $b\in\{0,1\}$.
Next, we define $\{g_r\}_{r\in T}$ as follows.
If $r$ is a non-target in $T$, then we set $g_r = I$. In what follows, we assume that $r$ is a target. Let $g_r(\qubit{0}\qubit{s}\qubit{\phi}) = \qubit{0}\qubit{s}\qubit{\phi}$ for any string $s\in\{0,1\}^{\ell+12}$.
Since $M$ is plain, $M$ has only two kinds of transitions, shown in (i) and (ii) below.

(i) Consider the case where $\delta(q,\sigma_2) =
e^{i\theta}\qubit{q',\tau,d}$. When $d=L$, we set
\[
g_r(\qubit{1} \qubit{q} \qubit{\hat{3},\hat{\sigma}_1}
\qubit{\hat{2},\hat{\sigma}_2}\qubit{\hat{3},\hat{\sigma}_3} \qubit{\phi})
=
e^{i\theta} \qubit{1} \qubit{q'} \qubit{\hat{2},\hat{\sigma}_1} \qubit{\hat{3},\hat{\tau}}
\qubit{\hat{3},\hat{\sigma}_3} \qubit{\phi}.
\]
When $d=R$, in contrast, we define
\[
g_r(\qubit{1} \qubit{q} \qubit{\hat{3},\hat{\sigma}_1}
\qubit{\hat{2},\hat{\sigma}_2}\qubit{\hat{3},\hat{\sigma}_3} \qubit{\phi})
=
e^{i\theta} \qubit{1} \qubit{q'} \qubit{\hat{3}, \hat{\sigma}_1} \qubit{\hat{3},\hat{\tau}}
\qubit{\hat{2},\hat{\sigma}_3} \qubit{\phi}.
\]

(ii) Consider the case where $\delta(q,\sigma_2) =
\cos\theta\qubit{q_1,\tau_1,d_1} + \sin\theta\qubit{q_2,\tau_2,d_2}$.
If $(d_1,d_2)=(R,L)$, then we define $g_r$ as
\[
g_r(\qubit{1} \qubit{q} \qubit{\hat{3},\hat{\sigma}_1}\qubit{\hat{2},\hat{\sigma}_2}
\qubit{\hat{3},\hat{\sigma}_3} \qubit{\phi})
= \cos\theta
\qubit{1} \qubit{q_1} \qubit{\hat{3},\hat{\sigma}_1} \qubit{\hat{3},\hat{\tau}_1}
\qubit{\hat{2},\hat{\sigma}_3} \qubit{\phi}
+ \sin\theta
\qubit{1} \qubit{q_2} \qubit{\hat{2},\hat{\sigma}_1} \qubit{\hat{3},\hat{\tau}_2}
\qubit{\hat{3},\hat{\sigma}_3} \qubit{\phi}.
\]
The other values of $(d_1,d_2)$ are similarly handled.

Notice that $\{g_r(\qubit{1}\qubit{\tilde{r}})\}_{r\in A}$ forms an orthonormal set because $M$ is well-formed, and thus $\delta$ satisfies all the conditions stated in Section \ref{sec:def-QTMs}.
Once the flag qubit becomes $\qubit{1}$, we do not need to apply $g_r\circ h_r$. Thus, we further set $g'_r = Branch[g_r\circ h_r, I]$. By combining all $g'_r$'s, we define $g$ as $g = Compo[\{g'_r\}_{r\in T}]$, which implies
\[
g(\qubit{0}\qubit{\phi}) = \sum_{r\in T} ( g_r(\qubit{1}\qubit{\tilde{r}}\! \measure{\tilde{r}}{\phi} )) + \sum_{s\in \{0,1\}^{\ell+12}-T^{\diamond}} ( \qubit{0}\qubit{s}\! \measure{s}{\phi} )
\]
for any quantum state $\qubit{\phi}\in\HH_{\infty}$.
We can claim that $g$ is norm-preserving and it can be constructed from the initial quantum functions in Definition \ref{def:initial}
by applying the construction rules. From this claim, $g$ falls in  $\hatsquareqp$.
Finally, we define $f_8 = REMOVE_1\circ (g^{\leq \ell+13}\otimes I)$, where $g^{\leq \ell+13}\otimes I$ is defined as in Lemma \ref{two-diff-function}.
Intuitively, $f_8$ changes the content of three consecutive tape cells whose middle cell is being scanned by a tape head.

2) We want to apply $f_8$ to all three consecutive tape cells.
Firstly, we find a code of a pre-target $\hat{s}_1\hat{\sigma}_1\hat{s}_2 \hat{\sigma}_2\hat{s}_3\hat{\sigma}_3$ (using the quantum
recursion), move an inner state $q$ as well as a marker $\qubit{0}$ forward (by $REP_{\ell+1}$) to generate a block of the form
$\qubit{0}\qubit{q}\qubit{\hat{s}_1\hat{\sigma}_1 \hat{s}_2\hat{\sigma}_2\hat{s}_3\hat{\sigma}_3}$ and apply $f_8$ to this block. This changes $\qubit{0}$ to $\qubit{1}$ to record the execution of the current procedure.
We then move the obtained inner state and the marker back to the end (by $REMOVE_{\ell+1}$).
This entire procedure can be executed by an appropriate quantum function, say, $F_2$.

To be more formal, we first set another quantum function $p$ to be $REMOVE_{\ell+1}\circ LENGTH_{\ell+13}[f_8] \circ REP_{\ell+1}$.
The quantum function $F'_2$ is then defined as
\[
F'_2 (\qubit{\phi}\qubit{0}\qubit{q}) = \left\{
\begin{array}{ll}
\qubit{\phi}\qubit{0}\qubit{q} & \mbox{if $\ell(\qubit{\phi}) < \ell+13$,} \\
 p(\sum_{s:|s|=4} (\qubit{s}\otimes F'_2(\measure{s}{\phi}\qubit{0}\qubit{q}))) & \mbox{otherwise.}
\end{array}  \right.
\]
To complete the transformation, we further define $F_2 = REP_{\ell+1}\circ F'_2 \circ REMOVE_{\ell+1}$. After the application of $F_2$, the first qubit of $\qubit{0}\qubit{q}\qubit{\phi}$ turns to $\qubit{1}$, marking an execution of $M$'s single step. If we want to repeat an application of $F_2$, we need to reset $\qubit{1}$ back to $\qubit{0}$.

\subsection{Completing the Entire Simulation}\label{step-entire}

We have shown in Section \ref{step-single} how to simulate a single step of $M$ on $x$ by $F_2$. Here, we want to simulate all steps of $M$ by applying $F_2$ inductively to any input of the form  $\qubit{0^{p(|x|)}1}\otimes\qubit{\psi}$,  where
$\qubit{\psi}$ represents a superposition of coded skew configurations of $M$ on $x$. This process is implemented by a new quantum function $F_3$, which repeatedly applies $NOT\circ F_2$ to $\qubit{0}\qubit{\psi}$, $p(|x|)$ times, where $NOT$ resets $\qubit{1}$ to $\qubit{0}$.

We first take a quantum function  $\hat{g} = Skip[NOT\circ F_2]$ by Using Lemma \ref{lemma:add-zero}.
The desired quantum function $F_3$ must satisfy
\[
F_3(\qubit{\phi}) =  \left\{
\begin{array}{ll}
\qubit{\phi} & \mbox{if
$\ell(\qubit{\phi})\leq 1$,} \\
\qubit{0}\otimes
\hat{g}(F_3(\measure{0}{\phi})) + \qubit{1}\otimes I(\measure{1}{\phi}) &
\mbox{otherwise.}
\end{array} \right.
\]
Note that the number of the applications of $\hat{g}$ is exactly $p(|x|)$. This function $F_3$ can be realized with a use of the single-qubit quantum recursion of the form $F_3= QRec_1[I,Branch[\hat{g},I],I|F_3,I]$.

\subsection{Preparing an Output}\label{step-output}

Assume that a superposition of coded skew final configurations of $M$ on the given input $x$ of length $n$ has the form $\sum_{r\in FSC_{M}(x)} \qubit{\widetilde{M[r]}} \qubit{\xi_{x,r}}$ for a certain series  $\{\qubit{\xi_{x,r}}\}_{r\in FSC_{M,n}}$ of quantum states, where $M[r]$ indicates $M$'s output string appearing in a skew final configuration $r$ including only an essential tape region of $M$ on $x$.

To show the first part of Lemma \ref{lemma:second-part}, in the end of the simulation,
we need to generate $\widetilde{M[r]}$ in the leftmost portion of the qustring obtained by the simulation.
To achieve this goal, we first move the content of the left-side region of the start cell to the right end of the essential tape region.

As an illustrative example, suppose that we have already obtained the quantum state $\qubit{\hat{3}\hat{b} \hat{3}\hat{b} \hat{3}\hat{b}} \qubit{\hat{2}\hat{0}
\hat{3}\hat{0} \hat{3}\hat{1}} \qubit{\hat{3}\hat{b}}$ after executing steps in Section \ref{step-entire}. In this case, the outcome of $M$ in this skew final configuration $r$ is $001$, and thus  $\widetilde{M[r]} =\hat{0}\hat{0}\hat{1}\hat{2}$ holds.
In what follows, we want to transform  $\qubit{\hat{3}\hat{b} \hat{3}\hat{b} \hat{3}\hat{b}} \qubit{\hat{2}\hat{0}
\hat{3}\hat{0} \hat{3}\hat{1}} \qubit{\hat{3}\hat{b}}$  into
$\qubit{\hat{0}\hat{0}\hat{1}\hat{2}} \qubit{\hat{3}\hat{b} \hat{3}\hat{b} \hat{3}\hat{b}} \qubit{\hat{1}\hat{3}\hat{3}\hat{3}}$, which equals $\qubit{\widetilde{M[r]}} \qubit{\hat{3}\hat{b} \hat{3}\hat{b} \hat{3}\hat{b}} \qubit{\hat{1}\hat{3}\hat{3}\hat{3}}$, by an appropriate $\hatsquareqp$-function, say, $F_4$.

(i) Starting with $\qubit{\hat{3}\hat{b} \hat{3}\hat{b} \hat{3}\hat{b}} \qubit{\hat{2}\hat{0} \hat{3}\hat{0} \hat{3}\hat{1}} \qubit{\hat{3}\hat{b}}$,  to mark the end portion of the tape cell, we  change the last marker $\hat{3}$ to $\hat{1}$ by applying $NOT$ to $\hat{3}$ and then obtain $\qubit{\hat{3}\hat{b}\hat{3}\hat{b} \hat{3}\hat{b}} \qubit{\hat{2}\hat{0} \hat{3}\hat{0}\hat{3}\hat{1}}\qubit{\hat{1}\hat{b}}$. This process is referred to as $f_9$.

(ii) By moving repeatedly the leftmost $\qubit{\hat{3}\hat{b}}$
in $\qubit{\hat{3}\hat{b}\hat{3}\hat{b} \hat{3}\hat{b}} \qubit{\hat{2}\hat{0} \hat{3}\hat{0}\hat{3}\hat{1}}\qubit{\hat{1}\hat{b}}$ to the end of of the qustring, we eventually produce
$\qubit{\hat{2}\hat{0} \hat{3}\hat{0}\hat{3}\hat{1}} \qubit{ \hat{1}\hat{b}  \hat{3}\hat{b}\hat{3}\hat{b} \hat{3}\hat{b}}$.
To realize this entire transform, we need to define
\[
f_{10}(\qubit{\phi}) = \left\{
\begin{array}{ll}
\qubit{\phi} & \mbox{if $\ell(\qubit{\phi})<4$,} \\
\sum_{a\in\{0,1\}} \qubit{\hat{2}\hat{a}}\! \measure{\hat{2}\hat{a}}{\phi} + \sum_{z\in B_4} REMOVE_4 (\qubit{z} \otimes f_{10}(\measure{z}{\phi})) & \mbox{otherwise,}
\end{array}  \right.
\]
where $B_4 = \{0,1\}^4 - \{\hat{2}\hat{0},\hat{2}\hat{1}\}$.
More formally, we define $f_{11}$ as $f_{11} = 2QRec_4[I,h'',I|\{f_z\}_{z\in\{0,1\}^4}]$, where $h''= Branch[\{h''_z\}_{z\in\{0,1\}^4}]$ with $h''_{\hat{2}\hat{0}}=h''_{\hat{2}\hat{1}} = I$ and $h''_z = REMOVE_4$ as well as $f_{\hat{2}\hat{0}}= f_{\hat{2}\hat{1}} = I$ and $f_{z}= f_{10}$ for any $z\in B_4$.


(iii) The current qustring has the form  $\qubit{\hat{2}\hat{0}\hat{3}\hat{0}\hat{3}\hat{1}}
\qubit{\hat{1}\hat{b}} \qubit{\hat{3}\hat{b} \hat{3}\hat{b} \hat{3}\hat{b}}$.
We sequentially remove each of the markers $\{\hat{1}, \hat{2}, \hat{3}\}$ in $\qubit{\hat{2}\hat{0}\hat{3}\hat{0}\hat{3}\hat{1}}
\qubit{\hat{1}\hat{b}}$
to the end of the entire qustring and produce $\qubit{\hat{0}\hat{0}\hat{1} \hat{b}}
\qubit{\hat{3}\hat{b}\hat{3}\hat{b}\hat{3}\hat{b}} \qubit{\hat{1}\hat{3}\hat{3}\hat{2}}$.
To implement this process, we apply $h_8$ defined by $h_8 = 2QREC_3[I,REMOVE_2,I| \{h_{y}\}_{y\in\{0,1\}^4}]$, where $h_{\hat{1}\hat{b}}=I$ and $h_y=h_8$ for all $y\in\{0,1\}^4-\{\hat{1}\hat{b}\}$; in other words,
\[
h_8(\qubit{\phi}) = \left\{
\begin{array}{ll}
\qubit{\phi} & \mbox{if $\ell(\qubit{\phi})< 4$,} \\
REMOVE_{2}( \qubit{\hat{1}\hat{b}}\! \measure{\hat{1}\hat{b}}{\phi} + \sum_{y\in\{0,1\}^4-\{\hat{1}\hat{b}\}} \qubit{y}\otimes
h_8(\measure{y}{\phi}) ) & \mbox{otherwise.}
\end{array} \right.
\]

(iv) At last, we change the rightmost $\hat{b}$ in $\qubit{\hat{0}\hat{0}\hat{1} \hat{b}}$ to $\hat{2}$ and then obtain  $\qubit{\hat{0}\hat{0}\hat{1}\hat{2}}$ ($=\qubit{\widetilde{001}}$).
To produce such a qustring, we apply the quantum function $h_6$ defined by
\[
h_9(\qubit{\phi}) = \left\{
\begin{array}{ll}
\qubit{\phi} & \mbox{if $\ell(\qubit{\phi})<2$,} \\
 \sum_{y\in\{\hat{0},\hat{1}\}}
\qubit{y}\otimes
h_9(\measure{y}{\phi}) + \qubit{\hat{b}}\!\measure{\hat{2}}{\phi} + \qubit{\hat{2}}\!\measure{\hat{b}}{\phi} & \mbox{otherwise.}
\end{array} \right.
\]
Formally, we set $h_9=2QRec_1[I,h'_9,I|\{h_y\}_{y\in\{0,1\}^2}]$, where $h'_9=Branch[\{h''_y\}_{y\in\{0,1\}^2}]$ with $h''_{\hat{2}}=h''_{\hat{b}}=SWAP\circ NOT\circ SWAP$ and $h''_y=I$  together with $h_{\hat{2}}=h_{\hat{b}}=I$ and $h_y=h_9$ for any $y\in\{\hat{0},\hat{1}\}$.

(vi) To perform Steps (i)--(v) at once, we combine all quantum functions used in Steps (i)--(v) and define a single quantum function $F_4= h_9\circ
 h_8 \circ f_{10} \circ f_9$. Overall, the resulted qustring is $\qubit{\hat{0}\hat{0}\hat{1} \hat{2}}
\qubit{\hat{3}\hat{b}\hat{3}\hat{b}\hat{3}\hat{b}} \qubit{\hat{1}\hat{3}\hat{3}\hat{2}}$.

\s

The quantum state $\qubit{\phi^p_{F_4}(x)}$ can be expressed as $\sum_{r\in FSC_{M,n}} \qubit{\widetilde{M[r]}} \qubit{\widehat{\xi}_{x,r}}$ for certain quantum states $\{\qubit{\widehat{\xi}_{x,r}}\}_{r}$. Since we deal with an  essential tape region of $M$, it instantly follows  that,  for every $x,x'\in\{0,1\}^n$ and $r,r'\in FSC_{M,n}$,   $\|\measure{\xi_{x,r}}{\xi_{x',r'}}\| = \|\measure{\widehat{\xi}_{x,r}}{\widehat{\xi}_{x',r'}}\|$ and $\measure{\hat{\xi}_{x,r}}{\hat{\xi}_{x,r'}}=0$ if $r\neq r'$.
Therefore, $F_4$ satisfies the condition of the first part of the lemma.

For the second part of the lemma, we further need to retrieve  $\qubit{M[r]}$ from the coded qustring $\qubit{\widetilde{M[r]}}$. From the previous illustrative example $\qubit{\hat{0}\hat{0}\hat{1} \hat{2}}
\qubit{\hat{3}\hat{b}\hat{3}\hat{b}\hat{3}\hat{b}} \qubit{\hat{1}\hat{3}\hat{3}\hat{2}}$, we need to produce  $\qubit{001}\qubit{1} \qubit{\hat{3}\hat{b}\hat{3}\hat{b}\hat{3}\hat{b}} \qubit{\hat{1}\hat{3}\hat{3}\hat{2}} \qubit{1000}$.
For this purpose, we define the quantum function $g_7$ by setting $g_7 = 2QRec_1[I,REMOVE_1,I|\{g'_z\}_{z\in\{0,1\}^2}]$; namely,
\[
g_7(\qubit{\phi}) = \left\{
\begin{array}{ll}
\qubit{\phi} & \mbox{if $\ell(\qubit{\phi})<2$,} \\
REMOVE_1( \sum_{y\in\{0,1\}} ( \qubit{0y}\otimes g_7(\measure{0y}{\phi}) + \qubit{1y}\!\measure{1y}{\phi}) ) & \mbox{otherwise.}
\end{array} \right.
\]
Using $g_7$, we finally set $F_5 = g_7\circ F_4$ to obtain the second part of the lemma.

This completes the proof of Lemma \ref{lemma:second-part}.


\subsection{Simple Applications of the Simulation Procedures}

The proof of Lemma \ref{lemma:second-part} provides useful procedures not only for the construction of the desired quantum function $g$ but also for other special-purpose quantum functions.
Hereafter, as simple applications of the simulation procedures given
in Steps 1)--6) of Section \ref{step-initialization}, we will explain how to encode/decode classical strings and how to duplicate classical information by $\hatsquareqp$-functions.

Steps 1)--8) in Section \ref{step-initialization} describe a transformation between classical strings and their encodings. A slight modification of Step 2) introduces an encoder $Encode$, which properly encodes binary strings $s$ to $\tilde{s}$.

\begin{lemma}\label{encode-decode}
There exists a quantum function $Encode$ in $\hatsquareqp$ that satisfies $Encode(\qubit{0^{k+1}1}\otimes \qubit{\phi}) = \sum_{s\in\{0,1\}^k}(\qubit{\tilde{s}}\otimes \qubit{s}\!\measure{s}{\phi})$ for any number $k\in\nat$ and any quantum state $\qubit{\phi}\in\HH_{\infty}$. Moreover, the quantum function $Decode = Encode^{-1}$ is also in $\hatsquareqp$.
\end{lemma}

Given extra bits $0^k1$, the first $k$ qubits of $\qubit{\phi}$ is properly encoded by $Encode$ by consuming $0^k1$.
For example, $Encode$ changes $\qubit{0^31}\qubit{a_1a_2}$ to $\qubit{\hat{a}_1\hat{a}_2\hat{2}}$ and $Decode$ returns $\qubit{\hat{a}_1\hat{a}_2\hat{2}}$ back to $\qubit{0^31}\qubit{a_1a_2}$. Notice that Proposition \ref{inverse} ensures $Decode\in\hatsquareqp$ from $Encode\in\hatsquareqp$.

Quantum mechanics in general prohibits us from duplicating unknown quantum states; however, it is possible to copy each classical string quantumly. We thus have the following quantum function $COPY_2$, which copies the content of the first $k$ qubits of any input.

\begin{lemma}\label{copy-algorithm}
There exists a quantum function $COPY_2$ in $\hatsquareqp$ that satisfies the following condition: for any $k\in\nat^{+}$
and any $\qubit{\phi}\in\HH_{\infty}$,
\[
COPY_2(\qubit{\widetilde{0^{k}}}\otimes \qubit{\phi}) = \sum_{s\in\{0,1\}^{k}} (\qubit{\tilde{s}}\otimes  \qubit{\tilde{s}}\! \measure{\tilde{s}}{\phi}).
\]
\end{lemma}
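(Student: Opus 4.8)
The plan is to build $COPY_2$ so that, reading the input $\qubit{\widetilde{0^k}}\otimes\qubit{\phi}$, it walks along the two blocks in tandem and, for each bit of the encoded prefix of $\qubit{\phi}$, writes a copy into the corresponding slot of the $\widetilde{0^k}$ block using a CNOT-style operation. The key observation is that $\widetilde{0^k}=\hat0\hat0\cdots\hat0\hat2 = (00)^k 11$ is a fixed basic qustring, so copying onto it is purely a matter of controlled negations: to copy a classical bit $a$ onto a fixed $0$, one applies $NOT$ to the target exactly when the source is $1$, which is precisely what $CNOT$ (Lemma~\ref{simple-function}(1)) does when the source qubit sits immediately before the target. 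Because the paper has already assembled $REMOVE_k$, $REP_k$, $SWAP_{i,j}$, $Branch_k$, $RevBranch_k$, $LENGTH_k$, and the length-guarded quantum recursion, the whole construction is an exercise in routing qubits into adjacent position, applying $CNOT$, and routing them back.

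Concretely, first I would handle the trivial length cases with $LENGTH$: when $\ell(\qubit{\phi'})$ is too small to contain both an intact $\widetilde{0^k}$ block and a length-$k$ encoded prefix, output the identity. For the main case, the cleanest route is to process one code-pair at a time by quantum recursion on the remaining length. At stage $i$ (for $i=1,\dots,k$) the state looks like $\qubit{\hat0\hat0\cdots}$ (remaining target slots, $2(k-i)+2$ qubits) followed by the already-copied pairs, followed by the code $\hat s_1\cdots\hat s_k\hat 2$ of the source, followed by the rest of $\qubit{\phi'}$. Using $REMOVE$/$REP$ I would bring the first unfilled target pair $\hat0\hat0$ and the $i$-th source pair $\hat s_i$ into four adjacent qubits, apply $CNOT$ twice (once for each of the two bits of the code, since $\hat0,\hat1,\hat 2,\hat3$ all fit in two bits and the source here is one of $\hat0,\hat1$) between the matching positions, and then route everything back. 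Because each $\hat s_i\in\{\hat0,\hat1\}$ has second bit equal to $s_i$ and first bit $0$, the two $CNOT$s turn $\hat0\hat0$ into $\hat s_i$ exactly. The endmarker $\hat 2=11$ is copied the same way (two $CNOT$s from a fixed pair $11$), which is why the formula in Lemma~\ref{copy-algorithm} has $\qubit{\tilde s}$, i.e. $\hat s_1\cdots\hat s_k\hat 2$, on the left factor. After all $k$ pairs plus the endmarker are copied, a final $REVERSE$/$REMOVE$ pass restores the natural order, giving $\sum_{s}\qubit{\tilde s}\otimes\qubit{\tilde s}\!\measure{\tilde s}{\phi}$ by linearity (Lemma~\ref{hatsquare-property}(2,3)), since $CNOT$ acts on basic qustrings exactly as classical copying and the construction is linear.

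Alternatively, and perhaps more transparently, one can reuse the machinery of Steps 1)--6) of Section~\ref{step-initialization} almost verbatim: those steps already show how to interleave a block of $\hat 3\hat x_i$-style pairs; here one instead interleaves $\widetilde{0^k}$ with the source code and applies $CNOT$ at each aligned pair, so I would cite that construction and note the only change is replacing the insertion-of-$\hat3$ step by a $CNOT$ step. Either way, $COPY_2\in\hatsquareqp$ follows because every building block ($CNOT$, $REMOVE_k$, $REP_k$, $SWAP_{i,j}$, $REVERSE$, $Branch$, $LENGTH_k$, $QRec$) is in $\hatsquareqp$ and $\hatsquareqp$ is closed under composition.

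The main obstacle I anticipate is bookkeeping the indices during the recursion: at stage $i$ the target block has shrunk by two qubits while the source code has been partly traversed, so the $REMOVE$/$REP$ offsets needed to align the $i$-th source pair with the $i$-th target slot change with $i$, and getting the guard lengths in the $QRec$/$LENGTH$ wrappers exactly right (so that the base case fires precisely when the source code is exhausted and the endmarker has been copied) requires care. A clean way to sidestep most of this is to always copy the \emph{current first} source pair onto the \emph{current first} target slot and then rotate both the consumed source pair and the filled target slot to the far end via $REMOVE$, so that the offsets are constant across stages and only a single length guard (``stop when the endmarker $\hat2$ has just been copied'') is needed; I would adopt that normalization in the actual write-up.
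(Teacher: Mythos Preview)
Your proposal is correct, and your ``alternative'' route is essentially what the paper does: it first invokes (a slight modification of) Steps 1)--6) of Section~\ref{step-initialization} to interleave the target block with the source code, obtaining pairs of the form $\hat{3}\hat{a}_i$, then applies a $CNOT$-based duplicator $DUP$ at each aligned pair to turn $\hat{3}\hat{a}_i$ into $\hat{a}_i\hat{a}_i$, and finally un-interleaves with two $REMOVE$-driven recursions followed by $REVERSE$. The only cosmetic difference is that the paper keeps the $\hat{3}$ markers produced by the interleaving and builds $DUP$ to copy onto $\hat{3}$ rather than $\hat{0}$, so its $DUP$ is $h_2^{-1}\circ(CNOT\circ NOT)\circ(REP_2\circ CNOT\circ REMOVE_2)\circ h_2$ rather than a bare pair of $CNOT$s; your first approach, which copies directly onto the $\hat{0}$ slots without prior interleaving, is a legitimate variant that trades the reuse of Section~\ref{step-initialization} for slightly simpler per-stage gates at the cost of the index bookkeeping you correctly flagged.
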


The encoding $\widetilde{0^k}$ of $0^k$ is needed to distinguish $0^k$ from any part of $\qubit{\phi}$ because $k$ is not a fixed constant.
When $\qubit{\phi}$ has the form $\qubit{x}\qubit{\psi}$, the quantum function $COPY_2$ works as $COPY_2(\qubit{\widetilde{0^{k}}} \otimes \qubit{\tilde{x}}\qubit{\psi}) = \qubit{\tilde{x}}\otimes  \qubit{\tilde{x}}\qubit{\psi}$.

\begin{proofof}{Lemma \ref{copy-algorithm}}
We wish to construct the desired quantum function $COPY_2$ as follows. To make our construction process readable, we use an illustrative example of $\qubit{\widetilde{0^k}}\qubit{\phi} = \qubit{\widetilde{0^2}}\qubit{\widetilde{a_1a_2}}$ ($=  \qubit{\hat{0}\hat{0}\hat{2}} \qubit{\hat{a}_1\hat{a}_2\hat{2}}$)
to show  how each constructed quantum function works.

1) Steps 1)--6) of Section \ref{step-initialization} transform  $\qubit{\hat{0}\hat{0}\hat{1}} \qubit{\hat{a}_1\hat{a}_2}$ to $\qubit{\hat{2}\hat{a}_1\hat{3}\hat{a}_2}\qubit{\hat{3}}$.
By a slight modification of these steps, it is possible to transform $\qubit{\hat{0}\hat{0}\hat{2}} \qubit{\hat{a}_1\hat{a}_2\hat{2}}$ to $\qubit{\hat{3}\hat{a}_1\hat{3}\hat{a}_2\hat{2}}\qubit{\hat{2}}$.
We denote by $f_1$ a quantum function that realizes this transformation.

2) By copying $\hat{a}_i$ in $\hat{3}\hat{a}_i$ onto $\hat{3}$ for each $i\in\{1,2\}$, we change $\hat{3}\hat{a}_i$ to $\hat{a}_i\hat{a}_i$ and  then obtain $\qubit{\hat{a}_1\hat{a}_1 \hat{a}_2\hat{a}_2\hat{2}} \qubit{\hat{2}}$. This step can be formally made in the following way.
Firstly, we define a quantum function $h_2$ that satisfies  $h_2(\qubit{b_1b_2}\qubit{b_3b_4}\otimes \qubit{\phi}) = \qubit{b_4b_2b_1b_3}\otimes \qubit{\phi}$ for any $\phi\in\HH_{\infty}$.
Such a quantum function actually exists by Lemma \ref{bijection-function}.
Secondly, we set $DUP$ to be $h_2^{-1}\circ (SWAP \circ NOT \circ SWAP \circ CNOT) \circ h_2 \circ NOT$.
It then follows that $DUP(\qubit{\hat{3}}\qubit{\hat{a}}\otimes \qubit{\phi}) = \qubit{\hat{a}}\qubit{\hat{a}}\otimes \qubit{\phi}$ for any bit $a\in\{0,1\}$ and any quantum state $\qubit{\phi}\in\HH_{\infty}$. With this $DUP$, we further define $f_2$ by the 4-qubit quantum recursion as
\[
f_2(\qubit{\phi}) = \left\{
\begin{array}{ll}
\qubit{\phi} & \mbox{if $\ell(\qubit{\phi})< 4$,} \\
DUP ( \sum_{a\in\{0,1\}} \qubit{\hat{3}\hat{a}}\otimes f_2(\measure{\hat{3}\hat{a}}{\phi}) + \sum_{y\in B'_4} \qubit{y}\!\measure{y}{\phi} & \mbox{otherwise,}
\end{array} \right.
\]
where $B'_4 = \{0,1\}^4-\{\hat{3}\hat{0},\hat{3}\hat{1}\}$.

3) Next, we transform $\qubit{\hat{a}_1\hat{a}_1 \hat{a}_2\hat{a}_2\hat{2}}\qubit{\hat{2}}$ to $\qubit{\hat{a}_1\hat{a}_2\hat{2}} \qubit{\hat{2}\hat{a}_2\hat{a}_1}$.
This transformation can be done by $f_3$ defined as
\[
f_3(\qubit{\phi}) = \left\{
\begin{array}{ll}
\qubit{\phi} & \mbox{if $\ell(\qubit{\phi}) < 4$,} \\
REMOVE_2( \sum_{y\in\{0,1\}^4-\{\hat{2}\hat{2}\}} \qubit{y}\otimes f_3( \measure{y}{\phi}) + \qubit{\hat{2}\hat{2}}\!\measure{\hat{2}\hat{2}}{\phi}) & \mbox{otherwise.}
\end{array} \right.
\]

4) We then change $\qubit{\hat{a}_1\hat{a}_2\hat{2}} \qubit{\hat{2}\hat{a}_2\hat{a}_1}$ to $\qubit{\hat{2}\hat{a}_2\hat{a}_1} \qubit{\hat{2}\hat{a}_2\hat{a}_1}$ by removing each two qubits in the first part to the end. This process is precisely realized by $f_4$ defined as
\[
f_4(\qubit{\phi}) = \left\{
\begin{array}{ll}
\qubit{\phi} & \mbox{if $\ell(\qubit{\phi})< 4$,} \\
REMOVE_2 ( \sum_{y\in\{0,1\}^2-\{\hat{2}\}} ( \qubit{y}\otimes f_5(\measure{y}{\phi})) + \qubit{\hat{2}}\!\measure{\hat{2}}{\phi} ) & \mbox{otherwise.}
\end{array} \right.
\]

5) Finally, we reverse the whole qustring to obtain $\qubit{\hat{a}_1\hat{a}_2\hat{2}} \qubit{\hat{a}_1\hat{a}_2\hat{2}}$ by applying $REVERSE$.

This completes the proof of the lemma.
\end{proofof}

\section{Future Challenges}\label{sec:applications}

In Definition \ref{sec:definition}, we have defined $\squareqp$-functions on  $\HH_{\infty}$ and we have given in Theorem \ref{theorem:character} a new characterization of $\fbqp$-functions in terms of these $\squareqp$-functions. To point out the directions of future research, we wish to raise a challenging open question in Section \ref{sec:open-question} and to present in  Sections \ref{sec:descriptional-complexity}--\ref{sec:type-2-functional} three  possible implications of our schematic definition to the subjects of descriptional complexity, firt-order theories, and higher-type functionals. In Section \ref{sec:programming-language}, we will remark a practical application to the designing of quantum programming languages.

\subsection{Seeking a More Reasonable Schematic Definition}\label{sec:open-question}

Our schematic definition (Definition \ref{def:initial}) is composed of the initial quantum functions, which are derived from  natural, simple quantum gates, and the construction rules, including the multi-qubit quantum recursion, which significantly enriches the scope of constructed quantum functions. The choice of initial functions and construction rules that we have used in this paper directly affects the richness of $\squareqp$-functions. Although our $\squareqp$ is sufficient to characterize $\fbqp$, if we seek for enriching the $\squareqp$-functions, one way is to supplement additional initial quantum functions.
As a concrete example, let us consider the quantum Fourier transform (QFT), which plays an important role in, e.g., Shor's factoring quantum algorithm \cite{Sho97}.
We have demonstrated in Lemma \ref{Fourier-transform} how to implement a restricted form of QFT working on a fixed number of qubits, and thus it belongs to $\hatsquareqp$. Nonetheless, a more general form of QFT, acting on an ``arbitrary'' number of qubits, may not be realized \emph{precisely} by $\hatsquareqp$-functions although it can be approximated to any desired accuracy by the $\hatsquareqp$-functions. To remedy the exclusion of QFT  from our function class $\squareqp$, for instance, we can expand the current $\hatsquareqp$ by including as an initial quantum function the quantum function defined as
\[
CROT(\qubit{\phi}\qubit{0^j}) = \qubit{0}\! \measure{0}{\phi}\qubit{0^j} + \omega_j \qubit{1}\! \measure{1}{\phi}\qubit{0^j} \text{ (controlled rotation)},
\]
where $\omega_j = e^{2\pi i/2^j}$, with an extra term $0^j$. It is not difficult to construct QFT from quantum functions in this expanded $\hatsquareqp$ obtained by adding $CROT$.
As this example shows, it remains important to seek for a simpler, more reasonable schematic definition of quantum functions, which are capable of precisely characterizing both $\bqp$ and $\fbqp$ and also simplifying the proof of Theorem \ref{theorem:character}.

From the minimalist's viewpoint, on the contrary, we may be able to eliminate certain schemata or replace them by simpler ones but still ensure the characterization result of $\bqp$ and $\fbqp$ in terms of $\squareqp$-functions. As a concrete example, we may ask whether our multi-qubit quantum recursion can be replaced by 1-qubit quantum recursion at the cost of adding extra initial quantum functions.

\subsection{Introduction of Descriptional Complexity and First-Order Theories}\label{sec:descriptional-complexity}

As noted in Section \ref{sec:definition}, our schematic definition of $\squareqp$-functions provides us with a natural means of assigning the \emph{descriptional complexity}---a new complexity measure---to each of those quantum functions in $\squareqp$. This complexity measure has been used to prove, for instance, Lemma \ref{hatsquare-property}.  As a consequence of our main theorem, this complexity measure concept also transfers to languages in $\bqp$ and functions in $\fbqp$,  and thus it  naturally helps us introduce the notion of the descriptional complexities of such languages and functions.

It is further possible for us to extend this complexity measure to ``arbitrary''  languages and functions on $\{0,1\}^*$, which are not necessarily limited to $\fbqp$ and $\bqp$, and to discuss their ``relative''  complexity to $\squareqp$.
More formally, given a function $f$ on $\{0,1\}^*$, the \emph{$\squareqp$-descriptional complexity of $f$ at length $n$} is  the minimal number of times we use initial quantum functions and construction rules to build a $\squareqp$-function $g$ for which $\ell(\qubit{\phi^p_g(x)})=|f(x)|$ and $|\measure{f(x)}{\phi^{p}_{g}(x)}|^2\geq 2/3$ hold for a certain polynomial $p$ with $|f(x)|\leq p(|x|)$ for all strings $x$ of length exactly $n$. We write $dc(f)[n]$ to denote the $\squareqp$-descriptional complexity of $f$ at length $n$. Obviously, every $\squareqp$-function has \emph{constant}  $\squareqp$-descriptional complexity at every length.
In a similar spirit but based on quantum finite automata, Villagra and Yamakami \cite{VY15} discussed the quantum state complexity restricted to inputs of length exactly $n$ (as well as length at most $n$). It has turned out that such complexity measure is quire useful. Refer to \cite{VY15} for the detailed definitions.
Our new complexity measure $dc(f)[n]$ is also expected to be a useful tool in classifying languages and functions in descriptional power in a way that is quite different from what QTMs and quantum circuits do.


In a much wider perspective, our schematic definition of polynomial-time  quantum computability may lead to the future development of an appropriate form of first-order theories over quantum states in Hilbert spaces or \emph{first-order quantum theories}, for short.
In the literature, first-order theories and their natural subtheories have become a fruitful  research subject in mathematical logic and recursion theory and they have also found numerous applications in other fields as well.
A weak form of their subtheories has been studied in quantum complexity theory. For instance, using bounded quantifiers over quantum states in Hilbert spaces,  quantum analogues of $\np$ and the Meyer-Stockmeyer polynomial(-time) hierarchy have been discussed in \cite{Yam02}. Unfortunately, we are still far away from obtaining well-accepted first-order theories and useful subtheories for quantum computing.

\subsection{Extension to Type-2 Quantum Functionals}\label{sec:type-2-functional}

Conventionally, functions mapping $\Sigma^*$ to $\Sigma^*$ are  categorized as \emph{type-1 functionals}, whereas \emph{type-2 functionals} are functions taking inputs from $\Sigma^*$ together with type-1 functionals. In computational
complexity theory, such
\emph{type-2 functionals} have been extensively discussed in, e.g.,   \cite{Con73,CK89,Meh76,Tow90,Yam95}.

In analogy to the classical case, we call $\squareqp$-functions on $\HH_{\infty}$  \emph{type-1 quantum functionals}.
To introduce \emph{type-2 quantum functionals}, we start with an
arbitrary quantum function $O$ mapping $\HH_{\infty}$ to $\HH_{\infty}$, which is treated as a \emph{function oracle} (an \emph{oracle function} or simply an \emph{oracle}) in the following formulation.
From such an oracle $O$, we define a new linear operator $\tilde{O}$ as $\tilde{O}(\qubit{\tilde{x}})
= \sum_{s:|s|=|x|}\alpha_s\qubit{\tilde{s}}$ if $O(\qubit{x})$ is of the form
$\sum_{s:|s|=|x|} \alpha_s\qubit{s}$ for any string $x\in\{0,1\}^*$, where $\tilde{s}$ is a code of $s$, defined in Section \ref{sec:main-theorem}. Note that $\tilde{O}(\qubit{\tilde{x}}) = \widetilde{O(\qubit{x})}$ and $\ell(\tilde{O}(\qubit{\tilde{x}})) = 2\ell(O(\qubit{x}))+2$.

Firstly, we expand our initial functions
and construction rules given in Definition \ref{def:initial} by replacing any quantum function, say,  $f(\qubit{\phi})$ in each scheme of the definition with $f(\qubit{\phi},O)$.
Secondly, we introduce another initial quantum
function, called the {\em query function} $QUERY$.
For any two qustrings $\qubit{\phi}$ and $\qubit{\psi}$ of
length $n$, let $\qubit{\phi}\oplus\qubit{\psi} =
\sum_{s:|s|=n}\sum_{t:|t|=n}\measure{s}{\phi}\measure{t}{\psi}\qubit{s\oplus
t}$, where $s\oplus t$ means the \emph{bitwise XOR} of $s$ and $t$.
As a special case, it follows that
$\tilde{O}(\qubit{\tilde{x}})\oplus \tilde{O}(\qubit{\tilde{x}}) =
\qubit{0^{2|x|+2}}$ for any $x\in\{0,1\}^*$.  The query function is then  defined as
\begin{eqnarray*}
\lefteqn{QUERY(\qubit{\phi},O)} \hs{5} && \\
&=& \sum_{n\in[t]}
\sum_{x\in\Sigma^n}\sum_{s\in\Sigma^{n}}
\left(
\qubit{\tilde{x}}
\otimes (\tilde{O}(\qubit{\tilde{x}})
\oplus \qubit{\tilde{s}}) \otimes \measure{\tilde{x}\tilde{s}}{\phi} +
\sum_{y\in\Sigma^{4n+4}\wedge y\neq \tilde{x}\tilde{s}} \qubit{y}\!\measure{y}{\phi}
\right)
\end{eqnarray*}
for all $\qubit{\phi}\in\HH_{\infty}$, where $t=\floors{(\ell(\qubit{\phi})-4)/4}$. In particular,  $QUERY(\qubit{\tilde{x}}\qubit{\tilde{s}}\qubit{\phi})$ equals $\qubit{\tilde{x}}\otimes ( \tilde{O}(\qubit{\tilde{x}}\oplus \qubit{\tilde{s}}) \otimes \qubit{\phi}$.
It also follows that $QUERY\circ QUERY(\qubit{\phi},O) =I(\qubit{\phi})$ since
$\tilde{O}(\qubit{\tilde{x}})\oplus \tilde{O}(\qubit{\tilde{x}}) =
\qubit{0^{2|x|+2}}$.

Notice that if $O$ is in $\squareqp$ then the function
$Q_{O}(\qubit{\phi}) =_{def} QUERY(\qubit{\phi},O)$ also belongs to
$\squareqp$. It is also possible to show similar results discussed in the previous sections.
These basic results can open a door to a rich field of higher-type quantum computability and we expect fruitful results to be discovered in this new field.

\subsection{Application to Quantum Programming Languages}\label{sec:programming-language}

A practical application of our schematic definition can be found in the area of \emph{quantum programming languages}.
Since the early days of quantum computing research, a significant effort has been made physicists, computer scientists, and computer engineers to draw a pragmatic road map to a real-life quantum computer.

Toward the realization of such quantum computers, most research has focused on their hardware construction. For the building of ``multi-purpose'' quantum computers, however, it is more desirable to make them  ``programmable'' in such a way that a run of an appropriate  ``quantum program'' freely alters computing processes for different target problems without remodeling their hardware each time.
A quantum program here refers to
a finite series of instructions on how to operate the quantum computer step by step.
To write such a quantum program, nevertheless, we need to develop well-structured programming languages for the quantum computer (dubbed as \emph{quantum programming languages}).
Various quantum programming languages have been discussed over two decades in due course of developing real-life quantum computers. Refer to surveys, e.g., \cite{Gay06} for a necessary background.

Our schematic definition provides a description of how to define a given $\squareqp$-function. This description can be viewed as a set of instructions, each of which instructs how to apply each scheme to construct the desired quantum function and it thus resembles a program that dictates  how to construct the quantum function.
Therefore, our schematic description of a construction process of quantum functions may help us design appropriate quantum programming languages in the future.

\let\oldbibliography\thebibliography
\renewcommand{\thebibliography}[1]{%
  \oldbibliography{#1}%
  \setlength{\itemsep}{-2pt}%
}
\bibliographystyle{plain}

\begin{thebibliography}{Gur91}
{\small

\bibitem{ADH97}
L. M. Adleman, J. DeMarrais, and M. A. Huang, Quantum computability,
{\em SIAM J. Comput.}  {\bf 26} (1997) 1524--1540.

\bibitem{BBC95}
A. Barenco, C. H. Bennett, R. Cleve, D. DiVicenzo, N. Margolus, P.
Shor, T. Sleator, J. Smolin, and H. Weinfurter, Elementary gates for
quantum computation, {\em Phys. Rev.}, A, {\bf 52} (1995) 3457--3467.

\bibitem{Ben80}
P. Benioff, The computer as a physical system: a microscopic quantum mechanical Hamiltonian model of computers represented by Turing machines. J. Statist. Phys. {\bf 22} (1980) 563--591.

\bibitem{BBBV97}
C. H. Bennett, E. Bernstein, G. Brassard, and U. Vazirani, Strengths
and weaknesses of quantum computing, {\em SIAM J. Comput.} {\bf 26} (1997)
1510--1523.

\bibitem{BV97}
E. Bernstein and U. Vazirani, Quantum complexity theory, {\em SIAM
J. Comput.} {\bf 26} (1997) 1411--1473.

\bibitem{BMP+99}
P. O. Boykin, T. Mor, M. Pulver, V. Roychowdhury, and F. Vatan, On universal and fault-tolerant quantum computing. arXiv:quant-ph/9906054, 1999.

\bibitem{Cob64}
A. Cobham, The intrinsic computational difficulty of functions, in the
{\em Proceedings of the 1964 Congress for Logic, Mathematics, and
Philosophy of Science}, pp. 24--30, North-Holland, 1964.

\bibitem{Con73}
R. L. Constable, Type two computational complexity, in the {\em Proceedings
of the 5th ACM Symposium on Theory of Computing} (STOC'73), pp. 108--121, 1973.

\bibitem{CK89}
S. Cook and B. M. Kapron, Characterizations of the basic feasible
functionals of finite type, in the {\em Proceedings of the 30th IEEE
Conference on Foundations of Computer Science} (STOC'89), pp. 154--159, 1989.

\bibitem{Dav65}
M. Davis, (ed.) {\em The Unidecidable. Basic Papers on Undecidable Propositions, Unsolvable Problems, and Computable Functions}, Raven Press, Hewlett, New York, 1965.

\bibitem{Deu85}
D. Deutsch, Quantum theory, the Church-Turing principle, and the
universal quantum computer, {\em Proceedings Royal Society London},
Ser. A, {\bf 400} (1985) 97--117.

\bibitem{Deu89}
D. Deutsch, Quantum computational networks, {\em Proceedings Royal
Society London}, Ser. A, {\bf 425} (1989) 73--90.

\bibitem{Gay06}
S. J. Gay. Quantum programming languages: survey and bibliography. Math. Struct. in Comp. Science {\bf 16} (2006) 581--600.

\bibitem{Gro96}
L.~K. Grover, {A fast quantum mechanical algorithm for
  database search}, in the {\it {Proceedings of the 28th Annual ACM Symposium on the Theory of Computing}\/} (STOC'96), pp. 212--219, 1996.

\bibitem{Gro97}
L. Grover. Quantum mechanics in searching for a needle in a haystack. \emph{Physical Review Letters} 79:325 (1997).

\bibitem{HU79}
J.~Hopcroft and J.~Ullman, {\it {An Introduction
  to Automata Theory, Languages and Computation}\/}, Addison-Wesley, Reading MA, 1979.

\bibitem{Kit97}
A. Kitaev. Quantum computations: algorithms and error correction. Russian Math. Surveys {\bf 52} (1997) 1191--1249.

\bibitem{KSV02}
A. Y. Kitaev, A. H. Shen, and M. N. Vyalyi. {\em Classical and Quantum Computation} (Graduate Studies in Mathematics), American Mathematical Society, 2002.

\bibitem{Kle36}
S. C. Kleene, General recursive functions of natural numbers, Math. Ann. {\bf 112} (1936) 727--742.

\bibitem{Kle43}
S. C. Kleene, Recursive predicates and quantifiers, Trans. A. M. S. {\bf 53} (1943) 41--73.

\bibitem{Kle59}
S. C. Kleene, Recursive functionals and quantifiers of finite types I,
{\em Trans. Amer. Math. Soc.} {\bf 91} (1959) 1--52.

\bibitem{Kle63}
S. C. Kleene, Recursive functionals and quantifiers of finite types II,
{\em Trans. Amer. Math. Soc.} {\bf 108} (1963) 106--142.

\bibitem{KF82}
K. Ko and H. Friedman, Computational complexity of real functions,
{\em Theor. Comput. Sci.} {\bf 20} (1982) 323--352.

\bibitem{Meh76}
K. Mehlhorn, Polynomial and abstract subrecursive classes,
{\em J. Comput. System Sci.} {\bf 12} (1976) 147--178.

\bibitem{NC00}
M.~A. Nielsen and I.~L. Chuang, {\em Quantum Computation and Quantum Information}, Cambridge University Press, 2000.

\bibitem{NO02}
H. Nishimura and M. Ozawa, Computational complexity of uniform quantum
circuit families and quantum Turing machines. Theor. Comput. Sci. {\bf 276}  (2002) 147--181.

\bibitem{ON00}
M. Ozawa and H. Nishimura, Local transition functions of quantum
Turing machines, RAIRO - Theoretical Informatics and Applications {\bf 276}  (2000)  379--402.

\bibitem{Pea91}
G. Peano, Sul concetto di numero, Rivista di Matematica {\bf 1} (1891) 87--102, 256--267.

\bibitem{Soa96}
R. I. Soare, Computability and recursion. The Bulletin of Symbolic Logic, vol. 2, No. 3 (1996), pp.284--321.

\bibitem{Sho97}
P.~W. Shor, {Polynomial-time algorithms for prime
factorization and discrete logarithms on a quantum computer}, {\em SIAM J.   Comput.} {\bf 26} (1997) 1484--1509.

\bibitem{Tow90}
M. Townsend, Complexity for type-2 relations, {\em Notre Dame J.
Formal Logic} {\bf 31} (1990) 241--262.

\bibitem{Tur36}
A. M. Turing, On computable numbers with an application to the Entscheidungsproblem. \emph{Proc. London Math. Soc.} ver. 2, {\bf 42} (1936) 230--265. Errutum ibid {\bf 43} (1937) 544--546.

\bibitem{VY15}
M. Villagra and T. Yamakami. Quantum state complexity of formal languages, Proceedings of the 17th International Workshop on Descriptional Complexity of Formal Systems (DFCS 2015), Springer, Lecture Notes in Computer Science, vol. 9118, pp. 280--291, 2015.

\bibitem{Yam92}
T. Yamakami. Structural properties for feasiblely computable classes of type two. Mathematical Systems Theory {\bf 25} (1992) 177--201.

\bibitem{Yam95}
T. Yamakami, Feasible computability and resource bounded topology,
{\em Inf. Comput.} {\bf 116} (1995) 214--230.

\bibitem{Yam99}
T. Yamakami, A foundation of programming a multi-tape quantum Turing
machine, in the {\em Proc. 24th Mathematical Foundations of Computer
Science} (MFCS'99), Lecture Notes in Computer Science, vol. 1672, pp. 430--441, 1999. See also arXiv:quant-ph/9906084.

\bibitem{Yam02}
T. Yamakami. Quantum NP and a quantum hierarchy. In the Proc. of the 2nd IFIP International Conference on Theoretical Computer Science (TCS 2002), Kluwer Academic Press (under the name “Foundations of Information Technology in the Era of Network and Mobile Computing”), the series IFIP—The International Federation for Information Processing, vol. 96 (Track 1), pp. 323--336, 2002.

\bibitem{Yam03}
T. Yamakami, Analysis of quantum functions,
{\em Int. J. Found. Comput. Sci.} {\bf 14} (2003) 815--852.
A preliminary version appeared in the {\em Proc. 19th
International Conference on Foundations of Software Technology and
Theoretical Computer Science}, Lecture Notes in Computer Science,
Vol. 1738, pp. 407--419, 1999.

\bibitem{Yam17}
T. Yamakami, A recursive definition of quantum polynomial time computability (extended abstract), in the \emph{Proc. of the 9th Workshop on Non-Classical Models of Automata and Applications} (NCMA 2017), \"{O}sterreichische Computer Gesellschaft 2017, the Austrian Computer Society, pp. 243--258, 2017.

\bibitem{Yao93}
A. C. Yao, Quantum circuit complexity, in the {\em Proc. of the 34th
Annual IEEE Symposium on Foundations of Computer Science} (FOCS'93), pp. 80--91, 1993.

}
\end{thebibliography}

\end{document}